\definecolor{dgreen}{rgb}{0,0.5,0}
\newtheorem{theorem}{Theorem}
\newtheorem{corollary}{Corollary}
\newtheorem{lemma}{Lemma}
\newtheorem{proposition}{Proposition}
\newenvironment{proof}[1][Proof]{\noindent \textbf{#1.} }{\  \rule{0.5em}{0.5em}}
\renewcommand{\baselinestretch}{1.5}
\newcommand*\erasestuff[1]{}
\let\pdfoutput=\undefined\fi
\chardef\@x10\chardef\@xv60
\def\tcitime{
\def\@time{%
  \@minute\time\@hour\@minute\divide\@hour\@xv
  \ifnum\@hour<\@x 0\fi\the\@hour:%
  \multiply\@hour\@xv\advance\@minute-\@hour
  \ifnum\@minute<\@x 0\fi\the\@minute
  }}%
\def\x@hyperref#1#2#3{%
   \catcode`\~ = 12
   \catcode`\$ = 12
   \catcode`\_ = 12
   \catcode`\# = 12
   \catcode`\& = 12
   \catcode`\% = 12
   \y@hyperref{#1}{#2}{#3}%
}
\def\y@hyperref#1#2#3#4{%
   #2\ref{#4}#3
   \catcode`\~ = 13
   \catcode`\$ = 3
   \catcode`\_ = 8
   \catcode`\# = 6
   \catcode`\& = 4
   \catcode`\% = 14
}
\def\QCTOpt[#1]#2{%
  \def\QCTOptB{#1}
  \def\QCTOptA{#2}
}
\def\QCTNOpt#1{%
  \def\QCTOptA{#1}
  \let\QCTOptB\empty
}
\def\Qct{%
  \@ifnextchar[{%
    \QCTOpt}{\QCTNOpt}
}
\def\QCBOpt[#1]#2{%
  \def\QCBOptB{#1}%
  \def\QCBOptA{#2}%
}
\def\QCBNOpt#1{%
  \def\QCBOptA{#1}%
  \let\QCBOptB\empty
}
\def\Qcb{%
  \@ifnextchar[{%
    \QCBOpt}{\QCBNOpt}%
}
\def\PrepCapArgs{%
  \ifx\QCBOptA\empty
    \ifx\QCTOptA\empty
      {}%
    \else
      \ifx\QCTOptB\empty
        {\QCTOptA}%
      \else
        [\QCTOptB]{\QCTOptA}%
      \fi
    \fi
  \else
    \ifx\QCBOptA\empty
      {}%
    \else
      \ifx\QCBOptB\empty
        {\QCBOptA}%
      \else
        [\QCBOptB]{\QCBOptA}%
      \fi
    \fi
  \fi
}
\def\GRAPHICSPS#1{%
 \ifcase\GRAPHICSTYPE
   \special{ps: #1}%
 \or
   \special{language "PS", include "#1"}%
 \fi
}%
\def\graffile#1#2#3#4{%
    \bgroup
	   \@inlabelfalse
       \leavevmode
       \@ifundefined{bbl@deactivate}{\def~{\string~}}{\activesoff}%
        \raise -#4 \BOXTHEFRAME{%
           \hbox to #2{\raise #3\hbox to #2{\null #1\hfil}}}%
    \egroup
}%
\def\draftbox#1#2#3#4{%
 \leavevmode\raise -#4 \hbox{%
  \frame{\rlap{\protect\tiny #1}\hbox to #2%
   {\vrule height#3 width\z@ depth\z@\hfil}%
  }%
 }%
}%
\let\nographics=\@msidraft
\newif\ifwasdraft
\def\GRAPHIC#1#2#3#4#5{%
   \ifnum\@msidraft=\@ne\draftbox{#2}{#3}{#4}{#5}%
   \else\graffile{#1}{#3}{#4}{#5}%
   \fi
}
\def\addtoLaTeXparams#1{%
    \edef\LaTeXparams{\LaTeXparams #1}}%
\newif\ifBoxFrame \BoxFramefalse
\newif\ifOverFrame \OverFramefalse
\newif\ifUnderFrame \UnderFramefalse
\def\BOXTHEFRAME#1{%
   \hbox{%
      \ifBoxFrame
         \frame{#1}%
      \else
         {#1}%
      \fi
   }%
}
\def\doFRAMEparams#1{\BoxFramefalse\OverFramefalse\UnderFramefalse\readFRAMEparams#1\end}%
\def\readFRAMEparams#1{%
 \ifx#1\end%
  \let\next=\relax
  \else
  \ifx#1i\dispkind=\z@\fi
  \ifx#1d\dispkind=\@ne\fi
  \ifx#1f\dispkind=\tw@\fi
  \ifx#1t\addtoLaTeXparams{t}\fi
  \ifx#1b\addtoLaTeXparams{b}\fi
  \ifx#1p\addtoLaTeXparams{p}\fi
  \ifx#1h\addtoLaTeXparams{h}\fi
  \ifx#1X\BoxFrametrue\fi
  \ifx#1O\OverFrametrue\fi
  \ifx#1U\UnderFrametrue\fi
  \ifx#1w
    \ifnum\@msidraft=1\wasdrafttrue\else\wasdraftfalse\fi
    \@msidraft=\@ne
  \fi
  \let\next=\readFRAMEparams
  \fi
 \next
 }%
\def\IFRAME#1#2#3#4#5#6{%
      \bgroup
      \let\QCTOptA\empty
      \let\QCTOptB\empty
      \let\QCBOptA\empty
      \let\QCBOptB\empty
      #6%
      \parindent=0pt
      \leftskip=0pt
      \rightskip=0pt
      \setbox0=\hbox{\QCBOptA}%
      \@tempdima=#1\relax
      \ifOverFrame
          \typeout{This is not implemented yet}%
          \show\HELP
      \else
         \ifdim\wd0>\@tempdima
            \advance\@tempdima by \@tempdima
            \ifdim\wd0 >\@tempdima
               \setbox1 =\vbox{%
                  \unskip\hbox to \@tempdima{\hfill\GRAPHIC{#5}{#4}{#1}{#2}{#3}\hfill}%
                  \unskip\hbox to \@tempdima{\parbox[b]{\@tempdima}{\QCBOptA}}%
               }%
               \wd1=\@tempdima
            \else
               \textwidth=\wd0
               \setbox1 =\vbox{%
                 \noindent\hbox to \wd0{\hfill\GRAPHIC{#5}{#4}{#1}{#2}{#3}\hfill}\\%
                 \noindent\hbox{\QCBOptA}%
               }%
               \wd1=\wd0
            \fi
         \else
            \ifdim\wd0>0pt
              \hsize=\@tempdima
              \setbox1=\vbox{%
                \unskip\GRAPHIC{#5}{#4}{#1}{#2}{0pt}%
                \break
                \unskip\hbox to \@tempdima{\hfill \QCBOptA\hfill}%
              }%
              \wd1=\@tempdima
           \else
              \hsize=\@tempdima
              \setbox1=\vbox{%
                \unskip\GRAPHIC{#5}{#4}{#1}{#2}{0pt}%
              }%
              \wd1=\@tempdima
           \fi
         \fi
         \@tempdimb=\ht1
         \advance\@tempdimb by -#2
         \advance\@tempdimb by #3
         \leavevmode
         \raise -\@tempdimb \hbox{\box1}%
      \fi
      \egroup%
}%
\def\DFRAME#1#2#3#4#5{%
  \vspace\topsep
  \hfil\break
  \bgroup
     \leftskip\@flushglue
	 \rightskip\@flushglue
	 \parindent\z@
	 \parfillskip\z@skip
     \let\QCTOptA\empty
     \let\QCTOptB\empty
     \let\QCBOptA\empty
     \let\QCBOptB\empty
	 \vbox\bgroup
        \ifOverFrame 
           #5\QCTOptA\par
        \fi
        \GRAPHIC{#4}{#3}{#1}{#2}{\z@}%
        \ifUnderFrame 
           \break#5\QCBOptA
        \fi
	 \egroup
  \egroup
  \vspace\topsep
  \break
}%
\def\FFRAME#1#2#3#4#5#6#7{%
  \@ifundefined{floatstyle}
    {
     \begin{figure}[#1]%
    }
    {
	 \ifx#1h
      \begin{figure}[H]%
	 \else
      \begin{figure}[#1]%
	 \fi
	}
  \let\QCTOptA\empty
  \let\QCTOptB\empty
  \let\QCBOptA\empty
  \let\QCBOptB\empty
  \ifOverFrame
    #4
    \ifx\QCTOptA\empty
    \else
      \ifx\QCTOptB\empty
        \caption{\QCTOptA}%
      \else
        \caption[\QCTOptB]{\QCTOptA}%
      \fi
    \fi
    \ifUnderFrame\else
      \label{#5}%
    \fi
  \else
    \UnderFrametrue%
  \fi
  \begin{center}\GRAPHIC{#7}{#6}{#2}{#3}{\z@}\end{center}%
  \ifUnderFrame
    #4
    \ifx\QCBOptA\empty
      \caption{}%
    \else
      \ifx\QCBOptB\empty
        \caption{\QCBOptA}%
      \else
        \caption[\QCBOptB]{\QCBOptA}%
      \fi
    \fi
    \label{#5}%
  \fi
  \end{figure}%
 }%
\def\makeactives{
  \catcode`\"=\active
  \catcode`\;=\active
  \catcode`\:=\active
  \catcode`\'=\active
  \catcode`\~=\active
}
   \gdef\activesoff{%
      \def"{\string"}%
      \def;{\string;}%
      \def:{\string:}%
      \def'{\string'}%
      \def~{\string~}%
    }
\def\FRAME#1#2#3#4#5#6#7#8{%
 \bgroup
 \ifnum\@msidraft=\@ne
   \wasdrafttrue
 \else
   \wasdraftfalse%
 \fi
 \def\LaTeXparams{}%
 \dispkind=\z@
 \def\LaTeXparams{}%
 \doFRAMEparams{#1}%
 \ifnum\dispkind=\z@\IFRAME{#2}{#3}{#4}{#7}{#8}{#5}\else
  \ifnum\dispkind=\@ne\DFRAME{#2}{#3}{#7}{#8}{#5}\else
   \ifnum\dispkind=\tw@
    \edef\@tempa{\noexpand\FFRAME{\LaTeXparams}}%
    \@tempa{#2}{#3}{#5}{#6}{#7}{#8}%
    \fi
   \fi
  \fi
  \ifwasdraft\@msidraft=1\else\@msidraft=0\fi{}%
  \egroup
 }%
\def\TEXUX#1{"texux"}
\def\limfunc#1{\mathop{\rm #1}}%
\long\def\QQQ#1#2{%
     \long\expandafter\def\csname#1\endcsname{#2}}%
\long\def\QQA#1#2{}%
\def\QTR#1#2{{\csname#1\endcsname {#2}}}%
\def\EXPAND#1[#2]#3{}%
\def\NOEXPAND#1[#2]#3{}%
\def\LaTeXparent#1{}%
\def\ChildStyles#1{}%
\def\ChildDefaults#1{}%
\def\QTagDef#1#2#3{}%
  \providecommand{\UNICODE}[2][]{\protect\rule{.1in}{.1in}}
  \providecommand{\U}[1]{\protect\rule{.1in}{.1in}}
\def\QQfnmark#1{\footnotemark}
 \def\abstract{%
  \if@twocolumn
   \section*{Abstract (Not appropriate in this style!)}%
   \else \small 
   \begin{center}{\bf Abstract\vspace{-.5em}\vspace{\z@}}\end{center}%
   \quotation 
   \fi
  }%
   \def\registered{\relax\ifmmode{}\r@gistered
                    \else$\m@th\r@gistered$\fi}%
 \def\r@gistered{^{\ooalign
  {\hfil\raise.07ex\hbox{$\scriptstyle\rm\text{R}$}\hfil\crcr
  \mathhexbox20D}}}}{}%
\newdimen\theight
\def\newfmtname{LaTeX2e}
  \DeclareOldFontCommand{\rm}{\normalfont\rmfamily}{\mathrm}
  \DeclareOldFontCommand{\sf}{\normalfont\sffamily}{\mathsf}
  \DeclareOldFontCommand{\tt}{\normalfont\ttfamily}{\mathtt}
  \DeclareOldFontCommand{\bf}{\normalfont\bfseries}{\mathbf}
  \DeclareOldFontCommand{\it}{\normalfont\itshape}{\mathit}
  \DeclareOldFontCommand{\sl}{\normalfont\slshape}{\@nomath\sl}
  \DeclareOldFontCommand{\sc}{\normalfont\scshape}{\@nomath\sc}
\def\alpha{{\Greekmath 010B}}%
\def\beta{{\Greekmath 010C}}%
\def\gamma{{\Greekmath 010D}}%
\def\delta{{\Greekmath 010E}}%
\def\epsilon{{\Greekmath 010F}}%
\def\zeta{{\Greekmath 0110}}%
\def\eta{{\Greekmath 0111}}%
\def\theta{{\Greekmath 0112}}%
\def\iota{{\Greekmath 0113}}%
\def\kappa{{\Greekmath 0114}}%
\def\lambda{{\Greekmath 0115}}%
\def\mu{{\Greekmath 0116}}%
\def\nu{{\Greekmath 0117}}%
\def\xi{{\Greekmath 0118}}%
\def\pi{{\Greekmath 0119}}%
\def\rho{{\Greekmath 011A}}%
\def\sigma{{\Greekmath 011B}}%
\def\tau{{\Greekmath 011C}}%
\def\upsilon{{\Greekmath 011D}}%
\def\phi{{\Greekmath 011E}}%
\def\chi{{\Greekmath 011F}}%
\def\psi{{\Greekmath 0120}}%
\def\omega{{\Greekmath 0121}}%
\def\varepsilon{{\Greekmath 0122}}%
\def\vartheta{{\Greekmath 0123}}%
\def\varpi{{\Greekmath 0124}}%
\def\varrho{{\Greekmath 0125}}%
\def\varsigma{{\Greekmath 0126}}%
\def\varphi{{\Greekmath 0127}}%
\def\nabla{{\Greekmath 0272}}
\def\FindBoldGroup{%
   {\setbox0=\hbox{$\mathbf{x\global\edef\theboldgroup{\the\mathgroup}}$}}%
}
\def\Greekmath#1#2#3#4{%
    \if@compatibility
        \ifnum\mathgroup=\symbold
           \mathchoice{\mbox{\boldmath$\displaystyle\mathchar"#1#2#3#4$}}%
                      {\mbox{\boldmath$\textstyle\mathchar"#1#2#3#4$}}%
                      {\mbox{\boldmath$\scriptstyle\mathchar"#1#2#3#4$}}%
                      {\mbox{\boldmath$\scriptscriptstyle\mathchar"#1#2#3#4$}}%
        \else
           \mathchar"#1#2#3#4%
        \fi 
    \else 
        \FindBoldGroup
        \ifnum\mathgroup=\theboldgroup 
           \mathchoice{\mbox{\boldmath$\displaystyle\mathchar"#1#2#3#4$}}%
                      {\mbox{\boldmath$\textstyle\mathchar"#1#2#3#4$}}%
                      {\mbox{\boldmath$\scriptstyle\mathchar"#1#2#3#4$}}%
                      {\mbox{\boldmath$\scriptscriptstyle\mathchar"#1#2#3#4$}}%
        \else
           \mathchar"#1#2#3#4%
        \fi     	    
	  \fi}
\newif\ifGreekBold  \GreekBoldfalse
\let\SAVEPBF=\pbf
\def\pbf{\GreekBoldtrue\SAVEPBF}%
  \newcounter{equationnumber}  
  \def\mathletters{%
     \addtocounter{equation}{1}
     \edef\@currentlabel{\theequation}%
     \setcounter{equationnumber}{\c@equation}
     \setcounter{equation}{0}%
     \edef\theequation{\@currentlabel\noexpand\alph{equation}}%
  }
    \def\BibTeX{{\rm B\kern-.05em{\sc i\kern-.025em b}\kern-.08em
                 T\kern-.1667em\lower.7ex\hbox{E}\kern-.125emX}}}{}%
\def\AmS{{\protect\usefont{OMS}{cmsy}{m}{n}%
                A\kern-.1667em\lower.5ex\hbox{M}\kern-.125emS}}}{}%
\def\@@eqncr{\let\@tempa\relax
    \ifcase\@eqcnt \def\@tempa{& & &}\or \def\@tempa{& &}%
      \else \def\@tempa{&}\fi
     \@tempa
     \if@eqnsw
        \iftag@
           \@taggnum
        \else
           \@eqnnum\stepcounter{equation}%
        \fi
     \fi
     \global\tag@false
     \global\@eqnswtrue
     \global\@eqcnt\z@\cr}
\def\TCItag{\@ifnextchar*{\@TCItagstar}{\@TCItag}}
\def\@TCItag#1{%
    \global\tag@true
    \global\def\@taggnum{(#1)}%
    \global\def\@currentlabel{#1}}
\def\@TCItagstar*#1{%
    \global\tag@true
    \global\def\@taggnum{#1}%
    \global\def\@currentlabel{#1}}
\def\tint{\msi@int\textstyle\int}%
\def\tiint{\msi@int\textstyle\iint}%
\def\tiiint{\msi@int\textstyle\iiint}%
\def\tiiiint{\msi@int\textstyle\iiiint}%
\def\tidotsint{\msi@int\textstyle\idotsint}%
\def\toint{\msi@int\textstyle\oint}%
\newtoks\temptoksa
\newtoks\temptoksb
\newtoks\temptoksc
\def\msi@int#1#2{%
 \def\@temp{{#1#2\the\temptoksc_{\the\temptoksa}^{\the\temptoksb}}}%
 \futurelet\@nextcs
 \@int
}
\def\@int{%
   \ifx\@nextcs\limits
      \typeout{Found limits}%
      \temptoksc={\limits}%
	  \let\@next\@intgobble%
   \else\ifx\@nextcs\nolimits
      \typeout{Found nolimits}%
      \temptoksc={\nolimits}%
	  \let\@next\@intgobble%
   \else
      \typeout{Did not find limits or no limits}%
      \temptoksc={}%
      \let\@next\msi@limits%
   \fi\fi
   \@next   
}%
\def\@intgobble#1{%
   \typeout{arg is #1}%
   \msi@limits
}
\def\msi@limits{%
   \temptoksa={}%
   \temptoksb={}%
   \@ifnextchar_{\@limitsa}{\@limitsb}%
}
\def\@limitsa_#1{%
   \temptoksa={#1}%
   \@ifnextchar^{\@limitsc}{\@temp}%
}
\def\@limitsb{%
   \@ifnextchar^{\@limitsc}{\@temp}%
}
\def\@limitsc^#1{%
   \temptoksb={#1}%
   \@ifnextchar_{\@limitsd}{\@temp}%
}
\def\@limitsd_#1{%
   \temptoksa={#1}%
   \@temp
}
\def\dint{\msi@int\displaystyle\int}%
\def\diint{\msi@int\displaystyle\iint}%
\def\diiint{\msi@int\displaystyle\iiint}%
\def\diiiint{\msi@int\displaystyle\iiiint}%
\def\didotsint{\msi@int\displaystyle\idotsint}%
\def\doint{\msi@int\displaystyle\oint}%
\def\dsum{\mathop{\displaystyle \sum }}%
\def\ExitTCILatex{\makeatother }
\if@compatibility\message{amsmath already loaded}\fi\aftergroup\ExitTCILatex}
\if@compatibility\message{amstex already loaded}\fi\aftergroup\ExitTCILatex}
\if@compatibility\message{amsgen already loaded}\fi\aftergroup\ExitTCILatex}
\let\DOTSI\relax
\def\RIfM@{\relax\ifmmode}%
\def\FN@{\futurelet\next}%
\def\iint{\DOTSI\intno@\tw@\FN@\ints@}%
\def\iiint{\DOTSI\intno@\thr@@\FN@\ints@}%
\def\iiiint{\DOTSI\intno@4 \FN@\ints@}%
\def\idotsint{\DOTSI\intno@\z@\FN@\ints@}%
\def\ints@{\findlimits@\ints@@}%
\newif\iflimtoken@
\newif\iflimits@
\def\findlimits@{\limtoken@true\ifx\next\limits\limits@true
 \else\ifx\next\nolimits\limits@false\else
 \limtoken@false\ifx\ilimits@\nolimits\limits@false\else
 \ifinner\limits@false\else\limits@true\fi\fi\fi\fi}%
\def\multint@{\int\ifnum\intno@=\z@\intdots@                          
 \else\intkern@\fi                                                    
 \ifnum\intno@>\tw@\int\intkern@\fi                                   
 \ifnum\intno@>\thr@@\int\intkern@\fi                                 
 \int}
\def\multintlimits@{\intop\ifnum\intno@=\z@\intdots@\else\intkern@\fi
 \ifnum\intno@>\tw@\intop\intkern@\fi
 \ifnum\intno@>\thr@@\intop\intkern@\fi\intop}%
\def\intic@{%
    \mathchoice{\hskip.5em}{\hskip.4em}{\hskip.4em}{\hskip.4em}}%
\def\negintic@{\mathchoice
 {\hskip-.5em}{\hskip-.4em}{\hskip-.4em}{\hskip-.4em}}%
\def\ints@@{\iflimtoken@                                              
 \def\ints@@@{\iflimits@\negintic@
   \mathop{\intic@\multintlimits@}\limits                             
  \else\multint@\nolimits\fi                                          
  \eat@}
 \else                                                                
 \def\ints@@@{\iflimits@\negintic@
  \mathop{\intic@\multintlimits@}\limits\else
  \multint@\nolimits\fi}\fi\ints@@@}%
\def\intkern@{\mathchoice{\!\!\!}{\!\!}{\!\!}{\!\!}}%
\def\plaincdots@{\mathinner{\cdotp\cdotp\cdotp}}%
\def\intdots@{\mathchoice{\plaincdots@}%
 {{\cdotp}\mkern1.5mu{\cdotp}\mkern1.5mu{\cdotp}}%
 {{\cdotp}\mkern1mu{\cdotp}\mkern1mu{\cdotp}}%
 {{\cdotp}\mkern1mu{\cdotp}\mkern1mu{\cdotp}}}%
\def\RIfM@{\relax\protect\ifmmode}
\def\text{\RIfM@\expandafter\text@\else\expandafter\mbox\fi}
\let\nfss@text\text
\def\text@#1{\mathchoice
   {\textdef@\displaystyle\f@size{#1}}%
   {\textdef@\textstyle\tf@size{\firstchoice@false #1}}%
   {\textdef@\textstyle\sf@size{\firstchoice@false #1}}%
   {\textdef@\textstyle \ssf@size{\firstchoice@false #1}}%
   \glb@settings}
\def\textdef@#1#2#3{\hbox{{%
                    \everymath{#1}%
                    \let\f@size#2\selectfont
                    #3}}}
\newif\iffirstchoice@
\def\Let@{\relax\iffalse{\fi\let\\=\cr\iffalse}\fi}%
\def\vspace@{\def\vspace##1{\crcr\noalign{\vskip##1\relax}}}%
\def\multilimits@{\bgroup\vspace@\Let@
 \baselineskip\fontdimen10 \scriptfont\tw@
 \advance\baselineskip\fontdimen12 \scriptfont\tw@
 \lineskip\thr@@\fontdimen8 \scriptfont\thr@@
 \lineskiplimit\lineskip
 \vbox\bgroup\ialign\bgroup\hfil$\m@th\scriptstyle{##}$\hfil\crcr}%
\def\Sb{_\multilimits@}%
\def\endSb{\crcr\egroup\egroup\egroup}%
\def\Sp{^\multilimits@}%
\newdimen\ex@
\def\rightarrowfill@#1{$#1\m@th\mathord-\mkern-6mu\cleaders
 \hbox{$#1\mkern-2mu\mathord-\mkern-2mu$}\hfill
 \mkern-6mu\mathord\rightarrow$}%
\def\leftarrowfill@#1{$#1\m@th\mathord\leftarrow\mkern-6mu\cleaders
 \hbox{$#1\mkern-2mu\mathord-\mkern-2mu$}\hfill\mkern-6mu\mathord-$}%
\def\leftrightarrowfill@#1{$#1\m@th\mathord\leftarrow
\mkern-6mu\cleaders
 \hbox{$#1\mkern-2mu\mathord-\mkern-2mu$}\hfill
 \mkern-6mu\mathord\rightarrow$}%
\def\overrightarrow{\mathpalette\overrightarrow@}%
\def\overrightarrow@#1#2{\vbox{\ialign{##\crcr\rightarrowfill@#1\crcr
 \noalign{\kern-\ex@\nointerlineskip}$\m@th\hfil#1#2\hfil$\crcr}}}%
\def\overleftarrow{\mathpalette\overleftarrow@}%
\def\overleftarrow@#1#2{\vbox{\ialign{##\crcr\leftarrowfill@#1\crcr
 \noalign{\kern-\ex@\nointerlineskip}$\m@th\hfil#1#2\hfil$\crcr}}}%
\def\overleftrightarrow{\mathpalette\overleftrightarrow@}%
\def\overleftrightarrow@#1#2{\vbox{\ialign{##\crcr
   \leftrightarrowfill@#1\crcr
 \noalign{\kern-\ex@\nointerlineskip}$\m@th\hfil#1#2\hfil$\crcr}}}%
\def\underrightarrow{\mathpalette\underrightarrow@}%
\def\underrightarrow@#1#2{\vtop{\ialign{##\crcr$\m@th\hfil#1#2\hfil
  $\crcr\noalign{\nointerlineskip}\rightarrowfill@#1\crcr}}}%
\def\underleftarrow{\mathpalette\underleftarrow@}%
\def\underleftarrow@#1#2{\vtop{\ialign{##\crcr$\m@th\hfil#1#2\hfil
  $\crcr\noalign{\nointerlineskip}\leftarrowfill@#1\crcr}}}%
\def\underleftrightarrow{\mathpalette\underleftrightarrow@}%
\def\underleftrightarrow@#1#2{\vtop{\ialign{##\crcr$\m@th
  \hfil#1#2\hfil$\crcr
 \noalign{\nointerlineskip}\leftrightarrowfill@#1\crcr}}}%
\def\qopnamewl@#1{\mathop{\operator@font#1}\nlimits@}
\let\nlimits@\displaylimits
\def\setboxz@h{\setbox\z@\hbox}
\def\varlim@#1#2{\mathop{\vtop{\ialign{##\crcr
 \hfil$#1\m@th\operator@font lim$\hfil\crcr
 \noalign{\nointerlineskip}#2#1\crcr
 \noalign{\nointerlineskip\kern-\ex@}\crcr}}}}
 \def\rightarrowfill@#1{\m@th\setboxz@h{$#1-$}\ht\z@\z@
  $#1\copy\z@\mkern-6mu\cleaders
  \hbox{$#1\mkern-2mu\box\z@\mkern-2mu$}\hfill
  \mkern-6mu\mathord\rightarrow$}
\def\leftarrowfill@#1{\m@th\setboxz@h{$#1-$}\ht\z@\z@
  $#1\mathord\leftarrow\mkern-6mu\cleaders
  \hbox{$#1\mkern-2mu\copy\z@\mkern-2mu$}\hfill
  \mkern-6mu\box\z@$}
\def\projlim{\qopnamewl@{proj\,lim}}
\def\injlim{\qopnamewl@{inj\,lim}}
\def\varinjlim{\mathpalette\varlim@\rightarrowfill@}
\def\varprojlim{\mathpalette\varlim@\leftarrowfill@}
\def\varliminf{\mathpalette\varliminf@{}}
\def\varliminf@#1{\mathop{\underline{\vrule\@depth.2\ex@\@width\z@
   \hbox{$#1\m@th\operator@font lim$}}}}
\def\varlimsup{\mathpalette\varlimsup@{}}
\def\varlimsup@#1{\mathop{\overline
  {\hbox{$#1\m@th\operator@font lim$}}}}
\def\align{\@verbatim \frenchspacing\@vobeyspaces \@alignverbatim
You are using the "align" environment in a style in which it is not defined.}
\let\csname endalign*\endcsname =\endtrivlist
\def\alignat{\@verbatim \frenchspacing\@vobeyspaces \@alignatverbatim
You are using the "alignat" environment in a style in which it is not defined.}
\let\csname endalignat*\endcsname =\endtrivlist
\def\xalignat{\@verbatim \frenchspacing\@vobeyspaces \@xalignatverbatim
You are using the "xalignat" environment in a style in which it is not defined.}
\let\csname endxalignat*\endcsname =\endtrivlist
\def\gather{\@verbatim \frenchspacing\@vobeyspaces \@gatherverbatim
You are using the "gather" environment in a style in which it is not defined.}
\let\csname endgather*\endcsname =\endtrivlist
\def\multiline{\@verbatim \frenchspacing\@vobeyspaces \@multilineverbatim
You are using the "multiline" environment in a style in which it is not defined.}
\let\csname endmultiline*\endcsname =\endtrivlist
\def\arrax{\@verbatim \frenchspacing\@vobeyspaces \@arraxverbatim
You are using a type of "array" construct that is only allowed in AmS-LaTeX.}
\def\tabulax{\@verbatim \frenchspacing\@vobeyspaces \@tabulaxverbatim
You are using a type of "tabular" construct that is only allowed in AmS-LaTeX.}
\let\csname endarrax*\endcsname =\endtrivlist
\let\csname endtabulax*\endcsname =\endtrivlist
 \def\endequation{%
     \ifmmode\ifinner 
      \iftag@
        \addtocounter{equation}{-1} 
        $\hfil
           \displaywidth\linewidth\@taggnum\egroup \endtrivlist
        \global\tag@false
        \global\@ignoretrue   
      \else
        $\hfil
           \displaywidth\linewidth\@eqnnum\egroup \endtrivlist
        \global\tag@false
        \global\@ignoretrue 
      \fi
     \else   
      \iftag@
        \addtocounter{equation}{-1} 
        \eqno \hbox{\@taggnum}
        \global\tag@false%
        $$\global\@ignoretrue
      \else
        \eqno \hbox{\@eqnnum}
        $$\global\@ignoretrue
      \fi
     \fi\fi
 } 
 \newif\iftag@ \tag@false
 \def\TCItag{\@ifnextchar*{\@TCItagstar}{\@TCItag}}
 \def\@TCItag#1{%
     \global\tag@true
     \global\def\@taggnum{(#1)}%
     \global\def\@currentlabel{#1}}
 \def\@TCItagstar*#1{%
     \global\tag@true
     \global\def\@taggnum{#1}%
     \global\def\@currentlabel{#1}}
     \def\tag{\@ifnextchar*{\@tagstar}{\@tag}}
     \def\@tag#1{%
         \global\tag@true
         \global\def\@taggnum{(#1)}}
     \def\@tagstar*#1{%
         \global\tag@true
         \global\def\@taggnum{#1}}
\def\binom#1#2{{#1 \choose #2}}%
\begin{document}

\title{Moment Conditions for Dynamic Panel Logit Models \\ with Fixed Effects\thanks{
We thank Francesca Molinari and three anonymous referees for constructive feedback
and suggestions.
We are also  grateful to St{\'e}phane Bonhomme, Runtong Ding, Geert Dhaene, Luojia Hu, Yoshitsugu Kitazawa, Shakeeb Khan,  Chris Muris, Whitney Newey, and numerous seminar
participants for useful
comments and discussions. Sharada
Dharmasankar provided excellent research assistance. This research was
supported by the Gregory C. Chow Econometric Research Program at Princeton
University, by the National Science Foundation (Award Numbers 1824131 and 2116630),
by the Economic and Social Research Council through the ESRC Centre for
Microdata Methods and Practice (grant numbers RES-589-28-0001, RES-589-28-0002 and ES/P008909/1), and by  the
European Research Council grants ERC-2014-CoG-646917-ROMIA and
ERC-2018-CoG-819086-PANEDA.}}
\author{\setcounter{footnote}{2}Bo E. Honor{\'e}\thanks{%
Princeton University and  The Dale T Mortensen
Centre at the University of Aarhus, \texttt{honore@princeton.edu} } \and Martin Weidner%
\thanks{%
University of Oxford, \texttt{martin.weidner@economics.ox.ac.uk} } }
\date{December 2023}

\maketitle
\thispagestyle{empty}
\setcounter{page}{0}

\renewcommand{\baselinestretch}{1.5}
\begin{abstract}
\noindent  
This paper investigates the construction of moment conditions in discrete choice panel data with individual specific fixed effects. We describe how to systematically explore the existence of moment conditions that do not depend on the fixed effects, and we demonstrate how to construct them when they exist. Our approach is closely related to the numerical “functional differencing” construction in \cite{bonhomme2012functional}, but our emphasis is to find explicit analytic expressions for the moment functions. We first explain the construction and give examples of such moment conditions in various models. Then, we focus on the dynamic binary choice logit model and explore the implications of the moment conditions for identification and estimation of the model parameters that are common to all individuals.
\end{abstract}
\renewcommand{\baselinestretch}{1.5}
\vskip 3cm

\section{Introduction}

This paper is concerned with estimation of the common parameters in
nonlinear panel data models with individual-specific fixed effects in
situations where the relevant asymptotics is an increasing number of
cross-sectional units observed over a fixed number of time periods.   Our contribution is a  general approach for constructing conditional moment conditions
when the dependent variable can take a finite number of values, and we
demonstrate how the approach can be used to construct moment conditions for
logit models with strictly exogenous explanatory variables as well as lagged
dependent variables.

The economic motivation for the econometric model investigated in this
paper is the question of whether persistence in economic data is due to
unobserved heterogeneity or state dependence. This question dates back to
\cite{Heckman78a}  and can be formulated as a distinction
between individual-specific fixed effects and lagged dependent variables.
This framework has proven relevant in many areas of economics. For example, 
\cite{PakesPorterShepardCalderWand2022}  have  
 employed it to
study the importance of switching costs in a model of health
insurance plan choice.

Econometrically, this paper makes a contribution to the literature on the
estimation of nonlinear econometric models with fixed effects. The challenge
is that if the fixed effects enter in a way that is not
additive or multiplicative, then one cannot simply difference or
quasi-difference it away as one would in a linear or
multiplicative model. At the same time, treating the fixed effects as
parameters to be estimated in a nonlinear model will generally lead to an inconsistent
estimator of the common parameters as the number of
cross-sectional units increases with the number of time periods fixed. 
This is what is known as the
incidental parameters problem. See \cite{neyman1948consistent}. One solution to the incidental parameters
problem in parametric models is to look for sufficient statistics for the
fixed effects. By definition, the conditional likelihood, conditional on
these sufficient statistics, will not depend on the fixed effects, so it can potentially be used for estimation.
This approach was pioneered by \cite{rasch1960studies} and \cite%
{andersen1970asymptotic}. Unfortunately, there are relatively few models for
which one can find such sufficient statistics, so an alternative approach is to
try to construct moment conditions that depend  on the parameters of
interest, but not on the individual-specific fixed effects. Papers by
\cite{Honore92, Honore1993}, \cite{Hu2002}, and \cite{johnson2004identification} are earlier specific examples of this, and \cite{bonhomme2012functional} developed a general approach for obtaining moment conditions
 via ``functional
differencing.''

Our paper operationalizes  the proposal in \cite{bonhomme2012functional}  for the case of dependent variables with a finite number of possible values.  
Specifically, we  offer  a  systematic method  for how to first
numerically explore the potential for constructing   moment conditions and then derive their analytic expressions. 
We then apply this machinery to create moment conditions for a prominent case: the fixed effects logit model
with strictly exogenous explanatory variables and lagged dependent
variables. For models with one lag, we  give  explicit expressions for all available moment conditions
when $T\ge 3$, where $T$ is the number of time periods in addition to those that give the initial conditions for the dependent variable.   We also  provide  all
the moment conditions for the case with two lagged dependent variables and $T=4$ and $5$, as well as with three lags and $T=5$. 
Notably, for the case of one lag and
three time periods (in addition to the one that delivers the initial), our conditions align with those previously found by
\cite{kitazawa2013exploration,kitazawa2016root}. 

Estimation of panel data binary response models dates back to \cite%
{rasch1960studies}, who noted that in a logit model with strictly
exogenous explanatory variables, one can make inference regarding the
remaining parameters by conditioning on the sums of the dependent variable
for each individual. \cite{Chamberlain1985} and \cite{magnac2000subsidised}
demonstrated that it is also possible to find sufficient statistics for the
individual-specific fixed effects in logit models where the only explanatory
variables are lagged outcomes, and the common parameters can then be
estimated by maximizing the conditional likelihood (conditional on the
sufficient statistic).  Unfortunately, the conditional likelihood approach
referenced above does not generally carry over to logit models that have
both lagged dependent variables and strictly exogenous explanatory
variables. However, as shown in \cite{honore2000panel}, this approach does
apply if one is also willing to condition on the vector of covariates being
equal across certain time periods. This leads to an estimator that is
asymptotically normal under suitable regularity conditions, but the rate of
convergence is slower than the usual $\sqrt{n}$ when there are continuous
covariates.
The logit assumption is crucial in the construction of the sufficient
statistics above. A number of papers (including \citealt{Manski87}, \citealt{aristodemou2018semiparametric} and \citealt{khan2019identification}) have
relaxed the logistic assumption. This literature suggests that point
estimation is sometimes possible without the logit assumption, and that
informative bounds can be constructed when it is not. On the other hand, 
\cite{chamberlain2010binary} showed that in a two-period static
threshold-crossing model  regular root-$n$ consistent estimation is only
possible in a logit setting. 
 This underpins the focus on the logistic assumption throughout this paper. 

\erasestuff{This paper relates to a number of strands of the literature. Estimation of
panel data binary response models dates back to \cite{rasch1960studies}, who
noted that in a logit model with strictly exogenous explanatory variables,
one can make inference regarding the remaining parameters by conditioning on
the sums of the dependent variable for each individual, which are the
sufficient statistics for the fixed effects. \cite{Manski87} showed that it
is possible to identify and consistently estimate a semiparametric version
of the model which relaxes the logistic assumption. \cite{Manski87}'s
estimator is not root-$n$ consistent, and \cite{chamberlain2010binary}
showed that regular root-$n$ consistent estimation is only possible in a
logit setting  when only two time periods are observed.
See also \cite{jochmans2017note}.}

\erasestuff{A number of papers have attempted to relax the assumption that the
explanatory variables are strictly exogenous by including lagged dependent
variables. Building on \cite{cox1958regression}, \cite{Chamberlain1985}  and \cite%
{magnac2000subsidised} demonstrated that it is possible to find sufficient
statistics for the individual-specific fixed effects in logit models where
the only explanatory variables are lagged outcomes. Conditioning on these
sufficient statistics leads to a likelihood function that does not depend on
the fixed effects, but they typically depend on some of the unknown
parameters of the model. The parameters can then be estimated by maximizing
the conditional likelihood. The resulting estimator is root-$n$ consistent
and asymptotically normal. The conditional likelihood approach referenced
above does not generally carry over to logit models that have both lagged
dependent variables and strictly exogenous explanatory variables. However,
as shown in \cite{honore2000panel}, this approach does apply if one is also
willing to condition on the vector of covariates being equal across certain
time periods.  This leads to an estimator that is asymptotically normal under
suitable regularity conditions, but the rate of convergence is slow when
there are continuous covariates.}

\erasestuff{Papers by \cite{honore2000panel}, \cite{aristodemou2018semiparametric} and
\cite{khan2019identification} relax the logistic assumption. This literature
suggests that point estimation is sometimes possible  and that informative
bounds can be constructed when it is not. On the other hand, the
impossibility result in \cite{chamberlain2010binary} suggests that the most
fruitful way to achieve regular root-$n$ consistent estimation is by
imposing a logistic assumption.\footnote{%
As an alternative to this, \cite{bartolucci2010dynamic} and \cite%
{Al-SadoonLiPesaran2017} have proposed models that are outside the framework
considered by \cite{chamberlain2010binary}.} This motivates the estimation
approach based on a logistic distribution assumption that we follow in the
current paper.}

 The paper is organized as follows: 
Section~\ref{SEC: Incidental parameter free moment conditions} details our approach for discerning and constructing moment conditions when the dependent variable assumes a finite number of values, also drawing on insights from \cite{dobronyi2021identification} regarding the number of available moments. 
In Section~\ref{sec:Derivation}, we showcase this within a panel data logit AR(1) model with \(T=3\) time periods. 
Section~\ref{sec:Examples} extends this by exploring various other models, including notably the dynamic ordered logit and dynamic multinomial models. This section also highlights the versatility of our method through examples like the panel data logit AR(1) with a heterogeneous time trend, and a static binary response model leveraging a mixture of logits. 
Section~\ref{sec:identitifactionAR1} discusses conditions under which the moment conditions  are guaranteed to identify the common parameters in the AR(1) logit model with strictly exogenous explanatory variables, 
while Section~\ref{SEC: Panel logit AR(1) model for general T>3}  demonstrates how to find moment conditions for the AR(1) logit model with strictly exogenous explanatory variables when the number of time periods is not three.  
Section~\ref{SEC: Empirical illustration} illustrates the usefulness of the approach by estimating a simple model for labor force participation, and Section~\ref{sec:conc} wraps up the discussion.

\section{Incidental parameter free moment conditions\label{SEC: Incidental parameter free moment conditions}}

\subsection{Model and moment conditions}

In this paper, we consider a panel data setting with $i=1,\ldots,n$ cross-sectional units
and $t=1,\ldots,T$ time periods. An econometrician models a sequence of
discrete outcomes, $Y_{i}=(Y_{i1},\ldots ,Y_{iT})$, as a function of
explanatory variables, $X_{i}=(X_{i1},\ldots ,X_{iT})$, initial conditions, $%
Y_{i}^{(0)}=(Y_{it}\,:\,t\leq 0)$,
and a time invariant \textquotedblleft fixed effect\textquotedblright , $A_{i}$,
as
\begin{equation}
\mathrm{Pr}\left( Y_{i}=y_{i}\,\Big|\,Y_{i}^{(0)}=y_{i}^{(0)},\,X_{i}=x_{i},%
\;A_{i}=\alpha _{i}\right) =f\big(y_{i}\,\big|\,y_{i}^{(0)},\,x_{i},\,\alpha
_{i};\,\theta \big).  \label{MainModelRestriction}
\end{equation}%
The function $f$ is assumed to be known up to the finite dimensional
parameter $\theta $.
The variables $Y_i$, $Y_{i}^{(0)}$
and $X_i$ are observed, but
 $A_i$
is unobserved.
The corresponding  conditional probabilities that can be identified
from the observed data are
\begin{equation*}
\mathrm{Pr}\left( Y_{i}=y_{i}\,\Big|\,Y_{i}^{(0)}=y_{i}^{(0)},\,X_{i}=x_{i}%
\right) =\int \,f\big(y_{i}\,\big|\,y_{i}^{(0)},\,x_{i},\,\alpha
_{i};\,\theta \big)\;g\big(\alpha _{i}\,\big|\,\,y_{i}^{(0)},\,x_{i}\big)%
\;d\alpha _{i},
\end{equation*}%
where the probability mass or density function, $g\big(\alpha _{i}\,\big|%
\,\,x_{i},\,y_{i}^{(0)}\big)$, of $A_{i}$ conditional on $X_{i}$ and $%
Y_{i}^{(0)}$ is left unspecified. 
We use $\mathcal{Y}$ to denote the set of
possible values of $Y_{i}=(Y_{i1},\ldots ,Y_{iT})$, which will be a finite
set in all the models considered in this paper.

Throughout this paper we assume that that $%
(Y_{i}^{(0)},Y_{i},X_{i},A_{i})$ are independent and identically distributed
across $i=1,\ldots ,n$, and our goal is to estimate the common parameters $%
\theta $ from the observed data as $n \rightarrow \infty$
and $T$ is fixed.
The difficulty in identifying and estimating $\theta $ is that the
individual specific fixed effects $(\alpha _{1},\ldots ,\alpha _{n})$, or
equivalently their unknown conditional distribution $g\big(\alpha _{i}\,\big|%
\,\,y_{i}^{(0)},\,x_{i}\big)$, constitute a high-dimensional nuisance
parameter, that is, we are faced with a classic \cite{neyman1948consistent}
 incidental parameter problem.
The leading example considered throughout most of this paper is the binary
choice logit AR(1) model, where $Y_{it}\in \{0,1\}$, $X_{it}\in \mathbb{R}%
^{K}$, $A_{i}\in \mathbb{R}$, and the model restriction reads
\begin{equation}
\mathrm{Pr}\left( Y_{it}=1\,\big|\,Y_{i}^{t-1},X_{i},A_{i}\right) =\frac{%
\exp \big(X_{it}^{\prime }\,\beta +Y_{i,t-1}\,\gamma +A_{i}\big)}{1+\exp %
\big(X_{it}^{\prime }\,\beta +Y_{i,t-1}\,\gamma +A_{i}\big)},  \label{model}
\end{equation}%
with $Y_{i}^{t-1}=(Y_{i,t-1},Y_{i,t-2},\ldots )$, $\beta \in \mathbb{R}^{K}$
and $\gamma \in \mathbb{R}$. In this example, we have $\theta =(\beta
,\gamma )$, $Y_{i}^{(0)}=Y_{i0}$, and
\begin{align}
f\big(y_{i}\,\big|\,y_{i}^{(0)},\,x_{i},\,\alpha _{i};\,\theta \big)&
=\prod_{t=1}^{T}\frac{\exp \left(y_{it} \left( x_{it}^{\prime
}\,\beta +y_{i,t-1}\,\gamma +\alpha _{i}\right) \right)}{1+\exp \left( x_{it}^{\prime
}\,\beta +y_{i,t-1}\,\gamma +\alpha _{i}\right)  }  \notag \\[10pt]
& =:p(y_{i}, y_{i}^{(0)},x_{i},\beta ,\gamma ,\alpha _{i}).
\label{DefProb}
\end{align}%
In this binary choice example, the set of possible outcomes $\mathcal{Y}%
=\{0,1\}^{T}$ has cardinality $|\mathcal{Y}|=2^{T}$.

A very general method to overcome the incidental parameter problem in the
models considered here is to find moment functions $%
m(y_{i},y_{i}^{(0)},x_{i},\theta )$ (different from  zero) such that the model restriction %
\eqref{MainModelRestriction} implies that for any  true 
parameter value $\theta$,
\begin{equation}
\mathbb{E}\left[ m(Y_{i},Y_{i}^{(0)},X_{i},\theta )\right] = 0.
\label{MomentsUnconditional}
\end{equation}%
If we can find such valid moment functions, then they can typically be used
to study identification of the parameter $\theta $ and to estimate it using
generalized method of moments (GMM). 
 The main challenge in this process is
to find such valid moment functions for a given panel model of interest.

 In the absence of any further restriction on the distribution of $%
(Y_{i}^{(0)},X_{i},A_{i})$, the unconditional moment restriction %
\eqref{MomentsUnconditional} can only be a consequence of the model %
\eqref{MainModelRestriction} if the conditional moment restriction
\begin{equation}
\mathbb{E}\left[ m(Y_{i},Y_{i}^{(0)},X_{i},\theta )\Big|%
\,Y_{i}^{(0)}=y_{i}^{(0)},\,X_{i}=x_{i},\;A_{i}=\alpha _{i}\right] =0
\label{MomentsConditional}
\end{equation}%
holds for all possible realizations $y_{i}^{(0)}$, $x_{i}$, $\alpha _{i}$.
Under weak regularity conditions, \eqref{MomentsUnconditional} then follows
from \eqref{MomentsConditional} by the law of iterated expectations.
Furthermore, \eqref{MomentsConditional} can be rewritten as
\begin{equation}
\sum_{y_{i}\in \mathcal{Y}}m(y_{i},y_{i}^{(0)},x_{i},\theta )\,f\big(y_{i}\,%
\big|\,y_{i}^{(0)},\,x_{i},\,\alpha _{i};\,\theta \big)=0,
\label{MomentsConditional2}
\end{equation}%
which shows that knowledge of $f\big(y_{i}\,\big|\,y_{i}^{(0)},\,x_{i},\,%
\alpha _{i};\,\theta \big)$ is sufficient to verify %
\eqref{MomentsConditional}, and therefore \eqref{MomentsUnconditional}, for
a given moment function, $m$. 

Consider a single moment function $m(y_{i},y_{i}^{(0)},x_{i},\theta )\in
\mathbb{R}$ and fixed values of $y_{i}^{(0)}$, $x_{i}$, $\theta $. Then, for
every value of $\alpha _{i}$, the condition \eqref{MomentsConditional2}
constitutes one linear restriction on the vector $%
[m(y_{i},y_{i}^{(0)},x_{i},\theta )\,:\,y_{i}\in \mathcal{Y}]\in \mathbb{R}%
^{|\mathcal{Y}|}$.  Finding $m(y_{i},y_{i}^{(0)},x_{i},\theta )\in \mathbb{R}$ for
fixed $y_{i}^{(0)}$, $x_{i}$, and $\theta $ then requires solving an infinite
number of linear equations in $|\mathcal{Y}|$ variables. Depending on the
choice of $f\big(y_{i}\,\big|\,y_{i}^{(0)},\,x_{i},\,\alpha _{i};\,\theta %
\big)$ no solution may exist to this infinite dimensional system of
equations. The key finding of this paper is that dynamic logit models do
generally have solutions to this system, that is, moment conditions of the
form \eqref{MomentsUnconditional} are generally available in such models.

\subsection{Strategy for exploring and using such moment conditions\label%
{Sec: Strategy for exploring and using such moment conditions}}

In this subsection, we briefly outline a three-step strategy for obtaining
valid moment conditions of the form \eqref{MomentsUnconditional} for a model
$f\big(y\,\big|\,y^{(0)},\,x,\,\alpha ;\,\theta \big)$ with $T$ time
periods. We drop all indices $i$ unless they are explicitly required.

The first step is to determine numerically whether it seems likely that one
can find moment functions that satisfy \eqref{MomentsConditional}. To do
this, we choose numerical values for $y^{(0)}$, $x$, and $\theta $, and also
choose $Q>|\mathcal{Y}|$ different numerical values for the fixed effects $%
(\alpha _{1},\ldots ,\alpha _{Q})\subset \mathcal{A}^{Q}$. We then check
numerically whether for those values, the system
\begin{equation}
\sum_{y\in \mathcal{Y}}m(y)\,f\big(y\,\big|\,y^{(0)},\,x,\,\alpha
_{q};\,\theta \big)=0,\qquad q=1,\ldots ,Q,  \label{MomentsConditional3}
\end{equation}%
of $Q$ equations in $|\mathcal{Y}|$ unknowns, $[m(y):y\in \mathcal{Y}]\in
\mathbb{R}^{|\mathcal{Y}|}$, has a solution other than $m=0$ (and if so, how
many). If the $Q$ equations have at least one solution, then one could repeat this
exercise for multiple randomly chosen numerical values of $y^{(0)}$, $x$, $%
\theta $ and of the fixed effects. In this step it is important to use
sufficient numerical precision in those calculations, see Appendix~\ref{sec:Computation} for
more details.

If the conclusion of the first step is that moment functions seem to exist,
then the next step is to find them. One way to proceed is by choosing
specific numerical values for $(\alpha _{1},\ldots ,\alpha _{Q})$, but now
solve the system \eqref{MomentsConditional3} analytically for arbitrary
values of $y^{(0)}$, $x$, $\theta $.\footnote{%
Since we consider discrete choice, the initial condition $y^{(0)}$ takes a
finite number of discrete values, and we can perform the analysis separately
for each of value of $y^{(0)}$. But for $x$ and $\theta $ we need to allow
for arbitrary general values in this step.} The corresponding solution for $%
m(y)$ will depend on $y^{(0)}$, $x$, and $\theta $, and we therefore denote the solution by $%
m(y,y^{(0)},x,\theta )$.
The solution will not depend on the specific numeric values of $\alpha
_{1},\ldots ,\alpha _{Q}$ if we have truly found a
valid moment condition for the chosen model.
See Section~\ref{sec:Derivation} for a
concrete example.

Since the moment functions, $m(y,y^{(0)},x,\theta )$, obtained in the second
step are obtained using a set of specific numerical values of $\alpha
_{1},\ldots ,\alpha _{Q}$, the third step is to verify analytically that
they satisfy the condition \eqref{MomentsConditional2} for all $\alpha \in \mathbb{R}$.

Once one has constructed moment functions, $m(y,y^{(0)},x,\theta )$, using
the strategy outlined  so far, the next step is to study the implications of
those moment functions for identification and estimation of $\theta $. It is
also useful to study the properties of the moment functions to obtain a
better understanding of their structure and origin. In particular, the first
two steps can only be implemented for a given number of time periods $T$.
However, by studying the moment functions obtained for specific choices of $%
T $, one may be able to draw general conclusions that make it possible to
write down all moment functions for a given model for all possible values of
$T$.

In the next section, we follow the steps outlined above to construct moment
conditions for the binary choice logit AR(1) model in equation (\ref{model}%
). However, there are many other interesting semi-parametric discrete choice
panel models $f\big(y\,\big|\,y^{(0)},\,x,\,\alpha ;\,\theta \big)$ for
which moment conditions of the form \eqref{MomentsUnconditional} exist, but
have not yet been studied systematically --- see Section~\ref{sec:Examples}
below for some concrete examples. The above work program can therefore be
seen as a blueprint for  an extensive research agenda beyond the current paper. We
have recently implemented this blueprint in \cite{honore2021dynamic} for the
case of dynamic ordered choice panel models. \cite{davezies2022fixed} can be
seen as another example that implements the above program for static binary
choice panel model with idiosyncratic error distributions that generalize
the logistic case in a particular way.

\cite{Dano2023arXiv} uses ``transition functions''  to derive moment conditions for
dynamic discrete choice logit panel models. This reproduces and extends various results in the current paper, with the advantage that the method more easily generalizes to an arbitrary number of time periods. \cite{Dano2023arXiv}  also works out the semiparametric efficiency bound for the AR(1) panel logit model with regressors.

\subsection{Lower bound on the number of moment conditions\label{SEC: Lower
bound on number of moment conditions}}

\cite{dobronyi2021identification} point out that it is sometimes possible to
determine a lower bound on the number of moment conditions that can be
derived for a given model. 
Specifically, for many  of the models considered
below we have $A \in \mathbb{R}$,
and one can write the probability distribution in (\ref{MainModelRestriction}) as
\begin{equation*}
f\big(y\,\big|\,y^{(0)},\,x,\,\alpha ;\,\theta \big)=\kappa \left( a \right) \,
\dsum\limits_{k=1}^{K}a^{k-1}c_{k}\left( y \right)
\end{equation*}%
for some $K \in \{1,2,\ldots\}$, some positive function $\kappa$ of $%
a=\exp \left( \alpha \right) $ that does not depend on $y$,
and some functions $c_k$ of $y$ that do not depend on $a$.
Here, the functions $\kappa$
and $c_k$ also depend on
$y^{(0)}$, $x$, $\theta$,
but analogous to our discussion in the last subsection, those arguments are dropped to focus more clearly on the
dependence on $\alpha$ and $y$.

A moment function must then satisfy
\begin{equation}
\sum_{y\in \mathcal{Y}}m(y)\,\dsum\limits_{k=1}^{K}a^{k-1}c_{k}\left(
y\right) =0,\qquad \text{for all }a\in \left( 0,\infty \right)
\label{EQ: Polynomial}
\end{equation}%
which is equivalent to
\begin{equation*}
\sum_{y\in \mathcal{Y}}\,m(y)\,c_{k}(y)=0,\qquad \text{for all $k\in
\{1,\ldots ,K\}$.}
\end{equation*}%
These are $K$ linear conditions in $\left\vert \mathcal{Y}\right\vert $
unknown parameters $m(y)$. We therefore have at least $\left\vert \mathcal{Y}%
\right\vert -K$ linearly independent solutions $m(y)$. In other words, the
model must have at least $\left\vert \mathcal{Y}\right\vert -K$ conditional
moment conditions (conditional on the initial conditions and on the
explanatory variables). Of course, there is no guarantee that all of these
moment conditions are functions of the common parameter, $\theta $.

\section{Moment functions for the $T=3$ logit AR(1) model}
\label{sec:Derivation}

In this section, we apply the strategy outlined in Section \ref{Sec:
Strategy for exploring and using such moment conditions} to construct moment
conditions for the binary choice logit AR(1) model in equation (\ref{model})
when $T$ is three. In most applications, this corresponds to a total of four
time periods: three for which the models is assumed to apply, plus one that
delivers the initial condition, $y_{0}$.

\subsection{Verifying existence of moment functions numerically}

By numerically evaluating whether solutions to \eqref{MomentsConditional3}
exist for this model, one finds that $T=3$ is the smallest number of time
periods for which non-zero valid moment functions are available. Our
discussion in Section \ref{Moments when T=2} below formally shows that it is indeed
not possible to derive moment conditions when $T=2$. This is the reason why
we focus on $T=3$ in this section. For $T=3$ and $\gamma \neq 0$, one
furthermore finds by numerical experimentation that for each value of the
initial condition $y_{0}$, there exist exactly two linearly independent
moment functions that satisfy \eqref{MomentsConditional3}.

\subsection{Finding analytical solutions for the moment functions}

\label{sec:FindSolutions}

Having verified the existence of moment functions numerically, the next goal
is to find analytic formulas for them. That is, we want to find functions $%
m(y,y_{0},x,\beta ,\gamma )$ that satisfy \eqref{MomentsConditional2}.

Since $T=3$, we have $|\mathcal{Y}|=2^{T}=8$. We define vectors in $\mathbb{R%
}^{8}$ for the model probabilities and for the candidate moment functions:
\begin{equation*}
\hspace{-3pt}
\mathbf{p}(y_{0},x,\beta ,\gamma ,\alpha )
\hspace{-3pt}
=
\hspace{-3pt}
\left(
\begin{array}{@{}c@{}}
p((0,0,0),y_{0},x,\beta ,\gamma ,\alpha ) \\
p((0,0,1),y_{0},x,\beta ,\gamma ,\alpha ) \\
p((0,1,0),y_{0},x,\beta ,\gamma ,\alpha ) \\
p((0,1,1),y_{0},x,\beta ,\gamma ,\alpha ) \\
p((1,0,0),y_{0},x,\beta ,\gamma ,\alpha ) \\
p((1,0,1),y_{0},x,\beta ,\gamma ,\alpha ) \\
p((1,1,0),y_{0},x,\beta ,\gamma ,\alpha ) \\
p((1,1,1),y_{0},x,\beta ,\gamma ,\alpha )%
\end{array}%
\right) \hspace{-3pt},\, \, \, \mathbf{m}(x,y_{0},\beta ,\gamma )
\hspace{-3pt}
=
\hspace{-3pt}
\left(
\begin{array}{@{}c@{}}
m((0,0,0),y_{0},x,\beta ,\gamma ) \\
m((0,0,1),y_{0},x,\beta ,\gamma ) \\
m((0,1,0),y_{0},x,\beta ,\gamma ) \\
m((0,1,1),y_{0},x,\beta ,\gamma ) \\
m((1,0,0),y_{0},x,\beta ,\gamma ) \\
m((1,0,1),y_{0},x,\beta ,\gamma ) \\
m((1,1,0),y_{0},x,\beta ,\gamma ) \\
m((1,1,1),y_{0},x,\beta ,\gamma )%
\end{array}%
\right) \hspace{-3pt} .
\end{equation*}

For simplicity, we drop the arguments $y_{0}$, $x$, $\beta $, and $\gamma $
for the rest of this subsection. They are all kept fixed in the following
derivation, and they are the same in the probability vector $\mathbf{p}%
(\alpha )=\mathbf{p}(x,y_{0},\beta ,\gamma ,\alpha )$ and in the moment
vector $\mathbf{m}=\mathbf{m}(x,y_{0},\beta ,\gamma )$. The probability
vector $\mathbf{p}(\alpha )$ as a function of $\alpha $ is given by the
model specification. A moment vector $\mathbf{m}\in \mathbb{R}^{8}$ with $%
\mathbf{m}\neq 0$ is valid if it satisfies $\mathbf{m}^{\prime }\,\mathbf{p}%
(\alpha )=0$ for all $\alpha \in \mathbb{R}$; that is, a valid moment vector
needs to be orthogonal to $\mathbf{p}(\alpha )$ for all values of $\alpha $.
If we can find such a valid moment vector, then its entries will provide
moment functions that satisfy equation \eqref{MomentsConditional}, because $%
\mathbf{m}^{\prime }\,\mathbf{p}(\alpha )$ is equal to $\mathbb{E}\left[
m(Y,Y_{0},X,\beta _{0},\gamma _{0})\,\big|\,Y_{0}=y_{0},\,X=x,\,A=\alpha %
\right] $.

Any valid moment vector also satisfies $\lim_{\alpha \rightarrow \pm \infty }%
\mathbf{m}^{\prime }\,\mathbf{p}(\alpha )=0$. Moreover, the model
probabilities $\mathbf{p}(\alpha )$ are continuous functions of $\alpha $
with $\lim_{\alpha \rightarrow -\infty }\mathbf{p}(\alpha )=\mathbf{e}%
_{1}=(1,0,0,0,\allowbreak 0,0,0,0)^{\prime }$ and $\lim_{\alpha \rightarrow +\infty }%
\mathbf{p}(\alpha )=\mathbf{e}_{8}=(0,0,0,0,0,0,0,1)^{\prime }$, where $%
\mathbf{e}_{k}$ denotes the $k$'th standard unit vector in eight dimensions.
From this, we conclude:

\begin{itemize}
\item[(1)] Any valid moment vector $\mathbf{m}$ satisfies $\mathbf{e}%
_1^{\prime }\, \mathbf{m} = 0$ and $\mathbf{e}_8^{\prime }\, \mathbf{m} = 0$.
\end{itemize}

Furthermore, from our ``step 1'' analysis with concrete numerical values we
already know that:\footnote{%
The numerical experiment does not provide a proof of this, but we still take
this as an input in our moment condition derivation, with the final
justification given by Lemma~\ref{lemma:moments_p1T3}.}

\begin{itemize}
\item[(2)] There are two linearly independent solutions to the equations $%
\mathbf{m}^{\prime }\,\mathbf{p}(\alpha )=0$ for all $\alpha \in \mathbb{R}$%
. This implies that the set of valid moment vectors $\mathbf{m}$ is
two-dimensional.
\end{itemize}

Motivated by hypothesis (2), the aim is to find two linearly independent
moment vectors $\mathbf{m}^{(0)}$ and $\mathbf{m}^{(1)}$ for each $y_{0}\in
\{0,1\}$. To distinguish $\mathbf{m}^{(0)}$ and $\mathbf{m}^{(1)}$ from each
other, we impose the condition $\mathbf{e}_{7}^{\prime }\,\mathbf{m}^{(0)}=0$
for the first vector and the condition $\mathbf{e}_{2}^{\prime }\,\mathbf{m}%
^{(1)}=0$ for the second vector. In addition, we require a normalization for
each of these vectors, because an element of the nullspace can be multiplied
by an arbitrary nonzero constant to obtain another element of the nullspace.
We choose the normalizations $\mathbf{e}_{4}^{\prime }\,\mathbf{m}^{(0)}=-1$
and $\mathbf{e}_{5}^{\prime }\,\mathbf{m}^{(1)}=-1$. Together with the
conditions in (1), this specifies four affine restrictions on each of the
vectors $\mathbf{m}^{(0)},\mathbf{m}^{(1)}\in \mathbb{R}^{8}$. To define $%
\mathbf{m}^{(0)}$ and $\mathbf{m}^{(1)}$ uniquely, we require four more
affine conditions for each. We therefore choose four numeric values $\alpha
_{q}$ and impose the orthogonality between $\mathbf{p}(\alpha _{q})$ and $%
\mathbf{m}^{(0/1)}$. Thus, motivated by (1) and (2), we need to solve the
following two linear systems of equations:

\begin{itemize}
\item[(0)] $\mathbf{e}_1^{\prime }\, \mathbf{m}^{(0)}=0$, \; \; $\mathbf{e}%
_8^{\prime }\, \mathbf{m}^{(0)}=0$, \; \; $\mathbf{e}_7^{\prime }\, \mathbf{m%
}^{(0)}=0$, \; \; $\mathbf{e}_4^{\prime }\, \mathbf{m}^{(0)}=-1$, \newline
$\mathbf{p}^{\prime }(\alpha_q) \, \mathbf{m}^{(0)}=0$, \;\; for $q=1,2,3,4$.

\item[(1)] $\mathbf{e}_1^{\prime }\, \mathbf{m}^{(1)}=0$, \; \; $\mathbf{e}%
_8^{\prime }\, \mathbf{m}^{(1)}=0$, \; \; $\mathbf{e}_2^{\prime }\, \mathbf{m%
}^{(1)}=0$, \; \; $\mathbf{e}_5^{\prime }\, \mathbf{m}^{(1)}=-1$, \newline
$\mathbf{p}^{\prime }(\alpha_q) \, \mathbf{m}^{(1)}=0$, \;\; for $q=1,2,3,4$.
\end{itemize}

If it is indeed possible to find such moment functions $\mathbf{m}^{(0/1)}$,
then it must be possible for the four values of $\alpha $ to be chosen
arbitrarily without affecting the solutions $\mathbf{m}^{(0/1)}$. For
example, $\alpha _{q}=q$ is a valid choice. Note that the
two-dimensional span of the vectors $\mathbf{m}^{(0)}$ and $\mathbf{m}^{(1)}$
and the potential of the moment conditions to identify and estimate $\beta $
and $\gamma $ are not affected by the normalizations $\mathbf{e}_{4}^{\prime
}\,\mathbf{m}^{(0)}=-1$, $\mathbf{e}_{5}^{\prime }\,\mathbf{m}^{(1)}=-1$, $%
\mathbf{e}_{7}^{\prime }\,\mathbf{m}^{(0)}=0$, and $\mathbf{e}_{2}^{\prime
}\,\mathbf{m}^{(1)}=0$.

The systems of linear equations (0) and (1) above uniquely determine $%
\mathbf{m}^{(0)}$ and $\mathbf{m}^{(1)}$. By defining the $8\times 8$
matrices $\mathbf{B}^{(0)}=[\mathbf{e}_{1},\mathbf{e}_{8},\mathbf{e}_{7},%
\mathbf{e}_{4},\allowbreak \mathbf{p}(\alpha _{1}),\mathbf{p}(\alpha _{2}),%
\mathbf{p}(\alpha _{3}),\mathbf{p}(\alpha _{4})]^{\prime }$, and $\mathbf{B}%
^{(1)}=[\mathbf{e}_{1},\mathbf{e}_{8},\mathbf{e}_{2},\mathbf{e}_{5},\mathbf{p%
}(\alpha _{1}),\mathbf{p}(\alpha _{2}),\mathbf{p}(\alpha _{3}),\mathbf{p}%
(\alpha _{4})]^{\prime }$, we can rewrite those systems of equations as $%
\mathbf{B}^{(0)}\,\mathbf{m}^{(0)}=-\mathbf{e}_{4}$, and $\mathbf{B}^{(1)}\,%
\mathbf{m}^{(1)}=-\mathbf{e}_{4}$. Solving this gives%
\begin{eqnarray}
\mathbf{m}^{(0)}\, &=&\,-\,\left( \mathbf{B}^{(0)}\right) ^{-1}\,\mathbf{e}%
_{4} \, , \nonumber \\
\mathbf{m}^{(1)}\, &=&\,-\,\left( \mathbf{B}^{(1)}\right) ^{-1}\,\mathbf{e}%
_{4} \, .
\label{FindMomentsAnalytical}
\end{eqnarray}%
Plugging the analytical expression for $\mathbf{p}(\alpha )=\mathbf{p}%
(x,y_{0},\beta ,\gamma ,\alpha )$ into the definitions $\mathbf{B}^{(1)}$
and $\mathbf{B}^{(0)}$, we thus obtain analytical expressions for $\mathbf{m}%
^{(0)}=\mathbf{m}^{(0)}(x,y_{0},\beta ,\gamma )$ and $\mathbf{m}^{(1)}=%
\mathbf{m}^{(1)}(x,y_{0},\beta ,\gamma )$.

To report the results, we denote the components of the solutions $\mathbf{m}%
^{(0/1)}=\mathbf{m}^{(0/1)}(x,y_0, \allowbreak \beta ,\gamma ) \in \mathbb{R}^8$ by $%
m^{(0/1)}(y,x,y_0,\beta ,\gamma ) \in \mathbb{R}$, for $y \in \mathcal{Y}$.
Furthermore, let $x_{ts}=x_{t}-x_{s}$. Then,   the solutions  are
\begin{align}
m^{(0)}(y,y_0,x,\beta ,\gamma )& =\left\{
\begin{array}{ll}
\exp \left( x_{23}^{\prime }\beta \right) -1 & \text{if }y=(0,0,1), \\
-1 & \text{if }(y_{1},y_{2})=(0,1), \\
\exp \left( x_{31}^{\prime }\beta - y_0 \, \gamma \right) & \text{if }%
y=(1,0,0), \\
\exp \left( x_{21}^{\prime }\beta + (1-y_0) \, \gamma \right) & \text{if }%
y=(1,0,1), \\
0 & \text{otherwise},%
\end{array}
\right.  \notag \\[5pt]
m^{(1)}(y,y_0,x,\beta ,\gamma )& =\left\{
\begin{array}{ll}
\exp \left( x_{12}^{\prime }\beta + y_0 \, \gamma \right) & \text{if }%
y=(0,1,0), \\
\exp \left( x_{13}^{\prime }\beta - (1-y_0) \,\gamma \right) & \text{if }%
y=(0,1,1), \\
-1 & \text{if }(y_{1},y_{2})=(1,0), \\
\exp \left( x_{32}^{\prime }\beta \right) -1 & \text{if }y=(1,1,0), \\
0 & \text{otherwise}.%
\end{array}%
\right.  \label{SolutionMomentsT3}
\end{align}
The two solutions in \eqref{SolutionMomentsT3} are closely related: If $%
Y_{t} $ is generated according to \eqref{model}, then $Z_{t}=1-Y_{t}$ is
also generated according to \eqref{model}, but with $X_{t}$ replaced by $%
-X_{t}$ and $A$ replaced by $A-\gamma $. The solutions $m^{(0)}$ and $%
m^{(1)} $ are symmetric in the sense that $m^{(0)}(y,y_0,x,\beta ,\gamma
)=m^{(1)}(1-y,1-y_0,-x,\beta ,\gamma )$.

\subsection{Verifying that the moment functions are valid}

The following lemma establishes that the moment functions, $%
m^{(0/1)}(y,y_0,x,\beta ,\gamma )$, displayed in \eqref{SolutionMomentsT3}
are indeed valid. For this, it is not relevant how the moment functions were
derived.

\begin{lemma}
\label{lemma:moments_p1T3} If the outcomes $Y=(Y_1,Y_2,Y_3)$ are generated
from model \eqref{model} with $T=3$ and true parameters $\beta_0$ and $%
\gamma_0$, then we have for all $q \in \{0,1\}$, $y_0 \in \{0,1\}$, $x \in
\mathbb{R}^{K \times 3}$, $\alpha \in \mathbb{R}$ that
\begin{align*}
\mathbb{E} \left[ m^{(q)}(Y,Y_0,X,\beta_0,\gamma_0) \, \big| \, Y_0=y_0, \,
X=x, \, A=\alpha \right] &= 0 .
\end{align*}
\end{lemma}

This lemma is a special case of Theorem~\ref{th:AR1moments} below.
However, one can prove this lemma more easily by direct calculation: just
plug-in the definition of the probabilities $p(y,y_{0},x,\beta _{0},\gamma
_{0},\alpha )$ and moments $m^{(0/1)}(y,y_{0},x,\beta _{0},\gamma _{0}$) to
show that
\begin{equation*}
\sum_{y\in \{0,1\}^{3}}\,p(y,y_{0},x,\beta _{0},\gamma _{0},\alpha
)\;m^{(0/1)}(y,y_{0},x,\beta _{0},\gamma _{0})=0.
\end{equation*}%
The details of this calculation are provided in Appendix~B.2.1 of \cite{Honore2022moment}.

\subsection{On the number of moment conditions\label{SEC: On the number of moment conditions}}

As explained in Section \ref{SEC: Lower bound on number of moment
conditions}, \cite{dobronyi2021identification} provide a method for deriving
 (a lower bound on)
the number of moment conditions for a given model. For the panel logit AR(1)
model, the probability distribution for $Y_{i}=(Y_{i1},\ldots ,Y_{iT})$
(conditional on $Y_{i0}$, $X_{i}$, $A_{i}$) is given by
\begin{equation*}
f\big(y\,\big|\,y^{(0)},\,x,\,\alpha ;\,\theta \big)\ =\prod_{t=1}^{T}\frac{%
\left[\exp \left( x_{t}^{\prime }\,\beta +y_{t-1}\,\gamma +\alpha \right)
\right]^{y_{it}}}{1+\exp \left( x_{t}^{\prime }\,\beta +y_{t-1}\,\gamma +\alpha
\right) }.
\end{equation*}%
With $a=\exp (\alpha )$ and $\pi _{t}(y_{t-1})=\exp [x_{i}^{\prime }\,\beta
+y_{t-1}\,\gamma ]$, we then have
\begin{align*}
f\big(y\,\big|\,y^{(0)},\,x,\,\alpha ;\,\theta \big)\ & =\prod_{t=1}^{T}%
\frac{\left[ a\,\pi _{t}(y_{t-1})\right] ^{y_{t}}}{1+a\,\pi _{t}(y_{t-1})}=%
\frac{\left[ a\,\pi _{1}(y_{0})\right] ^{y_{1}}}{1+a\,\pi _{1}(y_{0})}%
\,\prod_{t=2}^{T}\frac{\left[ a\,\pi _{t}(y_{t-1})\right] ^{y_{t}}}{1+a\,\pi
_{t}(y_{t-1})} \\
& =\frac{\left[ a\,\pi _{1}(y_{0})\right] ^{y_{1}}}{1+a\,\pi _{1}(y_{0})}%
\,\prod_{t=2}^{T}\frac{[1+a\,\pi _{t}(1-y_{t-1})]\,\left[ a\,\pi
_{t}(y_{t-1})\right] ^{y_{t}}}{[1+a\,\pi _{t}(0)][1+a\,\pi _{t}(1)]}=\kappa
(a)\,\cdot \widetilde{p}(y,a),
\end{align*}%
where we defined
\begin{align}
\kappa (a)& =\frac{1}{1+a\,\pi _{1}(y_{0})}\,\cdot \,\prod_{t=2}^{T}\,\frac{1%
}{[1+a\,\pi _{t}(0)][1+a\,\pi _{t}(1)]},  \notag \\
\widetilde{p}(y,a)& =\left[ a\,\pi _{1}(y_{0})\right] ^{y_{1}}%
\prod_{t=2}^{T}\left\{ [1+a\,\pi _{t}(1-y_{t-1})]\,\left[ a\,\pi
_{t}(y_{t-1})\right] ^{y_{t}}\right\} =\sum_{k=1}^{2T}\,a^{k-1}\,c_{k}(y).
\notag
\end{align}%
This has the exact structure of equation (\ref{EQ: Polynomial}) with $K=2T$
and $\left\vert \mathcal{Y}\right\vert =2^{T}$, so there must be at least $%
2^{T}-2T$ moment conditions.  When $T=3$,
 the lower bound on the number of
conditional moment conditions is $2^{T}-2T=2$, which is exactly
the number of moment conditions we found in Lemma~\ref{lemma:moments_p1T3} above.

\section{Examples of moment functions in other models}

\label{sec:Examples}

In this section, we briefly discuss some other fixed effects panel data
models with discrete outcomes, for which it is possible to use the approach
outlined in Section \ref{Sec: Strategy for exploring and using such moment
conditions} to derive moment conditions.  
 The goal of this section is to illustrate the broad applicability
of the moment condition approach, and it can be skipped
by a reader interested in the binary choice AR(1) panel model
only.

\subsection{Static binary choice models}

In a static panel binary response model with strictly exogenous regressors $%
X_{i}=(X_{i1}, \ldots   , \allowbreak X_{iT})$ and fixed effects $A_{i}$, the conditional
distribution of the outcomes $Y_{i}=(Y_{i1},\ldots ,Y_{iT})$ is given by
\begin{equation}
f\big(y_{i}\,\big|\,\,x_{i},\,\alpha _{i};\,\beta \big)=\prod_{t=1}^{T}\left[
F\left( x_{it}^{\prime }\,\beta +\alpha _{i}\right) \right] ^{y_{it}}\left[
1-F\left( x_{it}^{\prime }\,\beta +\alpha _{i}\right) \right] ^{1-y_{it}},
\label{StaticModel}
\end{equation}%
where $F(\cdot )$ is a cumulative distribution function. The distribution in %
\eqref{StaticModel} is a special case of \eqref{MainModelRestriction}. For
the logistic case, $F(\varepsilon )=[1+\exp (-\varepsilon )]^{-1}$, one can
use that $S_{i}=\sum_{t=1}^{T}Y_{it}$ is a sufficient statistic for $A_{i}$
to estimate $\beta $ via the conditional maximum likelihood estimator (CMLE)
that conditions on $S_{i}$, see \cite{rasch1960studies} and \cite%
{andersen1970asymptotic}. In fact, \cite{chamberlain2010binary} showed that
for $T=2$, and subject to weak regularity conditions, root-$n$-consistent
estimation of $\beta $ is only possible if $F(\varepsilon )$ is logistic.%
\footnote{%
This result is for $\varepsilon _{it}$ independent across $t$.
Generalizations that allow for dependence across $t$ are derived in \cite%
{magnac2004panel}.} This implies that non-trivial moment functions $%
m(y_{i},x_{i},\beta )$ are available for the $T=2$ static panel model if and
only if $F(\varepsilon )=[1+\exp (-\kappa \,\varepsilon +\mu )]^{-1}$, for
some constants $\kappa >0$ and $\mu \in \mathbb{R}$.

Interestingly, for the static panel model with $T=3$, one can allow for
distributions $F(\cdot )$ that are not logistic and still estimate the
parameter $\beta $ at $\sqrt{n}$ rate, that is, the $T=2$ result of \cite%
{chamberlain2010binary} does not apply in that case. In particular, for $T=3$%
, \cite{johnson2004identification} and \cite{davezies2022fixed} consider
distributions of the form $F(\varepsilon )=\left[ 1+w_{1}\exp (-\lambda
_{1}\,\varepsilon )+w_{2}\exp (-\lambda _{2}\,\varepsilon )\right] ^{-1}$,
with non-negative real-valued parameters $w_{1}$, $w_{2}$, $\lambda _{1}$, $%
\lambda _{2}$, and derive moment conditions for \eqref{StaticModel}. One can use
the machinery in Section \ref{SEC: Incidental parameter free moment
conditions} to show that the specification considered by \cite%
{johnson2004identification} and \cite{davezies2022fixed} is not the only
extension of the logistic distribution that provides moment conditions for
the $T=3$ static model. For example, consider the case where $F(\varepsilon )
$ is a mixture of two logistic distributions with the same variance:
\begin{equation}
F(\varepsilon )=\frac{\omega }{1+\exp (-\lambda \,\varepsilon +\mu _{1})}+%
\frac{1-\omega }{1+\exp (-\lambda \,\varepsilon +\mu _{2})},
   \label{LogitMixture}
\end{equation}%
where $\omega \in \lbrack 0,1]$ is a mixture weight, $\lambda >0$
parametrizes the common variance of the logistic components, and $\mu
_{1},\mu _{2}\in \mathbb{R}$ parametrize the mean of the two components.
When plugging \eqref{LogitMixture} into \eqref{StaticModel} and then
applying the procedure described in  
Section \ref{Sec: Strategy for exploring and using such moment conditions}
to derive valid
moment functions, one finds that for $\mu _{1}\neq \mu _{2}$ and $\omega \in
(0,1)$ exactly one moment function exists 
 for general values of $\beta$ and $x$
when $T=3$.
This moment condition
is given by
\begin{align}
m(y,x,\beta )& =\sum_{(t,s,r)\in \mathcal{P}}\mathbbm{1}\left\{
(y_{t},y_{s})=(0,1)\right\} \;(1-2\,y_{r})\;\mathrm{sgn}(t,s,r)\;\exp
[\lambda \,(x_{t}-x_{s})^{\prime }\beta ]  \notag \\
& \qquad \qquad   \times \Big\{\omega \,\exp [(1-y_{r})(\mu _{2}-\mu
_{1})]+(1-\omega )\,\exp [y_{r}\,(\mu _{2}-\mu _{1})]\Big\},
\label{MomentFunctionStaticLogitMixtureT3}
\end{align}%
where $\mathcal{P}$ is the set of all six permutations of $(1,2,3)$, and for
$(t,s,r)\in \mathcal{P}$ the signature of that permutation is denoted by $%
\mathrm{sgn}(t,s,r)$.

In addition,
we have found numerically that one can allow for more general finite
mixtures of logistic distributions when $T$ exceeds $3$. For example, it
appears that one can allow for a mixture of three logistics when $T$ is 4,
six when $T=5$, ten when $T=6$, and eighteen when $T$ is 7. Calculations
like the ones in Section \ref{SEC: Lower bound on number of moment
conditions}  suggest  that if the number of mixtures is $Q$, then there are $%
2^{T}-TQ-1$ non-trivial conditional moment conditions in this model. For
example, if $T$ is 9 then there will be seven moment conditions when $F$ is
a mixture of 56 logistic cumulative distribution functions. We leave it to
future research to derive these and to investigate the extent to which they
identify the common parameters of the model.

\subsection{Fixed Effect Logit AR($p$) Models With $p>1$}
\label{subsecARp}

The analysis of the dynamic panel data logit model in 
 Section~\ref{sec:Derivation} generalizes
to a model with more than one lag. Specifically, consider the model
\begin{equation}
\mathrm{Pr}\left( Y_{it}=1\,\big|\,Y_{i}^{t-1},X_{i},A_{i},\beta ,\gamma
\right) =\frac{\exp \big(X_{it}^{\prime }\,\beta +\sum_{\ell
=1}^{p}\,Y_{i,t-\ell }\,\gamma _{\ell }+A_{i}\big)}{1+\exp \big(%
X_{it}^{\prime }\,\beta +\sum_{\ell =1}^{p}\,Y_{i,t-\ell }\,\gamma _{\ell
}+A_{i}\big)},  \label{modelARp}
\end{equation}%
where $\gamma =(\gamma _{1},\ldots ,\gamma _{p})^{\prime }$. We assume that
the autoregressive order $p\in \{2,3,4,\ldots \}$ is known, and that
outcomes $Y_{it}$ are observed for time periods $t=t_{0},\ldots ,T$, with $%
t_{0}=1-p$. Thus, the total number of time periods for which outcomes are
observed is $T_{\mathrm{obs}}=T+p$, consisting of $T$ periods for which the
model applies and $p$ periods to observe the initial conditions. We maintain
the definition $Y_{i}=(Y_{i1},\ldots ,Y_{iT})$, but the initial conditions
are now described by the vector $Y_{i}^{(0)}=(Y_{i,t_{0}},\ldots ,Y_{i0})$.

Numeric calculations similar to those for the model with one lag suggest
that for a given value of $p$, one requires $T\geq 2+p$ (i.e.\ $T_{\mathrm{%
obs}}\geq 2+2p$) time periods to find conditional moment conditions that
hold without restrictions on the parameters or on the support of the
explanatory variables.\footnote{%
In addition to those general moment conditions, there are additional ones
that only become available for special values of the parameters and of the
regressors.} For
example, a model with $p=3$ lags requires a total of eight time periods;
three that provide the initial conditions for $Y_{it}$, and five for which
the model is assumed to apply. Numerical calculations also suggest that
the number of linearly independent moment conditions available for each
initial condition, $y^{(0)}$, is equal to\footnote{%
We have verified this for $p\in \{0,\ldots ,6\}$ and $T\in \{2+p,\ldots ,8\}$%
, but believe that this formula for the number of linearly independent
moments holds for all integers $p$, $T$ with $T\geq 2+p$. However, a general
proof of this conjecture is beyond the scope of this paper.} $%
2^{T}-(T+1-p)\,2^{p}$. In Appendix~\ref{sec:ARp}, we provide analytic formulas for all
the moment functions that can be obtained with $T\leq 5$. Specifically, when
$p=2$, we provide four moment functions for $T=4$ and sixteen for $T=5$. For
$p=3$, there are eight moment functions, while there are no general moment
functions when $p\geq 4$ and $T\leq 5$.
 Identification of the parameters $\beta$ and $\gamma$
for $p \leq 3$ from the moment conditions is discussed in Appendix Section~B.3 of \cite{Honore2022moment}.

The special case of an AR(2)\ logit model with fixed effects and no
explanatory variables was considered in \cite{honore2019identification}.
Numerical calculations in that paper suggested that the common parameters in
such a model are point identified for $T=3$ (i.e.\ $T_{\mathrm{obs}}=5$),
but no proof of identification was provided. Evaluating the moment functions
in  Appendix~\ref{sec:ARp} at $\beta =0$ makes it clear why $\gamma =(\gamma _{1},\gamma _{2})$
is identified and how one would estimate it. Specifically, if the outcomes $%
Y=(Y_{1},Y_{2},Y_{3})$ are generated from the AR(2) panel logit model
without explanatory variables, then we have, for all $y^{(0)}\in \{0,1\}^{2}$
and $\alpha \in \mathbb{R}$, that
\begin{equation*}
\mathbb{E}\left[ m_{y^{(0)}}(Y,\gamma _{0})\,\big|\,Y^{(0)}=y^{(0)},\,A=%
\alpha \right] =0,
\end{equation*}%
with moment functions given by
\begin{align*}
m_{(y_0,y_0)}(y,\!\gamma )&\!=\!\!\left\{
\begin{array}{@{}l@{\,}l}
1 & \text{if }y=(y_0,\!1\!\!-\!\!y_0,\!y_0), \\
e^{-\gamma _{1}} & \text{if }y=(y_0,\!1\!\!-\!\!y_0,\!1\!\!-\!\!y_0), \\
-1 & \text{if }(y_{1},\!y_{2})=(1\!\!-\!\!y_0,\!y_0), \\
0 & \text{otherwise},%
\end{array}%
\right. \!\! & m_{(1\!-y_0,y_0)}(y,\!\gamma )&\!=\!\!\left\{
\begin{array}{@{}l@{\,}l}
-1 & \text{if }(y_{1},\!y_{2})=(1\!\!-\!\!y_0,\!y_0), \\
e^{\gamma _{2}-\gamma _{1}} & \text{if }y=(y_0,\!1\!\!-\!\!y_0,\!1\!\!-\!\!y_0), \\
e^{\gamma _{2}} & \text{if }y=(y_0,\!1-y_0,\!y_0), \\
0 & \text{otherwise},%
\end{array}%
\right. 
\end{align*}
where $y_{0}\in \{0,1\}$.

The moment functions $m_{(0,0)}$ and $m_{(1,1)}$ are strictly monotone in $%
\gamma _{1}$ and do not depend on $\gamma _{2}$. Each of them therefore
identify the parameter $\gamma _{1}$. For a given value  
of $\gamma _{1}$, the moment function $m_{(0,1)}$ and $m_{(1,0)}$ are
strictly monotone in $\gamma _{2}$, and they therefore each identify the
parameter $\gamma _{2}$ once $\gamma _{1}$ has been identified. A GMM
estimator based on these moment will be root-n consistent under standard
regularity conditions.

\subsection{Panel logit AR(1) with heterogeneous time trends\label{Panel logit AR(1) with heterogeneous time trends}}

  For arbitrary regressors, no valid moment functions seem
to exist
when some of the elements of $\beta $ are replaced by fixed effects $\beta_i$. However, if one of
the explanatory variables is a linear time trend, then it is possible to
allow for the coefficient on this variable to differ arbitrarily across
observations. Specifically, consider the generalization of the model in
equation (\ref{model}) to%
\begin{equation*}
\mathrm{Pr}\left( Y_{it}=1\,\big|\,Y_{i}^{t-1},X_{i},A_{i}\right) =\frac{%
\exp \big(X_{it}^{\prime }\,\beta +Y_{i,t-1}\,\gamma +tD_{i}+A_{i}\big)}{%
1+\exp \big(X_{it}^{\prime }\,\beta +Y_{i,t-1}\,\gamma +tD_{i}+A_{i}\big)}.
\end{equation*}
By mimicking the calculations in Section \ref{SEC: On the number of moment conditions}, one finds  that at least $\ell_{\min} = 2^T - \frac{T}{3}\left( 2T^{2}-3T+7\right) $ moment conditions need to exist in this model. For $T\geq 9$ we have $\ell_{\min}>0$, that is,
  it must be the case that moment conditions for this model
exist.
See Appendix Section \ref{Additional Calculations for Time Trends}
for details. 
However,
such calculations only yield a lower bound on the number of  moment conditions.\footnote{
Numerically, we  find 126 linearly independent moment conditions for the model with additional strictly exogenous explanatory variables when $%
T=9$ . 
This is larger than
the lower bound of $\ell_{\min} = 86$  moment conditions
derived in Appendix Section~\ref{Additional Calculations for Time Trends}.
This illustrates that exploring the polynomial structure of the model probabilities
(as described in Section \ref{SEC: On the number of moment conditions}) 
does not always give the exact number of available moment conditions in binary logit models.
}
Numerically, we do not find any
moment conditions for $T \leq 8$ for general parameter values with $x_{it}
\neq 0$, but we do find two valid moment conditions for $T=8$ if $x_{it}=0$
(so there are no additional regressors in the model,   and $\gamma$ is the only common parameter). Both of these moment conditions depend on the parameter $\gamma$. In Appendix Section \ref{Additional Calculations for Time Trends}, we discuss these moment conditions.

\subsection{Extensions to dynamic ordered logit model and dynamic
multinomial logit models}

The methods described in this paper can also be applied to dynamic panel data versions of other ``textbook'' logit models.  In particular, \cite%
{honore2021dynamic} 
use the procedure outlined in Section \ref{Sec: Strategy
for exploring and using such moment conditions} to find moment
conditions for dynamic panel data {\it ordered} logit models,
and \cite{Dano2023arXiv} shows how to obtain moment conditions for dynamic panel data
 {\it multinomial}  logit models.

\section{Identification}

\label{sec:identitifactionAR1}

This section shows that the  moment conditions for the panel logit AR(1)
model in Lemma~\ref{lemma:moments_p1T3}
can be used to uniquely identify the parameters $\beta $ and $\gamma $
under appropriate support conditions on the regressor $X$. The following
technical lemma turns out to be very useful in showing this.

\begin{lemma}
\label{lemma:INVERSION} Let $K \in \mathbb{N}_0$. For every $s =
(s_1,\ldots,s_K) \in \{-,+\}^K$ let $g_s : \mathbb{R}^{K} \times \mathbb{R}
\rightarrow \mathbb{R}$ be a continuous function such that for all $%
(\beta,\gamma) \in \mathbb{R}^{K} \times \mathbb{R} $ we have

\begin{itemize}
\item[(i)] $g_s(\beta,\gamma)$ is strictly increasing in $\gamma$.

\item[(ii)] For all $k \in \{1,\ldots,K\}$: If $s_k = +$, then $%
g_s(\beta,\gamma)$ is strictly increasing in $\beta_k$.

\item[(iii)] For all $k \in \{1,\ldots,K\}$: If $s_k = -$, then $%
g_s(\beta,\gamma)$ is strictly decreasing in $\beta_k$.
\end{itemize}
Then, the system of $2^K$ equations in $K+1$ variables
\begin{align}
g_s(\beta,\gamma)=0 , \qquad \text{for all} \; s \in \{-,+\}^K ,
\label{systemEQ}
\end{align}
has at most one solution.
\end{lemma}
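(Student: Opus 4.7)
\textbf{Proof plan for Lemma~\ref{lemma:INVERSION}.} My plan is to argue by contradiction: assume two distinct solutions $(\beta,\gamma)$ and $(\beta',\gamma')$ exist, and construct a single sign vector $s\in\{-,+\}^{K}$ for which $g_{s}$ must take different values at the two points, violating $g_{s}(\beta,\gamma)=g_{s}(\beta',\gamma')=0$. The idea is to choose $s$ so that all the strict monotonicities of $g_{s}$ pull in the same direction as we move from $(\beta,\gamma)$ to $(\beta',\gamma')$.

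Concretely, I split into three cases according to the sign of $\gamma'-\gamma$. If $\gamma'>\gamma$, I define $s_{k}=+$ whenever $\beta'_{k}>\beta_{k}$ and $s_{k}=-$ whenever $\beta'_{k}<\beta_{k}$ (and choose $s_{k}$ arbitrarily when $\beta'_{k}=\beta_{k}$). For this choice, hypothesis (i) makes $g_{s}$ strictly increasing in $\gamma$, and hypotheses (ii)--(iii) make $g_{s}$ strictly increasing in every coordinate of $\beta$ along which $\beta'$ exceeds $\beta$ and strictly decreasing in every coordinate along which $\beta'$ lies below $\beta$. If $\gamma'<\gamma$, I flip every sign in the definition of $s$ so that the same monotonicity argument runs in the reverse direction. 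If $\gamma=\gamma'$ (so that necessarily $\beta\neq\beta'$), I use the same sign vector as in the first case.

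To turn this into an inequality $g_{s}(\beta',\gamma')\neq g_{s}(\beta,\gamma)$, I interpolate coordinate by coordinate: starting from $(\beta,\gamma)$, change $\gamma$ to $\gamma'$, then change each $\beta_{k}$ from $\beta_{k}$ to $\beta'_{k}$ one at a time. Each step in which the relevant coordinate actually changes produces a strict monotone move of $g_{s}$ in the same (say positive) direction by the way $s$ was chosen, and the remaining steps leave $g_{s}$ unchanged. Since $(\beta,\gamma)\neq(\beta',\gamma')$ at least one coordinate strictly changes, so summing the strict inequalities yields $g_{s}(\beta',\gamma')>g_{s}(\beta,\gamma)$ (or $<$, in the $\gamma'<\gamma$ case), contradicting that both values equal zero.

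There is no real obstacle: continuity of the $g_{s}$ is not even needed (only strict monotonicity in each variable separately), and the argument is essentially a one-sided sign-matching trick made possible because we have one $g_{s}$ for \emph{every} sign pattern in $\{-,+\}^{K}$. The only point requiring minor care is the handling of coordinates where $\beta_{k}=\beta'_{k}$, which is dispatched by noting that the coordinate-by-coordinate interpolation simply does nothing on those steps, so the strict inequality survives as long as at least one coordinate differs.
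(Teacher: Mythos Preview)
Your proposal is correct and follows essentially the same approach as the paper's proof: both argue by contradiction, choose a sign vector $s^{*}$ matched coordinatewise to the direction of $(\beta',\gamma')-(\beta,\gamma)$, and use the strict monotonicity assumptions to force $g_{s^{*}}(\beta',\gamma')\neq g_{s^{*}}(\beta,\gamma)$. The paper streamlines slightly by assuming without loss of generality that $\gamma_{1}\leq\gamma_{2}$ (avoiding your three-way case split), but the substance is identical.
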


To explain the lemma, consider the case $K=1$,\footnote{%
Note that $K=0$ is trivially allowed in Lemma~\ref{lemma:INVERSION}. We
then have $s=\emptyset $ and $g_{\emptyset }:\mathbb{R}\rightarrow \mathbb{R}
$ is a single increasing function, implying that $g_{\emptyset }(\gamma )=0$
can at most have one solution.} when we have two scalar parameters $\beta
,\gamma \in \mathbb{R}$. The lemma then requires that the two functions $%
g_{+}(\beta ,\gamma )$ and $g_{-}(\beta ,\gamma )$ are both strictly
increasing in $\gamma $, and $g_{+}$ is also strictly increasing in $\beta $%
, while $g_{-}$ is strictly decreasing in $\beta $.
Thus, $g_{+}(\beta ,\gamma )=0$ gives a solution for  $\gamma=\gamma(\beta)$ that is strictly decreasing in $\beta$, while 
$g_{-}(\beta ,\gamma )=0$ gives a solution $\gamma(\beta)$ that is strictly increasing, implying that the joint solution must be unique.

Combining the
moment conditions
in Lemma~\ref{lemma:moments_p1T3}
with the result of
Lemma \ref{lemma:INVERSION} allows us to provide sufficient conditions for
point identification in panel logit AR(1) models with $T=3$. For that purpose   we define
the sets
\begin{align*}
\mathcal{X}_{k,+}& =\{x\in \mathbb{R}^{K\times 3}\,:\,x_{k,1}\leq
x_{k,3}<x_{k,2}\;\;\text{or}\;\;x_{k,1}<x_{k,3}\leq x_{k,2}\}, \\
\mathcal{X}_{k,-}& =\{x\in \mathbb{R}^{K\times 3}\,:\,x_{k,1}\geq
x_{k,3}>x_{k,2}\;\;\text{or}\;\;x_{k,1}>x_{k,3}\geq x_{k,2}\},
\end{align*}%
for $k\in \{1,\ldots ,K\}$.
The set $\mathcal{X}_{k,+}$ is the
set of possible regressor values $x\in \mathbb{R}^{K\times 3}$ such that
either $x_{k,1}\leq x_{k,3}<x_{k,2}$ or $x_{k,1}<x_{k,3}\leq x_{k,2}$; that
is, the $k$'th regressor takes its smallest value in time period $t=1$ and
its largest value in time period $t=2$.  Conversely, the set $%
\mathcal{X}_{k,-}$ is the set of possible regressor values $x\in \mathbb{R}%
^{K\times 3}$ for which the $k$'th regressor takes its largest value in time
period $t=1$ and its smallest value in time period $t=2$.

The motivation behind the definition of those sets
is that for $x \in \mathcal{X}_{k,\pm}$ our moment functions $m^{(0/1)}(y,y_0,x,\beta ,\gamma )$ defined in Section~\ref{sec:FindSolutions}
have convenient monotonicity properties in the parameters $\beta_k$.
For example, for $m^{(0)}$ we have
\begin{align}
      \frac{\partial \, \mathbb{E}\left[ m^{(0)}(Y,Y_0,X,\beta ,\gamma )\,\Big|\,Y_{0}=y_0,\; X=x \right] }
    {\partial \beta_k}
    &>0, \;\; \text{for $x \in  \mathcal{X}_{k,+}$} \; ,
    \nonumber \\
    &<0 , \;\; \text{for $x \in  \mathcal{X}_{k,-}$} \;.
    \label{MonotonicityEmoments}
\end{align}
This is because the parameter $\beta$ appears in $m^{(0)}(y,y_0,x,\beta ,\gamma )$ only through
$\exp \left( x_{23}^{\prime }\beta \right)$, $\exp \left( x_{31}^{\prime }\beta \right)$ and $\exp \left( x_{21}^{\prime }\beta \right) $;
  $x \in  \mathcal{X}_{k,+}$ (or $x \in  \mathcal{X}_{k,-}$) guarantees that
the differences $x_{k,2}-x_{k,3}$ and  $x_{k,3}-x_{k,1}$ and $x_{k,2}-x_{k,1}$
are all $\geq 0$ ($\leq 0$), with some of them   strictly positive (negative).
The moment function
$m^{(1)}(y,y_0,x,\beta ,\gamma )$ 
has exactly the opposite monotonicity properties in $\beta$.

Next, for any vector 
$s = (s_1,\ldots,s_K) \in \{-,+\}^{K}$ we define the set $\mathcal{X}_{s}=\bigcap_{k\in
\{1,\ldots ,K\}}\mathcal{X}_{k,s_{k}}$  and  the corresponding expected moment functions, for $q,y_0 \in \{0,1\}$,
\begin{align*}
\overline{m}_{y_{0},s}^{(q)}(\beta ,\gamma )& =\mathbb{E}\left[
m^{(q)}(Y,Y_0,X,\beta ,\gamma )\,\Big|\,Y_{0}=y_{0},\;X\in \mathcal{X}_{s}\right] .
\end{align*}
Because $\mathcal{X}_{s}$ is the intersection of the sets $\mathcal{X}_{k,\pm}$, the
expected moment functions $\overline{m}_{y_{0},s}^{(0/1)}(\beta ,\gamma )$ 
have monotonicity properties with respect to all the elements of $\beta$ specified by
the sign vector $s = (s_1,\ldots,s_K)$.
For example, \eqref{MonotonicityEmoments} implies that $\overline{m}_{y_0,s}^{(0)}(\beta ,\gamma )$
is strictly increasing in  $\beta_k$ if $s_k=+$, and strictly decreasing in $\beta_k$ if $s_k=-$, for all $k \in \{1,\ldots,K\}$.

\begin{theorem}
\label{th:id1} Let $q,y_{0}\in \{0,1\}$. Let the
outcomes $Y=(Y_{1},Y_{2},Y_{3})$ be generated from model \eqref{model} with $%
T=3$ and true parameters $\beta _{0}$ and $\gamma _{0}$. Furthermore, for
all $s\in \{-,+\}^{K}$ assume that
\begin{equation*}
\mathrm{Pr}\left( Y_{0}=y_{0},\;X\in \mathcal{X}_{s}\right) >0,
\end{equation*}
and that the expected moment function $\overline{m}_{y_{0},s}^{(q)}(\beta ,\gamma )$ is well-defined.\footnote{
We could always guarantee $\overline{m}_{y_{0},s}^{(q )}(\beta ,\gamma )$ to be well-defined
by modifying the definition of the set $\mathcal{X}_{s}$ to only contain bounded regressor values.
}
Then, the solution to
\begin{equation}
\overline{m}_{y_{0},s}^{(q )}(\beta ,\gamma )=0 \qquad \text{for all}%
\;s\in \{-,+\}^{K} 
  \label{SystemTheoremID}
\end{equation}%
is unique and given by $(\beta _{0},\gamma _{0})$. Thus, the parameters $\beta_0$ and $\gamma_0$ are point-identified
\end{theorem}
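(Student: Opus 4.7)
\textbf{Proof plan for Theorem \ref{th:id1}.} The plan is to invoke Lemma \ref{lemma:INVERSION} after choosing an appropriate sign convention. By Lemma \ref{lemma:moments_p1T3} and the tower property of conditional expectation, $(\beta_0,\gamma_0)$ already satisfies \eqref{SystemTheoremID}, so the entire content of the theorem is uniqueness. First I would focus on the pilot case $(y_0,\xi)=(0,b)$ and set $g_s(\beta,\gamma):=\overline{m}_{0,s}^{(b)}(\beta,\gamma)$. The remaining three cases will be reduced to the pilot case: for $\xi=a$, the explicit formulas show that every monotonicity sign is reversed relative to $\xi=b$, so one instead takes $g_s:=-\overline{m}_{0,s}^{(a)}$; for $y_0=1$, the symmetry $m_1^{(a)}(y,x,\beta,\gamma)=m_0^{(b)}(1-y,-x,\beta,\gamma)$ and $m_1^{(b)}(y,x,\beta,\gamma)=m_0^{(a)}(1-y,-x,\beta,\gamma)$ noted above Lemma \ref{lemma:moments_p1T3} reduces the problem to $y_0=0$ after the relabeling $\widetilde X=-X$, under which $\mathcal{X}_{k,+}$ and $\mathcal{X}_{k,-}$ swap roles.

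Next I would verify the three hypotheses of Lemma \ref{lemma:INVERSION} for the pilot $g_s$. Inspecting the explicit definition of $m_0^{(b)}$ in Section \ref{sec:moments_p1T3}, the parameter $\gamma$ appears only in the branch $y=(1,0,1)$, so differentiating under the expectation yields
\[
\frac{\partial g_s}{\partial \gamma}(\beta,\gamma)=\mathbb{E}\!\left[\exp(\gamma+X_{21}'\beta)\,\mathbbm{1}(Y=(1,0,1))\,\big|\,Y_0=0,\,X\in\mathcal{X}_s\right],
\]
which is strictly positive because $\Pr(Y_0=0,X\in\mathcal{X}_s)>0$ by assumption and the logit model assigns positive probability to every realization of $Y$ given $(Y_0,X,A)$. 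For $\beta_k$, the three $\beta$-dependent branches of $m_0^{(b)}$ carry coefficients $X_{23,k}$, $X_{31,k}$ and $X_{21,k}$ in front of strictly positive quantities. If $s_k=+$, the definition of $\mathcal{X}_{k,+}$ forces $x_{k,1}\leq x_{k,3}\leq x_{k,2}$ with $x_{k,1}<x_{k,2}$ strictly, so all three differences are non-negative and $X_{21,k}$ is strictly positive on $\mathcal{X}_s$; hence $\partial_{\beta_k}g_s>0$. If $s_k=-$, the analogous inequalities reversed give $\partial_{\beta_k}g_s<0$. These verifications use exactly the same kind of argument already sketched by the authors around equation \eqref{MonotonicityEmoments}.

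Having verified hypotheses (i)--(iii) of Lemma \ref{lemma:INVERSION}, the lemma immediately yields uniqueness of the solution to the system \eqref{SystemTheoremID}, which combined with the fact that $(\beta_0,\gamma_0)$ is a solution gives point identification. The main obstacle is not analytical but organizational: one must make sure that a single sign choice of $g_s=\pm\overline{m}_{y_0,s}^{(\xi)}$ delivers all three monotonicity hypotheses simultaneously for every fixed $(y_0,\xi)$. This is where inspecting the placement of $\gamma$ inside each of $m_0^{(a)},m_0^{(b)},m_1^{(a)},m_1^{(b)}$ pays off: once the $(0,b)$ case is settled, the $(0,a)$ case needs a global sign flip and the two $y_0=1$ cases are obtained by the $Y\mapsto 1-Y$, $X\mapsto -X$, $A\mapsto A-\gamma$ reparameterization, so no genuinely new calculation is required.
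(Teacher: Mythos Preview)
Your proposal is correct and follows essentially the same route as the paper: verify the monotonicity hypotheses of Lemma~\ref{lemma:INVERSION} for the pilot case $(y_0,\xi)=(0,b)$, invoke that lemma for uniqueness, and use Lemma~\ref{lemma:moments_p1T3} to exhibit $(\beta_0,\gamma_0)$ as a solution. The only cosmetic difference is in how the remaining three cases are dispatched---the paper composes with sign changes in the arguments (taking $g_s$ equal to $\overline m^{(a)}_{0,s}(-\beta,-\gamma)$, $\overline m^{(a)}_{1,s}(-\beta,\gamma)$, $\overline m^{(b)}_{1,s}(\beta,-\gamma)$), whereas you use a global sign flip for $\xi=a$ and the $(Y,X)\mapsto(1-Y,-X)$ model symmetry for $y_0=1$; both reductions are valid and amount to the same verification.
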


The proof of the theorem is provided in the appendix.
Note that only one of the moment functions $m^{(0)}$ or $m^{(1)}$ is required to derive identification
 in Theorem~\ref{th:id1},
and  only one of the initial conditions $y_0 \in \{0,1\}$ needs to be observed.
The key assumption in Theorem~\ref{th:id1}
is that we have enough variation in the observed regressor values $X=(X_1,X_2,X_3)$
to satisfy the condition $\mathrm{Pr}\left( Y_{0}=y_{0},\;X\in \mathcal{X}_{s}\right) >0$, for all $s \in \{+,-\}^K$.

Theorem~\ref{th:id1} achieves identification of $\beta$ and $\gamma$ via conditioning on the sets of regressor values $\mathcal{X}_{s}$, which all have positive Lebesgue measure. By contrast, the conditional likelihood approach in
\cite{honore2000panel} conditions on the set  $x_2=x_3$, which has zero  Lebesgue measure,
and therefore also often zero probability measure, implying that the resulting estimates for $\beta$ and $\gamma$ usually converge
at a rate slower than root-$n$.
In our approach here,
using the sample analogs of the moment conditions $\overline{m}_{y_{0},s}^{(0/1)}(\beta ,\gamma ) = 0$  for $s \in \{+,-\}^K$,
we   immediately obtain GMM estimates for $\beta$ and $\gamma$
that are   root-n consistent under standard  regularity conditions.

However, in practice, we do not actually recommend estimation via the moment conditions in Theorem~\ref{th:id1}, because by conditioning
on $X\in \mathcal{X}_{s}$ these moment conditions still only use a small subset of the available information in the data.
Instead, many more unconditionally valid moment conditions for $\beta$ and $\gamma$ can be obtained from  Lemma~\ref{lemma:moments_p1T3} (or from Theorem~\ref{th:AR1moments} 
below for $T>3$), resulting in potentially much more
efficient estimators for $\beta$ and $\gamma$, and Section~\ref{sec:Emp} describes how we implement such estimators in practice.
Nevertheless, from a theoretical perspective, the identification result in
Theorem~\ref{th:id1} is important, because it comprises a significant improvement over
existing results for dynamic panel logit models with explanatory variables.

\section{Panel logit AR(1) model for general $T \geq 3$}
\label{SEC: Panel logit AR(1) model for general T>3}

In Section~\ref{sec:FindSolutions} above we already found analytic
formulas for valid moment functions that are free
of the fixed effects for the panel logit AR(1) model with three time periods. In this section we discuss various generalizations
of this result, most importantly to $T>3$ time periods.
 Before presenting those  positive results,
we first briefly discuss a negative result for $T=2$
time periods.

\subsection{Impossibility of moment conditions when $T=2\label{Moments when T=2}$}

Here, we argue that it is not possible to derive moment conditions for model (\ref{model}) on the basis of two time
periods plus the initial condition, $y_{0}$.
If one could  construct such moment conditions for $T=2$ that
hold conditional on the individual specific effects $A$, then the corresponding moment functions
$m(y,y_0,x,\beta,\gamma) $ would satisfy
\begin{align}
\sum_{y \in \{ 0,1\}^2} \, p(y,y_0,x,\beta,\gamma,\alpha) \;
m(y,y_0,x,\beta,\gamma) = 0,
   \label{MomentT2}
\end{align}
for all $\alpha \in \mathbb{R}$.
In the limit $\alpha \rightarrow \infty$ the model probabilities become zero, except
for $p((1,1),y_0,x,\beta,\gamma,\alpha) \rightarrow 1$.  This implies
 $m((1,1),y_0,x,\beta,\gamma) =0$. Analogously, in the limit $\alpha \rightarrow -\infty$
we have $p((0,0),y_0,x,\beta,\gamma,\alpha) \rightarrow 1$, which implies $m((0,0),y_0,x,\beta,\gamma) =0$.
Thus, only $m((0,1),y_0,x,\beta,\gamma)$ and $m((1,0),y_0,x,\beta,\gamma)$ can be non-zero,
and \eqref{MomentT2} therefore implies that
\begin{align}
     \frac{m((0,1),y_0,x,\beta,\gamma)} {m((1,0),y_0,x,\beta,\gamma)}
     &=  - \frac{p((1,0),y_0,x,\beta,\gamma,\alpha)} {p((0,1),y_0,x,\beta,\gamma,\alpha)}
 \nonumber    \\
     &=- \exp
\left( \left( x_{1}-x_{2}\right) ^{\prime }\beta +\gamma y_{0}\right)
 \frac{1+\exp \left( x_{2}^{\prime }\beta +\alpha \right) }
   {1+\exp \left( x_{2}^{\prime }\beta +\gamma +\alpha \right) } .
 \label{MM T=2}
\end{align}
Unless $\gamma =0$, the right hand side of \eqref{MM T=2}
   will  always have a non-trivial dependence on $\alpha$,
implying that no  moment conditions can be constructed for $T=2$ (that are valid conditional on arbitrary $%
A=\alpha $). For $\gamma =0$ equation (\ref{MM T=2}) yields the moment
conditions implied by \cite{Rasch60}'s conditional likelihood.

The fact that there are no moment conditions when $T=2$ is consistent with the non-identification result in \cite{Chamberlain2023SERIES}.

\subsection{Master lemma for obtaining all moment conditions}
\label{subsec:MasterLemma}

One can work out analytic moment functions for
model (\ref{model})  with $T=4$
and $T=5$ using the same derivation method described in
Section~\ref{sec:FindSolutions} for  $T=3$.
These are relatively brute force calculations that only
require limited human input and creativity. 
However, once those analytic moment functions are
obtained, one can move on to study their common structure,
which leads to the following lemma that allows us to derive  all  the valid
moment conditions for the panel logit AR(1) model
for an arbitrary number of time periods. 

Before presenting the lemma, we introduce some
additional notation:
The cumulative distribution function 
of the logistic distribution is given by
$\Lambda(\xi):=[1+\exp(-\xi)]^{-1}$. 
In addition, we define  
the cyclical decrement function $\delta : \{1,2,3\} \rightarrow \{1,2,3\}$ by
$$
     \delta(t) := \left\{ \begin{array}{ll}   
       3 & \text{for } t=1, \\
       1 & \text{for } t=2, \\
       2 & \text{for } t=3.
     \end{array} \right. 
$$

\begin{lemma}
   \label{lemma:Markov}
    Let $\widetilde Y_1, \widetilde  Y_2, \widetilde  Y_3 \in \{0,1\}$ 
    be binary random variables, and $W_1, W_2, W_3$
    be random variables (or vectors)  such that
$
    W_1 
    \rightarrow  \widetilde Y_1
        \rightarrow  W_2
        \rightarrow  \widetilde  Y_2   
       \rightarrow   W_3
      \rightarrow \widetilde  Y_3 
$
is a Markov chain,
conditional on the random vector $(X,A)$.\footnote{
The Markov chain assumption means that the 
density of 
$(W_1,\widetilde Y_1, W_2, \widetilde  Y_2, W_3, \widetilde  Y_3)$, conditional on $(X,A)$,
can be written as a product  
$ f_{\widetilde  Y_3|W_3,X,A}  \, 
  f_{W_3|\widetilde Y_2,X,A}   \, 
  f_{\widetilde  Y_2|W_2,X,A}  \, 
  f_{W_2|\widetilde Y_1,X,A} 
   f_{\widetilde  Y_1|W_1,X,A}  \, 
  f_{W_1|X,A} $.
}
Assume furthermore that
$p_t(\widetilde y_t \, |\, w_t,x,\alpha):={\rm Pr}\big( \widetilde Y_t = \widetilde y_t \, \big| \, W_t=w_t, 
\, X=x,
\, A=\alpha  \big)$
satisfies
$0<p_t(\widetilde y_t \, |\,w_t, x,\alpha)<1$,
for all $\widetilde y_t$, $w_t$, $x$, $\alpha$,
and
$t \in \{1,2,3\}$.
Then, for $q \in \{0,1\}$, the function
\begin{align*}
  &  m^{(q)}(w_1,\widetilde y_1,w_2,\widetilde y_2,w_3,\widetilde y_3,x,\alpha) 
    :=    -   \mathbbm{1}\left\{ \widetilde y_1 = q \right\} 
   \\ &     \qquad     
   +
        \mathbbm{1}\left\{ \widetilde y_2 = q \right\} \,        
      \exp  \left( \frac 1 2 
      \sum_{t=1}^3 \bigg\{
     \Lambda^{-1}\Big[ 
     p_{\delta(t)} \left(\widetilde y_t \, \big| \,w_{\delta(t)},x,\alpha \right)   \Big]
     -
             \Lambda^{-1}\Big[ p_t \left(\widetilde y_{t} \, \big| \,w_t,x,\alpha \right)  \Big]
             \bigg\}
         \right)
\end{align*}
satisfies
$
          \mathbb{E}\left[ m^{(q)}(W_1,\widetilde Y_1,W_2,\widetilde Y_2,W_3,\widetilde Y_3,X,A) \, \Big| \, W_1, \, X, \, A  \right] = 0.
$
\end{lemma}

The proof is given in Appendix~\ref{app:proofs}.
Note that the vector of conditioning variables $(X,A)$
is only included in Lemma~\ref{lemma:Markov} to better connect the lemma to our panel AR(1) model, but
for the mathematical result of the lemma this vector 
$(X,A)$ is actually irrelevant (no assumptions are imposed on these conditioning variables, all probability statements are conditional on $(X,A)$),
and the lemma may be easier read and understood
by initially ignoring all occurrences of $(X,A)$ and $(x,\alpha)$.
Furthermore, when applying  Lemma~\ref{lemma:Markov} to the $T=3$ panel AR(1) model of Section~\ref{sec:Derivation}, we simply have 
$(\widetilde Y_1, \widetilde  Y_2, \widetilde  Y_3)
=( Y_1,  Y_2,  Y_3)$
and 
$(W_1, W_2, W_3)
=( Y_0,  Y_1,  Y_2)$,
but the more general notation in the lemma
is convenient when generalizing the results
to models with $T>3$.

The assumptions imposed in Lemma~\ref{lemma:Markov} 
are relatively weak. In particular, 
the outcomes
 $\widetilde Y_1$, $\widetilde  Y_2$, $\widetilde  Y_3$
are not assumed to be generated from a logit model. 
However, the result of Lemma~\ref{lemma:Markov}
is in general equally weak, because the
moment functions $m^{(q)}(w_1,\widetilde y_1,w_2,\widetilde y_2,w_3,\widetilde y_3,x,\alpha)$ provided by the lemma
still depend on the individual specific effects $\alpha$,
that is, the lemma in general does {\it not} 
deliver the type of moment conditions
\eqref{MomentsConditional} that we are interested
in this paper. We find it nevertheless useful
to state the lemma in this weak form, because 
it provides some understanding for why the logit assumption
is important for obtaining valid moment conditions
that are free of the fixed effects.

For a binary choice model with single index $z_t(W_t,X) \in \mathbb{R}$ and additive fixed effects $A \in \mathbb{R}$ 
we have $\widetilde Y_t = \mathbbm{1}\{z_t(W_t,X) + A + \varepsilon_t \geq 0\}$, for  $t \in \{1,2,3\}$.
If, in addition, we assume a logistic distribution for the random shock
$\varepsilon_t$, then we obtain, for $\widetilde y \in \{0,1\}$,
\begin{align}
 p_t(\widetilde y \, |\, w_t,x,\alpha) 
   =  \Lambda\big\{ (2 \widetilde y -1) \, [ z_t(w_t,x) + \alpha ]    \big\},
    \label{GeneralLogitBinaryChoice}
\end{align}   
which implies that for all $s,t \in \{1,2,3\}$,
$$\Lambda^{-1}\Big[ 
     p_{s} \left(\widetilde y \, \big| \,w_{s},x,\alpha \right)   \Big]
     -
             \Lambda^{-1}\Big[ p_t \left(\widetilde y \, \big| \,w_t,x,\alpha \right)  \Big]
   = (2 \widetilde y -1) \, [ z_{s}(w_{s},x) - z_t(w_t,x) ]
$$
does not depend on the fixed effects $\alpha$. 
For this logistic specification with additive fixed effects
we therefore find that the moment functions
$m^{(q)}(w_1,\widetilde y_1,w_2,\widetilde y_2,w_3,\widetilde y_3,x,\alpha) $ in Lemma~\ref{lemma:Markov} do not depend on the
fixed effects $\alpha$, and can be written as\footnote{
Here we also use that
$\sum_{t=1}^3    \left[ z_{\delta(t)}(w_{\delta(t)},x) - z_t(w_t,x) \right] = 0$.
}
\begin{align}
    & m^{(q)}(w_1,\widetilde y_1,w_2,\widetilde y_2,w_3,\widetilde y_3,x) 
\nonumber \\ & \qquad
=    -   \mathbbm{1}\left\{ \widetilde y_1 = q \right\}  +
        \mathbbm{1}\left\{ \widetilde y_2 = q \right\} \,        
      \exp  \left\{  
      \sum_{t=1}^3  \widetilde y_t \, \Big[ z_{\delta(t)}(w_{\delta(t)},x) - z_t(w_t,x) \Big]
         \right\}.
   \label{MomentFunctionLogitGeneral}         
\end{align}
For the case $T=3$,
$(\widetilde Y_1, \widetilde  Y_2, \widetilde  Y_3)
=( Y_1,  Y_2,  Y_3)$,
$(W_1, W_2, W_3)
=( Y_0,  Y_1,  Y_2)$,
and $ z_{t}(w_{t},x) = y_{t-1} \, \gamma_0 + x_t' \, \beta_0$
it is easy to verify that
$m^{(q)}(w_1,\widetilde y_1,w_2,\widetilde y_2,w_3,\widetilde y_3,x)$ in
\eqref{MomentFunctionLogitGeneral} 
is equal to $m^{(q)}(y,y_0,x,\beta_0,\gamma_0)$
in display~\eqref{SolutionMomentsT3} above, that is,
Lemma~\ref{lemma:Markov}  delivers 
the moment functions derived 
for the $T=3$ dynamic logit model
in Section~\ref{sec:FindSolutions} as a special case.

\subsection{Moment conditions for $T \geq 3$}
\label{Section: T greater that 3}

We now discuss how the
moment functions for $T=3$ generalize to  more than three time periods (after the initial $y_{0}$).
We have already argued above that Lemma~\ref{lemma:Markov} is useful for our purposes for logit models of the form \eqref{GeneralLogitBinaryChoice} where it
delivers the moment functions in 
\eqref{MomentFunctionLogitGeneral} that do not depend on the fixed effects. We now apply
those results to the fixed
effect logit AR(1) model
with an arbitrary number
of time periods $T \geq 3$ by 
setting
$(\widetilde Y_1, \widetilde  Y_2, \widetilde  Y_3)
=( Y_t,  Y_s,  Y_r)$
and 
$(W_1, W_2, W_3)
=( Y_{t-1},  Y_{s-1},  Y_{r-1})$, 
for any triplet of time periods $t,s,r\in \{1,2,\ldots ,T\}$ that satisfy $t<s<r$.
Note that for this choice the Markov chain
assumption in Lemma~\ref{lemma:Markov}
is satisfied, that is, conditional on $(X,A)$,
$
   Y_{t-1}
    \rightarrow   Y_t
        \rightarrow  Y_{s-1}
        \rightarrow   Y_{s}   
       \rightarrow   Y_{r-1}
      \rightarrow    Y_r 
$
indeed constitutes a Markov chain according to model \eqref{model}.
Furthermore, in that model, the distribution
of $Y_{t}$ conditional on
$Y_{t-1}$, $X$, $A$ is indeed of 
the logistic form \eqref{GeneralLogitBinaryChoice}
with 
$z_t(w_t,x)   = 
y_{t-1} \, \gamma_0 + x_t' \, \beta_0$
for all time periods~$t$.

Making the unknown parameter dependence
explicit, 
we now define the single index for time period $t$ as $z_{t}(y,y_{0},x,\beta
,\gamma )=x_{t}^{\prime }\,\beta +y_{t-1}\,\gamma $, and we also define the
corresponding pairwise differences $z_{ts}(y,y_{0},x,\beta ,\gamma
)=z_{t}(y,y_{0},x,\beta ,\gamma )-z_{s}(y,y_{0},x,\beta ,\gamma )$. Then,
for triples of time periods $t,s,r\in \{1,2,\ldots ,T\}$ with $t<s<r$,
the moment function in \eqref{MomentFunctionLogitGeneral} 
can be written more explicitly as
\begin{align}
m^{(0)(t,s,r)}(y,y_0,x,\beta ,\gamma )& =\left\{
\begin{array}{ll}
\exp \left[ z_{sr}(y,y_{0},x,\beta ,\gamma )\right] -1 & \text{if }%
(y_{t},y_{s},y_{r})=(0,0,1), \\
-1 & \text{if }(y_{t},y_{s})=(0,1), \\
\exp \left[ z_{rt}(y,y_{0},x,\beta ,\gamma )\right]  & \text{if }%
(y_{t},y_{s},y_{r})=(1,0,0), \\
\exp \left[ z_{st}(y,y_{0},x,\beta ,\gamma )\right]  & \text{if }%
(y_{t},y_{s},y_{r})=(1,0,1), \\
0 & \text{otherwise},%
\end{array}%
\right.
\nonumber \\[5pt]
m^{(1)(t,s,r)}(y,y_0,x,\beta ,\gamma )& =\left\{
\begin{array}{ll}
\exp \left[ z_{ts}(y,y_{0},x,\beta ,\gamma )\right]  & \text{if }%
(y_{t},y_{s},y_{r})=(0,1,0), \\
\exp \left[ z_{tr}(y,y_{0},x,\beta ,\gamma )\right]  & \text{if }%
(y_{t},y_{s},y_{r})=(0,1,1), \\
-1 & \text{if }(y_{t},y_{s})=(1,0), \\
\exp \left[ z_{rs}(y,y_{0},x,\beta ,\gamma )\right] -1 & \text{if }%
(y_{t},y_{s},y_{r})=(1,1,0), \\
0 & \text{otherwise}.%
\end{array}%
\right.   
\label{MomentsGeneralT}
\end{align}%
For $T=3$ and $(t,s,r)=(1,2,3)$, these moment functions are exactly those
calculated in Section~\ref{sec:FindSolutions} above.
For general $T \geq 3$
and triplets $(t,s,r)$
we can apply Lemma~\ref{lemma:Markov}, conditional also on $Y_0,Y_{1},\ldots ,Y_{t-1}$,
to  obtain the following theorem.
\begin{theorem}
\label{th:AR1moments} If the outcomes $Y$ are generated from the panel
logit AR(1) model with $T\geq 3$ and true parameters $\beta _{0}$ and $%
\gamma _{0}$, then we have for all $t,s,r\in \{1,2,\ldots ,T\}$ with $t<s<r$%
, and for all $q \in \{0,1\}$, $y^{(t)} \in \{0,1\}^t$, $x\in \mathbb{R}^{K\times T}$,  
  $\alpha \in \mathbb{R}$,    that
\begin{align*}
\mathbb{E}\left[  m^{(q)(t,s,r)}(Y,Y_0,X,\beta
_{0},\gamma _{0})\,\big|\,(Y_{0},\, Y_{1},\ldots ,Y_{t-1})=y^{(t)}, \,X=x,\,A=\alpha \right] & =0.
\end{align*}

\end{theorem}
The proof is given in Appendix~\ref{app:proofs}, but as argued above, the theorem
 really is an immediate corollary of Lemma~\ref{lemma:Markov}.
Instead of conditioning on $Y_{1},\ldots ,Y_{t-1}$, we can also multiply 
the moment function with an arbitrary 
function of $Y_{1},\ldots ,Y_{t-1}$. Namely, by applying Theorem~\ref{th:AR1moments}
and the law of iterated expectations, we find, for
any function $w:\{0,1\}^{t-1}\rightarrow \mathbb{R}$, that
\begin{align}
\mathbb{E}\left[ w(Y_{1},\ldots ,Y_{t-1}) \; m^{(q)(t,s,r)}(Y,Y_0,X,\beta
_{0},\gamma _{0})\,\big|\,Y_{0}=y_{0},\,X=x,\,A=\alpha \right] & =0.
   \label{allAR1moments}
\end{align}
From Section~\ref{SEC: On the number of moment conditions} we know that,
for any fixed value of the initial condition $y_0$, 
there are at least $\ell=2^T-2T$ linearly independent moment conditions
available for our AR(1) logit model with $T$ time periods.
It turns out for $\gamma \neq 0$ this is exactly the
correct number of linearly independent moment conditions in this model. 
 In the 2020 working paper version of the current paper we conjectured this,
and subsequent papers by
\cite{kruiniger2020further},
\cite{dobronyi2021identification},
and \cite{Dano2023arXiv}
have shown that this is indeed the case.

Equation \eqref{allAR1moments}
provides all of the $\ell=2^T-2T$ available valid moment
functions for this model, but not all those moment functions 
$w(Y_{1},\ldots ,Y_{t-1})\,m^{(q)(t,s,r)}(Y,Y_0,X,\beta
_{0},\gamma)$ are  linearly independent,
that is, some of them
can be written
as linear combinations
(with coefficients
that depend on $x$, $\beta$, $\gamma$) of the others.
However, if we restrict ourselves to $r=T$, then we
 have verified numerically that
a linearly independent basis is obtained. 
Note that once we fix $r=T$, then, for given $y_0$, 
we can still choose $q \in \{0,1\}$,
$w:\{0,1\}^{t-1}\rightarrow \mathbb{R}$,
and $(t,s)$, with $1\leq t<s<T$. The total number of basis elements
is therefore equal to
\begin{align*}
   \ell =  \underbrace{ 2 }_{\text{$q \in \{0,1\}$}}  \times
     \underbrace{ \sum_{t=1}^{T-2} \sum_{s=t+1}^{T-1} }_{\text{allowed $t,s$ values}}
    \underbrace{ 2^{t-1} }_{\begin{minipage}{3.2cm} \center \scriptsize
    \setstretch{1.0}
    number of linearly independent functions $w(y_{1},\ldots ,y_{t-1})$
      \end{minipage}}
    =   \sum_{t=1}^{T-2} \, 2^t \, (T-t-1) = 2^T - 2T,
\end{align*}
as claimed above.\footnote{%
We consider $\gamma \neq 0$ here. For $\gamma=0$ we have a static panel logit model, and in that case $T-1$ additional moment conditions
become available, bringing the total number of available moments to $2^T - T -1$. 
The first-order conditions
of the conditional likelihood in \cite{rasch1960studies} and \cite{andersen1970asymptotic} are linear combinations of these moment functions.}

\subsubsection{Unbalanced panels and missing time periods}
The only regressor and outcome values that enter into
the moment functions $m^{(q)(t,s,r)} $
are $(x_t,x_s,x_r)$ and $(y_{t-1},y_t,y_{s-1},y_s,y_{r-1},y_t)$.\footnote{Of course, $y_t$ coincides with $y_{s-1}$ if $t=s-1$,
and $y_s$ coincides with $y_{r-1}$ if $s=r-1$.}
Thus, as long as those variables are observed we can evaluate $m^{(q)(t,s,r)} $.
The moment conditions for $T>3$ can therefore also be applied to unbalanced panels where regressors and outcomes
are not observed in all time periods,  provided that the
occurrence of missing values
is independent of the outcomes $Y$, conditional on the regressors $X$ and the individual-specific effects $A$. The data in our empirical illustration are indeed unbalanced, and  in
 Section~\ref{sec:Emp} we discuss how to combine the moment functions for unbalanced panels.

\subsubsection{Relation to \cite{kitazawa2013exploration,kitazawa2016root}}

The first paper to obtain moment conditions for the dynamic
panel logit model without imposing restrictions on the
covariate values is the working paper by
\cite{kitazawa2013exploration}, which was recently published
(\citealt{kitazawa2022transformations}).
That paper defines
\begin{align}
    U_t &= y_t + (1-y_t) y_{t+1} - (1-y_t) y_{t+1} \exp(-\beta \Delta x_{t+1}) - \delta y_{t-1} (1-y_t)y_{t+1}  \exp(-\beta \Delta x_{t+1}),
  \nonumber \\
    \hbar U_t &= U_t - y_{t-1} - \tanh\left[ \frac{ -\gamma y_{t-2} + \beta (\Delta x_t + \Delta x_{t+1}) } 2 \right]  
    (U_t + y_{t-1} - 2U_t y_{t-1}) ,
 \nonumber  \\
    \Upsilon_t  &= y_t y_{t+1} + y_t (1-y_{t+1}) \exp(\beta \Delta x_{t+1}) + \delta (1-y_{t-1}) y_t (1-y_{t+1}) \exp(\beta \Delta x_{t+1}) ,
 \nonumber  \\
    \hbar   \Upsilon_t  &= \Upsilon_t  - y_{t-1} -  \tanh\left[ \frac{ \gamma (1-y_{t-2}) + \beta (\Delta x_t + \Delta x_{t+1}) } 2 \right]
     (\Upsilon_t + y_{t-1} - 2\Upsilon_t y_{t-1}) ,
     \label{KitazawaMomentFunctions}
\end{align}
where $\delta=e^\gamma -1$ and $\Delta x_t=x_t - x_{t-1}$.
The paper then shows  that, for $t \in \{2,\ldots,T-1\}$,\footnote{This is written here in our conventions for $t$ and $T$.
}
the functions $ \hbar U_t$ and $  \hbar   \Upsilon_t $ are valid moment functions, in the sense of \eqref{MomentsConditional}.
\cite{kitazawa2016root} uses the same moment conditions, but also includes time dummies in the model, which in our notation
are included in the parameter vector $\beta$ (one just needs to define the regressors $x_t$ as appropriate dummy variables).

Those definitions  look quite different to our moment functions above, but one can show that
\begin{align*}
    \hbar U_2 &=  \left\{  \tanh\left[ \frac{ -\gamma y_{0} + \beta (\Delta x_2 + \Delta x_{3}) } 2 \right]  -1  \right\} \,  m^{(0)}(y,y_0,x,\beta ,\gamma ) ,
    \\
    \hbar  \Upsilon_2 &=  \left\{  \tanh\left[ \frac{ \gamma (1-y_{0}) + \beta (\Delta x_2 + \Delta x_{3}) } 2 \right]   +1  \right\} \,  m^{(1)}(y,y_0,x,\beta ,\gamma ) .
\end{align*}
Thus, apart from a  rescaling (with a non-zero function of the parameters and conditioning variables), the moment functions of
\cite{kitazawa2013exploration} coincide with our moment functions for AR(1) models with $T=3$.
However, the complete set of  
moment conditions for $T>3$
in Theorem~\ref{th:AR1moments}  is new.

\subsubsection{Relation to other existing results}

\cite{honore2000panel} observe that with \eqref{model} and $T=3$, the
conditional likelihood function that conditions on $Y=y_{0}$, $Y_{3}=y_{3}$,
and $Y_{1}+Y_{2}=1$,
\begin{equation*}
\ell _{y_{0},y_{3}}(y,x,\beta ,\gamma )= \,\mathrm{Pr}\left( Y=y\,\big|%
\,Y_{0}=y_{0},\,Y_{1}+Y_{2}=1,\,Y_{3}=y_{3},\,X=x,\beta ,\gamma \right) ,
\end{equation*}
does not depend on $\alpha $, when $x=(x_{1},x_{2},x_{2})$ (so the
explanatory variables are the same in the last two periods). The
corresponding scores are
\begin{align}
\frac{\partial \ell _{0,0}(y,x,\beta ,\gamma )}{\partial \gamma }& =0,
\notag \\
\frac{\partial \ell _{0,0}(y,x,\beta ,\gamma )}{\partial \beta }& =\frac{%
x_{12}}{1+\exp \left( x_{12}^{\prime }\beta \right) }\left[ \frac{%
m^{(1)}(y,0,x,\beta ,\gamma )+\exp \left( x_{12}^{\prime }\beta -\gamma
\right) \,m^{(0)}(y,0,x,\beta ,\gamma )}{\exp \left( -\gamma \right) -1}%
\right] ,  \label{FOClikelihood1}
\end{align}%
and
\begin{align}
\left(
\begin{array}{c}
\displaystyle\frac{\partial \ell _{0,1}(y,x,\beta ,\gamma )}{\partial \gamma
} \\[3pt]
\displaystyle\frac{\partial \ell _{0,1}(y,x,\beta ,\gamma )}{\partial \beta }%
\end{array}%
\right) & =\left(
\begin{array}{c}
\displaystyle-1\nonumber \\[3pt]
\displaystyle x_{12}%
\end{array}%
\right) \frac{1}{1+\exp \left( x_{12}^{\prime }\beta -\gamma \right) }
\notag \\
& \qquad \qquad \times \left[ \frac{m^{(1)}(y,0,x,\beta ,\gamma )+\exp
\left( x_{12}^{\prime }\beta \right) \,m^{(0)}(y,0,x,\beta ,\gamma )}{\exp
(\gamma )-1}\right] ,  \label{FOClikelihood2}
\end{align}%
where $m^{(0)}$ and $m^{(1)}$ are defined in {\eqref{SolutionMomentsT3}. The results for $y_{0}=1$
are analogous. Thus, the score functions of the conditional likelihood in
\cite{honore2000panel} are linear combinations of our moment conditions when
$x_{2}=x_{3}$. The conditional likelihood estimation discussed in \cite%
{cox1958regression} and \cite{Chamberlain1985} are special cases of this
without regressors ($x_{1}=x_{2}=x_{3}=0$).

\cite{Hahn2001} considers model \eqref{model} with $T=3$, initial condition $%
y_{0}=0$, and time dummies as regressors, that is, $x_{t}^{\prime }\beta
=\beta _{t}$, with the normalization $\beta _{1}=0$. The common parameters
in that model are $(\beta _{2},\beta _{3},\gamma )$. Hahn shows that  these
parameters cannot be estimated at root-n-rate. This is not in conflict
with our results here, because Lemma~\ref{lemma:moments_p1T3} only provides
two moment conditions for $y_{0}=0$. However, there are three model
parameters in the setup of \cite{Hahn2001}, so just from counting parameters
and moment conditions, we know that our moments cannot identify $(\beta _{2},\beta _{3},\gamma )$.%
\footnote{%
Including moment conditions that use the initial condition $y_{0}=1$ will
give two additional moments. From the point of view of counting moments,
this will result in a model which is over-identified.} Thus, our moment
conditions cannot be used to estimate the parameters $(\beta _{2},\beta
_{3},\gamma )$ at root-n-rate. This is in agreement with Hahn's
calculation of the information bound for this model.

The main reason why we can identify and estimate $\beta $ and $\gamma $ is
that we consider non-constant regressors $X=(X_{1},X_{2},X_{3})$, which
gives us two moment conditions for each initial condition and each support
point of the regressors, and thus many more moment conditions
than parameters --- see our formal
results on point-identification of $\beta $ and $\gamma $ 
in Section~\ref{sec:identitifactionAR1} above.

\subsection{More general dynamic panel models}

As explained in Section~\ref{subsec:MasterLemma},
Lemma~\ref{lemma:Markov}
delivers valid moment functions
for any model with logistic
conditional
probabilities of the form
\eqref{GeneralLogitBinaryChoice}.
This means that 
the single index of the model
need not be of the form 
$x_{t}^{\prime }\,\beta +y_{t-1}\,\gamma$
that is linear in $x_{t}$
and $y_{t-1}$, but it can
actually be any function
of the strictly exogenous regressors,
lagged dependent variable,
and parameters. 
In particular, if we replace
the model specification 
\eqref{model} by
\begin{equation}
\mathrm{Pr}\left( Y_{it}=1\,\big|\,Y_{i}^{t-1},X_{i},A_{i}\right) =\frac{%
\exp \big[
(1-Y_{i,t-1}) X_{it}^{\prime }\,\beta_0
+
Y_{i,t-1} X_{it}^{\prime }\,\beta_1 +Y_{i,t-1}\,\gamma +A_{i}\big)}{1+\exp %
\big((1-Y_{i,t-1}) X_{it}^{\prime }\,\beta_0
+
Y_{i,t-1} X_{it}^{\prime }\,\beta_1 +Y_{i,t-1}\,\gamma +A_{i}\big]},  \label{modelGENERALIZED}
\end{equation}%
then the moment functions \eqref{MomentsGeneralT}
and Theorem~\ref{th:AR1moments}
remain fully valid, as long as
we replace the parameters
$(\beta,\gamma)$ by $(\beta_0,\beta_1,\gamma)$,
and define the single index
by 
$z_{t}(y,y_{0},x,\beta_0,\beta_1,
,\gamma )=(1-y_{t-1}) \,x_{t}^{\prime }\,\beta_0 + y_{t-1} \,x_{t}^{\prime }\,\beta_1 +y_{t-1}\,\gamma $.

The generalized model \eqref{modelGENERALIZED} is interesting, because it allows the effect of the regressors $X_{it}$ on $Y_{it}$ to 
depend on the current ``state'' of the
process, $Y_{i,t-1}$, with 
$\beta_{0/1}$ measuring the effect
of $X_{it}$ on $Y_{it}$ if 
$Y_{i,t-1} = 0/1$.
We do not consider this more
general model structure further in this paper,
but it is noteworthy that the
regressors $(1-Y_{i,t-1}) X_{it}$
and $Y_{i,t-1} X_{it}$ are 
pre-determined
regressors that are more general 
than just the lagged dependent
variable $Y_{i,t-1}$ we have considered
so far. 
Further comments on more general
regressors structures are given in Appendix~\ref{app:MoreGeneralPredetermined}.

Another interesting generalization of this model that still allows for the construction
of moment conditions, is to make the AR(1) coefficient 
$\gamma$ individual specific. Equation \eqref{modelGENERALIZED} then reads
\begin{align}
&\mathrm{Pr}\left( Y_{it}=1\,\big|\,Y_{i}^{t-1},X_{i},A_{i},C_i\right) 
\nonumber \\ & \qquad \qquad  \qquad
=\frac{%
\exp \big[(1-Y_{i,t-1}) X_{it}^{\prime }\,\beta_0
+
Y_{i,t-1} X_{it}^{\prime }\,\beta_1 +Y_{i,t-1}\,C_i +A_{i}\big]}{1+\exp %
\big[(1-Y_{i,t-1}) X_{it}^{\prime }\,\beta_0
+
Y_{i,t-1} X_{it}^{\prime }\,\beta_1 +Y_{i,t-1}\,C_i +A_{i}\big]}.
   \label{modelGammaHeterogeneous}
\end{align}
We can then treat $(C_i,A_i)$ as a two-dimensional fixed-effect and employ the
methods of Section~\ref{SEC: Incidental parameter free moment conditions} and \ref{sec:Derivation} to explore moment conditions for $\beta$ that are free of
$(C_i,A_i)$. For general covariate and parameter values, 
we find that no such moment conditions exist for $T=3$, but they do exist for $T \geq 4$. For example, for $T=4$ and $y_0=0$, a valid moment condition in this model
(i.e.\ satisfying $\mathbb{E}\left[ m(Y_i,Y_{0,i},X_i,\beta_0,\beta_1 )\big|%
\,Y_{0,i}=0,X_i,C_i,A_i\right] =0$) is
\begin{align}
m(y,0,x,\beta_0,\beta_1 )& =
- \mathbbm{1}\Big\{ (y_1,y_3,y_4) = (1,0,0) \Big\}
+
\left\{
\begin{array}{ll}
\exp \left( z_{12}   \right)   & \text{if }y=(0,1,0,0), \\
\exp \left( z_{14}   \right)   & \text{if }y=(0,1,0,1), \\
- \exp \left( z_{34}   \right)   & \text{if }y=(1,0,0,1), \\
\exp \left( z_{32}  \right)   & \text{if }y=(1,1,0,0), \\
0 & \text{otherwise},%
\end{array}
\right.
   \label{momentGammaHeterogeneous}
\end{align}
where $z_{ts} = z_t - z_s$ as before,
and $z_t=z_{t}(y,y_{0},x,\beta_0 ,\beta_1 )
 = (1-y_{t-1}) x_{t}^{\prime }\,\beta_0
+
y_{t-1} x_{t}^{\prime }\,\beta_1$
is the appropriate index function in this model.
We have found numerically that there is one additional moment condition for $T=4$,
a total of ten for $T=5$, and thirty-two for $T=6$.
The moment function in the last display was derived using the ideas described in
Section~\ref{sec:Derivation}. 

\section{Empirical illustration\label{SEC: Empirical illustration}}

\label{sec:Emp}

In this section, we illustrate how to use the conditional moment functions
in this paper to implement a GMM\ approach to estimation. We use data from
the National Longitudinal Survey of Youth 1997\footnote{%
The analysis is restricted to the years in which the survey was conducted
annually, from 1997-2011. For years in which the respondent was not
interviewed, all time-varying variables (e.g., employment status, school
enrollment status, age, income, marital variables, etc) are marked as
missing. Otherwise, unless the raw data was marked as missing in some
capacity (e.g., due to non-response, the interviewee not knowing the answer
to the question), no other entries had missings imposed upon them.} (NLSY97)
covering the years 1997 to 2010, and the dependent variable is a binary
variable indicating employment status by whether the respondent reported
working $\geq $1000 hours in the past year. We estimate fixed effects logit
AR(1) and AR(2) models using the number of biological children the
respondent has (Children), a dummy variable for being married (Married), a
transformation\footnote{%
The spouse's income can be zero or negative. This prevents us from using the
logarithm of the income as an explanatory variable. We therefore use the
signed fourth root.} of the spouse's income (Sp.Inc.), and a full set of
time dummies as the explanatory variables. There are a total of 8,274
individuals aged 16 to 32, resulting in 54,166 observations. For the
estimation, we consider the full sample, as well as females and males
separately. Figure \ref{Figure: Histograms} displays the number of
observations, $T_{i}$, per individual in each of the three samples.

\begin{figure}[h]
\caption{Histogram of Number of Observations Per Individual.}
\label{Figure: Histograms}\centering
\begin{subfigure}[t]{0.30\textwidth}
        \includegraphics[scale=0.35]{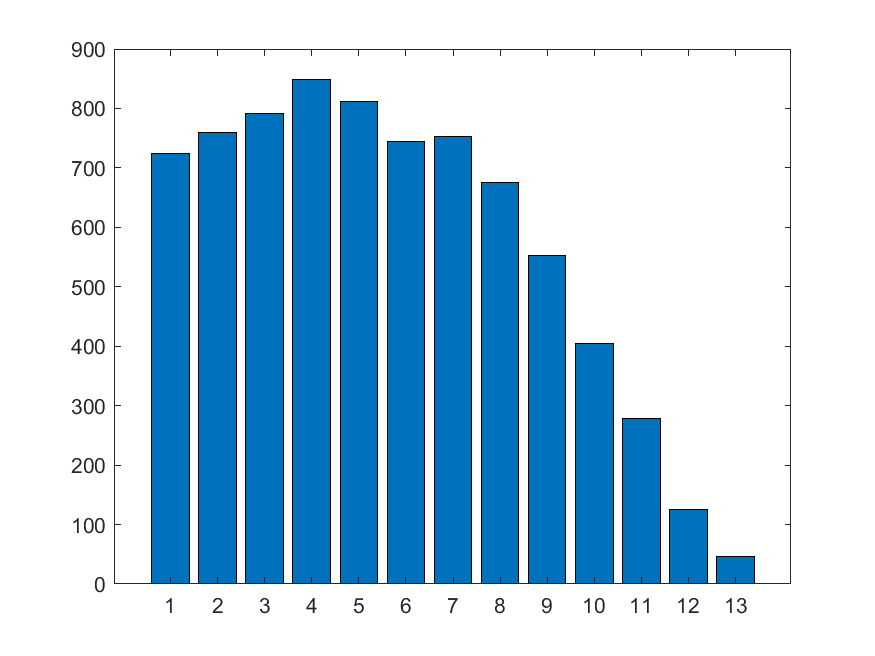}
        \caption{All}
    \end{subfigure} \quad
\begin{subfigure}[t]{0.30\textwidth}
        \includegraphics[scale=0.35]{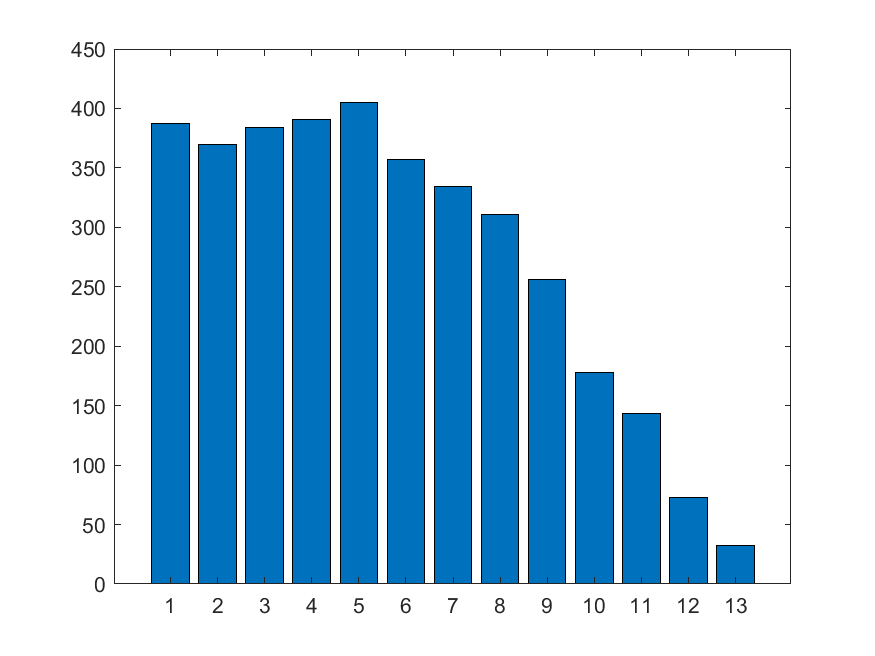}
        \caption{Females}
    \end{subfigure} \quad
\begin{subfigure}[t]{0.30\textwidth}
        \includegraphics[scale=0.35]{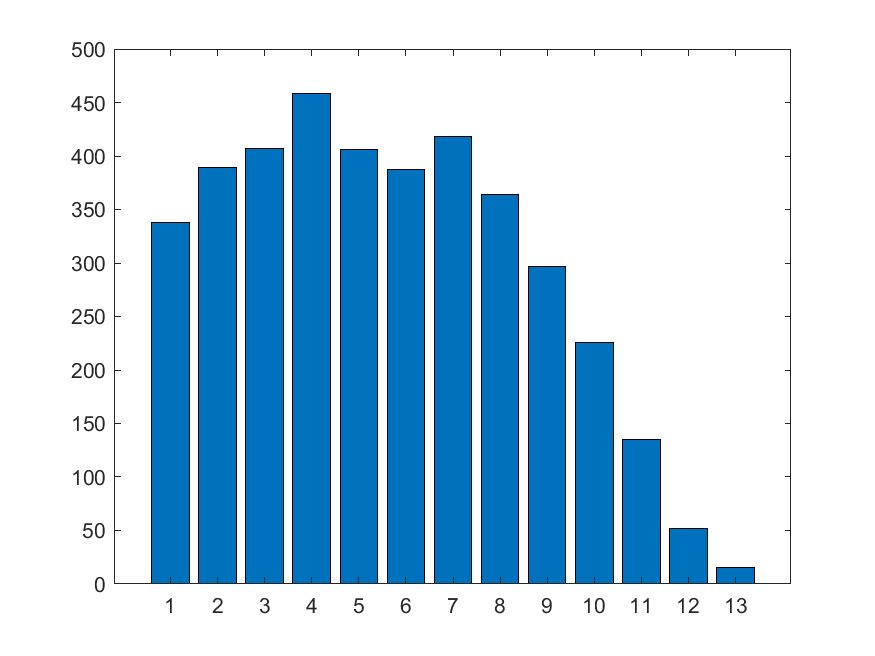}
        \caption{Males}
    \end{subfigure}
\end{figure}

The moment conditions for the fixed effects logit AR(1) in (\ref%
{SolutionMomentsT3}) and (\ref{MomentsGeneralT}) are all indexed by three
time-periods, and they are conditional on the strictly exogenous variables
and the initial conditions. One could in principle construct separate
conditional moment conditions for each value of $T_{i}$ and each triplet $%
1\leq t<s<r\leq T_{i}$, and then use them to construct efficient
unconditional moment functions. See, for example, the discussion in \cite{NeweyMcFadden94:HoE}. Unfortunately, the construction of these moment
functions depends on the conditional expectation of the derivative of the
conditional moment function as well as on the conditional variance of the
conditional moment function. We therefore pursue a different approach to
obtaining unconditional moment functions. We do not claim that the resulting
GMM estimator has any optimality properties, but we have found that it
performs well in our Monte Carlo simulations even for relatively small
sample sizes, see Appendix~B.1 of \cite{Honore2022moment}.

We first normalize all moment functions such that $\sup_{y,x,\beta ,\gamma
}\allowbreak \left\vert \widetilde{m}(y,x,\beta ,\gamma )\right\vert <\infty
$. For example, the rescaled versions of our $T=3$ moment functions in
Section~\ref{sec:FindSolutions} are given by
\begin{eqnarray*}
\widetilde{m}^{(0)}(y,y_{0},x,\beta ,\gamma ) &=&\frac{m^{(0)}(y,y_{0},x,%
\beta ,\gamma )}{1+\exp \left( x_{23}^{\prime }\beta \right) +\exp \left(
x_{31}^{\prime }\beta -y_{0}\,\gamma \right) +\exp \left( x_{21}^{\prime
}\beta +(1-y_{0})\,\gamma \right) } \\
\widetilde{m}^{(1)}(y,y_{0},x,\beta ,\gamma ) &=&\frac{m^{(2)}(y,y_{0},x,%
\beta ,\gamma )}{1+\exp \left( x_{12}^{\prime }\beta +y_{0}\,\gamma \right)
+\exp \left( x_{13}^{\prime }\beta -(1-y_{0})\,\gamma \right) +\exp \left(
x_{32}^{\prime }\beta \right) }
\end{eqnarray*}%
Here, each moment function is divided by the sum of the absolute values of
all the different positive summands that appear in that moment function. We
have found that this rescaling improves the performance of the resulting GMM
estimators, particularly for small samples, because it bounds the moment
functions and its gradients uniformly over the parameters $\beta $ and $%
\gamma $. Interestingly, the score functions of the conditional likelihood
in \cite{honore2000panel} are essentially rescaled in this way.

The rescaled moment functions are valid conditional on any realization of
the regressors. We can therefore form unconditional moment functions by
multiplying them with arbitrary functions of the regressors and the initial
conditions. In our example, we multiply them by 1, the initial condition,
and the explanatory variables for the three time periods that index the
moment function. For example, for $T=3$, we use
\begin{equation*}
M(y,y_{0},x,\beta ,\gamma )=\big(1,\;y_{0},\;x_{1}^{\prime },\;x_{2}^{\prime
},\;x_{3}^{\prime }\big)^{\prime }\otimes \left(
\begin{array}{c}
\widetilde{m}^{(1)}(y,y_{0},x,\beta ,\gamma ) \\
\widetilde{m}^{(0)}(y,y_{0},x,\beta ,\gamma )%
\end{array}%
\right) ,
\end{equation*}%
where $\otimes $ is the tensor product.

One could in principle construct a moment function for each $\left(
t,s,r\right) $ which indexes a moment function. However, this would create a
very large number of moment conditions. For a given individual, we therefore
add up all the moment functions over all triplets, $t<s<r$. Observations
with $T=T_{i}$ time periods will then contribute ${\binom{T_{i}}{3}}$ terms
to the sample analog of the moment. This gives very large weight to
observations with large $T_{i}$. 
We therefore weigh the triplets $%
\left( t,s,r\right) $ for an observation with $T_{i}$ time periods by $%
\left( T_{i}-1\right) /{\binom{T_{i}}{3}}$. This yields sample moments of
the form $\frac{1}{n}\sum_{i=1}^{n}M(Y_{i},Y_{i,0},X_{i},\beta ,\gamma )$,
and the corresponding GMM estimator is given by
\begin{equation*}
{\binom{\widehat{\beta }}{\widehat{\gamma }}}=\limfunc{argmin}_{\beta \in
\mathbb{R}^{K},\,\gamma \in \mathbb{R}}\left(
\sum_{i=1}^{n}M(Y_{i},Y_{i,0},X_{i},\beta ,\gamma )\right) ^{\prime }W\left(
\sum_{i=1}^{n}M(Y_{i},Y_{i,0},X_{i},\beta ,\gamma )\right) ,
\end{equation*}%
where $W$ is a symmetric positive-definite weight matrix. We use a diagonal
weight matrix with the inverse of the moment variances on the diagonal. The
motivation stems from \cite{AltonjiSegal1996} who demonstrate that
estimating the optimal weighting matrix can result in poor finite sample
performance of GMM estimators. They suggest equally weighted moments (i.e., $%
W=I$) as an alternative. Of course, using equal weights will not be
invariant to changes in units, which explains the practice we have adopted.%
\footnote{%
Our choice of weight matrix is quite common in empirical work. See, for
example, \cite{GayleShephard2019} for a recent example.}

Under standard regularity conditions we have
\begin{equation*} 
\sqrt{n}\left[ {\binom{\widehat{\beta }}{\widehat{\gamma }}}-{\binom{\beta
_{0}}{\gamma _{0}}}\right] \Rightarrow \mathcal{N}\left(0, \,
(G'WG)^{-1} \,
G^{\prime }W\,\Omega \,WG\,
(G'WG)^{-1}
\right),
\end{equation*}%
with $\Omega =\mathrm{Var}[m(Y_{i},Y_{i,0},X_{i},\beta _{0},\gamma _{0})]$
and $G=\mathbb{E}\left[ \frac{\partial m(Y_{i},Y_{i,0},X_{i},\beta
_{0},\gamma _{0})}{\partial \beta ^{\prime }},\frac{\partial
m(Y_{i},Y_{i,0},X_{i},\beta _{0},\gamma _{0})}{\partial \gamma ^{\prime }}%
\right] $.

Table \ref{Table: Empirical} reports the estimation results. As expected,
and consistent with the Monte Carlo results  in Appendix~B.1 of \cite{Honore2022moment}, the standard logit maximum
likelihood estimator of the coefficient on the lagged dependent variable is
much larger than the one that estimates a fixed effect for each individual:
the estimated fixed effects will be \textquotedblleft
`overfitted\textquotedblright , leading to a downward bias in the estimated
state dependence. Moreover, the standard logit estimator  that ignores fixed effects will capture the
presence of persistent heterogeneity by the lagged dependent variable,
leading to an upwards bias if such heterogeneity is present in the data. The
GMM\ estimator gives a much smaller coefficient than the standard logit
maximum likelihood estimator, suggesting that heterogeneity plays a big role
in this application.

\begin{table}[tbp!]
\caption{Empirical Results (AR(1)).}
\label{Table: Empirical}\centering{\hspace{-0.3cm} {\footnotesize
\begin{tabular}{l@{\;\;\;\;}r@{\;\;\;\;}r@{\;\;\;\;}r@{\;\;\;\;}r@{\;\;\;\;}r@{\;\;\;\;}r@{\;\;\;\;}r@{\;\;\;\;}r@{\;\;\;\;}r@{\;\;\;\;}r@{\;\;\;\;}r}
&  &  &  &  &  &  &  &  &  &  &  \\
& \multicolumn{3}{c}{Females} &  & \multicolumn{3}{c}{Males} &  &
\multicolumn{3}{c}{All} \\
&  &  &  &  &  &  &  &  &  &  &  \\
& \multicolumn{1}{c}{Logit} & \multicolumn{1}{c}{%
\begin{tabular}{@{}r}
Logit \\
w FE%
\end{tabular}%
} & \multicolumn{1}{c}{GMM} &  & \multicolumn{1}{c}{Logit} &
\multicolumn{1}{c}{%
\begin{tabular}{@{}r}
Logit \\
w FE%
\end{tabular}%
} & \multicolumn{1}{c}{GMM} &  & \multicolumn{1}{c}{Logit} &
\multicolumn{1}{c}{%
\begin{tabular}{@{}r}
Logit \\
w FE%
\end{tabular}%
} & \multicolumn{1}{c}{GMM} \\ \cline{2-4}\cline{6-8}\cline{10-12}
&  &  &  &  &  &  &  &  &  &  &  \\
Lagged $y$ & $2.585 $ & $0.780 $ & $1.512 $ &  & $2.947 $ & $0.709 $ & $%
1.454 $ &  & $2.797 $ & $0.768 $ & $1.417 $ \\
& $( 0.038 \rlap{)}$ & $( 0.050 \rlap{)}$ & $( 0.076 \rlap{)}$ &  & $( 0.040 %
\rlap{)}$ & $( 0.063 \rlap{)}$ & $( 0.088 \rlap{)}$ &  & $( 0.027 \rlap{)}$
& $( 0.039 \rlap{)}$ & $( 0.060 \rlap{)}$ \\
&  &  &  &  &  &  &  &  &  &  &  \\
Children & $-0.335 $ & $-0.444 $ & $-0.244 $ &  & $-0.153 $ & $0.018 $ & $%
-0.275 $ &  & $-0.278 $ & $-0.252 $ & $-0.214 $ \\
& $( 0.016 \rlap{)}$ & $( 0.052 \rlap{)}$ & $( 0.196 \rlap{)}$ &  & $( 0.021 %
\rlap{)}$ & $( 0.067 \rlap{)}$ & $( 0.133 \rlap{)}$ &  & $( 0.012 \rlap{)}$
& $( 0.043 \rlap{)}$ & $( 0.102 \rlap{)}$ \\
&  &  &  &  &  &  &  &  &  &  &  \\
Married & $0.082 $ & $-0.044 $ & $0.637 $ &  & $0.335 $ & $0.332 $ & $0.038 $
&  & $0.349 $ & $0.173 $ & $0.707 $ \\
& $( 0.084 \rlap{)}$ & $( 0.159 \rlap{)}$ & $( 0.890 \rlap{)}$ &  & $( 0.071 %
\rlap{)}$ & $( 0.171 \rlap{)}$ & $( 0.295 \rlap{)}$ &  & $( 0.053 \rlap{)}$
& $( 0.111 \rlap{)}$ & $( 0.397 \rlap{)}$ \\
&  &  &  &  &  &  &  &  &  &  &  \\
SP.Inc. & $-0.010 $ & $-0.050 $ & $-0.104 $ &  & $0.033 $ & $0.003 $ & $%
0.019 $ &  & $-0.017 $ & $-0.044 $ & $-0.089 $ \\
& $( 0.006 \rlap{)}$ & $( 0.011 \rlap{)}$ & $( 0.068 \rlap{)}$ &  & $( 0.007 %
\rlap{)}$ & $( 0.016 \rlap{)}$ & $( 0.026 \rlap{)}$ &  & $( 0.004 \rlap{)}$
& $( 0.009 \rlap{)}$ & $( 0.033 \rlap{)}$ \\
&  &  &  &  &  &  &  &  &  &  &  \\
&  &  &  &  &  &  &  &  &  &  &
\end{tabular}
}}
\begin{tablenotes}
      \small
      \item The estimation also includes 12 time dummies. Standard error for the GMM and Logit Fixed Effects Estimators are calculated as the interquartile range of 1,000 bootstrap replications divided by 1.35.
\end{tablenotes}
\end{table}

\begin{table}[tbp!]
\caption{Empirical Results (AR(2)).}
\label{Table: Empirical AR2}\centering{\hspace{-0.3cm} {\footnotesize
\begin{tabular}{l@{\;\;\;\;}r@{\;\;\;\;}r@{\;\;\;\;}r@{\;\;\;\;}r@{\;\;\;\;}r@{\;\;\;\;}r@{\;\;\;\;}r@{\;\;\;\;}r@{\;\;\;\;}r@{\;\;\;\;}r@{\;\;\;\;}r}
&  &  &  &  &  &  &  &  &  &  &  \\
& \multicolumn{3}{c}{Females} &  & \multicolumn{3}{c}{Males} &  &
\multicolumn{3}{c}{All} \\
&  &  &  &  &  &  &  &  &  &  &  \\
& \multicolumn{1}{c}{Logit} & \multicolumn{1}{c}{%
\begin{tabular}{@{}r}
Logit \\
w FE%
\end{tabular}%
} & \multicolumn{1}{c}{GMM} &  & \multicolumn{1}{c}{Logit} &
\multicolumn{1}{c}{%
\begin{tabular}{@{}r}
Logit \\
w FE%
\end{tabular}%
} & \multicolumn{1}{c}{GMM} &  & \multicolumn{1}{c}{Logit} &
\multicolumn{1}{c}{%
\begin{tabular}{@{}r}
Logit \\
w FE%
\end{tabular}%
} & \multicolumn{1}{c}{GMM} \\ \cline{2-4}\cline{6-8}\cline{10-12}
&  &  &  &  &  &  &  &  &  &  &  \\
$y_{t-1}$ & $2.259 $ & $0.742 $ & $1.356 $ &  & $2.422 $ & $0.514 $ & $1.116
$ &  & $2.361 $ & $0.665 $ & $1.297 $ \\
& $( 0.047 \rlap{)}$ & $( 0.069 \rlap{)}$ & $( 0.162 \rlap{)}$ &  & $( 0.052 %
\rlap{)}$ & $( 0.083 \rlap{)}$ & $( 0.131 \rlap{)}$ &  & $( 0.035 \rlap{)}$
& $( 0.053 \rlap{)}$ & $( 0.092 \rlap{)}$ \\
&  &  &  &  &  &  &  &  &  &  &  \\
$y_{t-2}$ & $0.917 $ & $-0.379 $ & $0.678 $ &  & $1.332 $ & $-0.286 $ & $%
0.558 $ &  & $1.137 $ & $-0.319 $ & $0.648 $ \\
& $( 0.048 \rlap{)}$ & $( 0.072 \rlap{)}$ & $( 0.081 \rlap{)}$ &  & $( 0.053 %
\rlap{)}$ & $( 0.080 \rlap{)}$ & $( 0.066 \rlap{)}$ &  & $( 0.036 \rlap{)}$
& $( 0.054 \rlap{)}$ & $( 0.046 \rlap{)}$ \\
&  &  &  &  &  &  &  &  &  &  &  \\
Children & $-0.260 $ & $-0.410 $ & $-1.926 $ &  & $-0.143 $ & $0.112 $ & $%
-0.188 $ &  & $-0.223 $ & $-0.192 $ & $-1.209 $ \\
& $( 0.018 \rlap{)}$ & $( 0.069 \rlap{)}$ & $( 0.282 \rlap{)}$ &  & $( 0.024 %
\rlap{)}$ & $( 0.100 \rlap{)}$ & $( 0.251 \rlap{)}$ &  & $( 0.014 \rlap{)}$
& $( 0.051 \rlap{)}$ & $( 0.219 \rlap{)}$ \\
&  &  &  &  &  &  &  &  &  &  &  \\
Married & $0.136 $ & $0.022 $ & $-0.193 $ &  & $0.411 $ & $0.534 $ & $-0.019
$ &  & $0.393 $ & $0.269 $ & $-0.121 $ \\
& $( 0.095 \rlap{)}$ & $( 0.184 \rlap{)}$ & $( 0.028 \rlap{)}$ &  & $( 0.083 %
\rlap{)}$ & $( 0.203 \rlap{)}$ & $( 0.025 \rlap{)}$ &  & $( 0.061 \rlap{)}$
& $( 0.145 \rlap{)}$ & $( 0.022 \rlap{)}$ \\
&  &  &  &  &  &  &  &  &  &  &  \\
Sp.Inc & $-0.015 $ & $-0.050 $ & $0.255 $ &  & $0.023 $ & $-0.008 $ & $0.179
$ &  & $-0.022 $ & $-0.046 $ & $0.093 $ \\
& $( 0.007 \rlap{)}$ & $( 0.014 \rlap{)}$ & $( 0.210 \rlap{)}$ &  & $( 0.008 %
\rlap{)}$ & $( 0.019 \rlap{)}$ & $( 0.316 \rlap{)}$ &  & $( 0.005 \rlap{)}$
& $( 0.011 \rlap{)}$ & $( 0.183 \rlap{)}$ \\
&  &  &  &  &  &  &  &  &  &  &  \\
&  &  &  &  &  &  &  &  &  &  &
\end{tabular}
}}
\begin{tablenotes}
      \small
      \item The estimation also includes 11 time dummies. Standard error for the GMM and Logit Fixed Effects Estimators are calculated as the interquartile range of 1,000 bootstrap replications divided by 1.35.
\end{tablenotes}
\end{table}

To estimate the AR(2)\ version of the model, we apply the moment conditions provided
in Appendix~\ref{sec:moments_p2T4} to all consecutive sequences of six
outcomes (treating the first two as initial conditions). 
 The moment functions are scaled as described in the Monte Carlo simulations in Appendix~B.1 of \cite{Honore2022moment}.
The results are
presented in Table \ref{Table: Empirical AR2}. The most interesting finding
is that for all three samples, the GMM estimator of $\left(\gamma_1,\gamma_2%
\right)$ is between the maximum likelihood estimator that ignores the fixed
effects, and the one that estimates a fixed effect for each individual. This
suggests that unobserved individual-specific heterogeneity is important in
this example. Economically, it is also interesting that for each estimation
method, the estimates of $\left(\gamma_1,\gamma_2\right)$ are quite similar
across the three samples.

\section{Conclusion}
\label{sec:conc}

\cite{bonhomme2012functional} proposed a general approach for constructing moment restrictions in nonlinear panel data models that do not depend on individual-specific effects. In this paper, we have operationalized this in models with discrete outcomes by first presenting a blueprint for deciding whether  such moment conditions exist, and then an approach for actually finding analytic expressions for the moment conditions. 

We have used our approach to derive all the moment conditions for the panel
logit AR(1) model that are free of the fixed effects, and we have employed those moment conditions
to show identification of the common model parameters
and to obtain a GMM estimator that is useful and
performs well in practice. 
 The immediate practical relevance of this paper is therefore
for the dynamic panel logit model (both AR(1) and AR(2) models are estimated in an empirical application).

While part of this paper emphasises binary logit models, the  methods explained in Section~\ref{SEC: Incidental parameter free moment conditions} and \ref{sec:Derivation}
for exploring and deriving 
moment conditions are applicable for more general panel models,
as illustrated by the examples provided in Section~\ref{sec:Examples}.
Exploring such moment conditions in other interesting   models is a research agenda that has only started (e.g.\ \citealt{honore2021dynamic},  \citealt{davezies2022fixed}),
and a lot more future work should be done   to provide
useful new estimation methods in various discrete choice panel models.

\setstretch{1.3}
\setlength{\bibsep}{4pt} %
\ifx\undefined\BySame
\newcommand{\BySame}{\leavevmode\rule[.5ex]{3em}{.5pt}\ }
\fi
\ifx\undefined\textsc
\newcommand{\textsc}[1]{{\sc #1}}
\newcommand{\emph}[1]{{\em #1\/}}
\let\tmpsmall\small
\renewcommand{\small}{\tmpsmall\sc}
\fi

\appendix

\newpage
\setstretch{1.5}
\pagenumbering{roman}

\section{Online Appendix (not for publication)}

\subsection{Proofs of main text results}
\label{app:proofs}

Lemma~\ref{lemma:moments_p1T3} is a special case   of Theorem~\ref{th:AR1moments}, which is proven below.
Alternatively, Lemma~\ref{lemma:moments_p1T3} can be proved more directly by 
``brute-force calculations'', see Appendix~B.2.1 if \cite{Honore2022moment} of the supplementary appendix.

\begin{proof}[\bf Proof of Lemma~\protect\ref{lemma:INVERSION}]
The lemma holds trivially for $K=0$ when $s=\emptyset $ and $g_{\emptyset }:%
\mathbb{R}\rightarrow \mathbb{R}$ is a single increasing function, implying
that $g_{\emptyset }(\gamma )=0$ can at most have one solution. Consider $%
K\geq 1$ in the following. We follow a proof by contradiction. Assume that $%
(\beta _{1},\gamma _{1})\in \mathbb{R}^{K}\times \mathbb{R}$ and $(\beta
_{2},\gamma _{2})\in \mathbb{R}^{K}\times \mathbb{R}$ both solve $%
g_{s}(\beta _{1},\gamma _{1})=0$ and $g_{s}(\beta _{2},\gamma _{2})=0$, for
all $s\in \{-,+\}^{K}$, with $(\beta _{1},\gamma _{1})\neq (\beta
_{2},\gamma _{2})$. Our goal is to derive a contradiction between this and the
assumptions of the lemma. Without loss of generality, we assume that $\gamma
_{1}\leq \gamma _{2}$. Define $s^{\ast }\in \{-,+\}^{K}$ by
\begin{equation*}
s_{k}^{\ast }=\left\{
\begin{array}{ll}
+\; & \text{if}\;\beta _{1,k}\leq \beta _{2,k}, \\
-\; & \text{otherwise,}%
\end{array}%
\right.
\end{equation*}%
for all $k\in \{1,\ldots ,K\}$. By the monotonicity assumptions on $%
g_{s}(\beta ,\gamma )$ in the lemma, we have that $g_{s^{\ast }}(\beta
,\gamma )$ is strictly increasing in $\gamma $ and we have $\gamma _{1}\leq
\gamma _{2}$;  if $s_{k}^{\ast }=+$, then $g_{s^{\ast }}(\beta ,\gamma )$
is strictly increasing in $\beta _{k}$ and we have $\beta _{1,k}\leq \beta
_{2,k}$; and if $s_{k}^{\ast }=-$, then $g_{s^{\ast }}(\beta ,\gamma )$ is
strictly decreasing in $\beta _{k}$ and we have $\beta _{1,k}>\beta _{2,k}$.
Furthermore, one of these inequalities on the parameters must be strict,
because we have $(\beta _{1},\gamma _{1})\neq (\beta _{2},\gamma _{2})$. We
therefore conclude that
\begin{equation*}
g_{s^{\ast }}(\beta _{1},\gamma _{1})<g_{s^{\ast }}(\beta _{2},\gamma _{2}).
\end{equation*}%
This violates the assumption that $g_{s}(\beta _{1},\gamma _{1})=0$ and $%
g_{s}(\beta _{2},\gamma _{2})=0$. Thus, under the assumptions of the lemma
there cannot be two solutions of the system \eqref{systemEQ}.
\end{proof} 

\bigskip

\begin{proof}[\bf Proof of Theorem~\ref{th:id1}]
Consider $y_0=0$ and $q=0$.
Using the definition of the moment function $m^{(0)}(y,y_0,x,\beta ,\gamma )$ in Section~\ref{sec:FindSolutions}
and  the distribution of $Y|X,A$ implied by model \eqref{model} we find
\begin{align*}
   \frac{\partial \,  \overline{m}_{0,s}^{(0)}(\beta ,\gamma ) }
     {\partial \gamma}
   & =\mathbb{E}\left[ \left.
   \frac{\partial \, m^{(0)}(Y,0,X,\beta ,\gamma )} {\partial \gamma} \,\right|\,Y_{0}=0,\;X\in \mathcal{X}_{s}\right]
  \\
  &=  \mathbb{E}\left[ \left.
 \frac{\partial  \, \exp \left( \gamma +X_{21}^{\prime }\beta \right)  }{\partial \gamma}
 \, {\rm Pr}\left[ Y=(1,0,1) \, \big| \, Y_{0}=0,X,A\right]
   \,\right|\,Y_{0}=0,\;X\in \mathcal{X}_{s}\right]
  \\
  &=  \mathbb{E}\left[ \left.
   \exp \left( \gamma +X_{21}^{\prime }\beta \right)
 \,p((1,0,1),0,X,\beta_0 ,\gamma_0 ,A)
   \,\right|\,Y_{0}=0,\;X\in \mathcal{X}_{s}\right] > 0,
\end{align*}
where in the last step (to conclude that the expression is positive) we used that
$\,p((1,0,1),0,x,\beta_0 ,\gamma_0 ,\alpha) >0$
for all $x \in \mathbb{R}^{K \times 3}$ and $\alpha \in \mathbb{R}$.\footnote{%
Note that we have introduced the domain of $A$ to be $\mathbb{R}$.
 If we had introduced the domain of $A$ to be
$\mathbb{R} \cup \{\pm \infty\}$, then all other results in the paper hold completely unchanged,
but here we would need to impose the additional regularity condition that
 $A$ does not take values $\pm \infty$ with probability one, conditional on $Y_{0}=0$ and $X\in \mathcal{X}_{s}$,
 since otherwise we can have $p((1,0,1),0,X,\beta_0 ,\gamma_0 ,A)=0$.}
We have thus shown that
$ \overline{m}_{0,s}^{(0)}(\beta ,\gamma )$ is strictly increasing in $\gamma$.
Analogously, one can show that  $ \overline{m}_{0,s}^{(0)}(\beta ,\gamma )$ is strictly increasing in $\beta_k$
if $s_k=+$, and  strictly decreasing in $\beta_k$ if $s_k = -$, for all $k \in \{1,\ldots,K\}$, because of the result in  
\eqref{MonotonicityEmoments} above.

We can therefore apply Lemma~\ref{lemma:INVERSION} with $g_s(\beta,\gamma)$ equal to
$ \overline{m}_{0,s}^{(0)}(\beta ,\gamma )$ to find that the system of equations in \eqref{SystemTheoremID}
has at most one solution.  Using Lemma~\ref{lemma:moments_p1T3} we find that such a solution exists
and is given by $(\beta_0,\gamma_0)$.

For the other values of $q,y_{0}\in \{0,1\}$ 
we can analogously apply Lemma~\ref{lemma:INVERSION} with $g_s(\beta,\gamma)$ equal to
$\overline m^{(0)}_{1,s}(\beta,-\gamma)$,
$\overline m^{(1)}_{0,s}(-\beta,-\gamma)$, $\overline m^{(1)}_{1,s}(-\beta,\gamma)$, respectively.
\end{proof}

\subsubsection*{Key intermediate result for the proof of Lemma~\ref{lemma:Markov} and Theorem~\ref{th:AR1moments}}

Lemma~\ref{lemma:MASTER} below is a slight reformulation and  
generalization of Lemma~\ref{lemma:Markov} in the main text. Once we have established
Lemma~\ref{lemma:MASTER}, then the proofs of both Lemma~\ref{lemma:Markov} and Theorem~\ref{th:AR1moments} are straightforward. To present 
Lemma~\ref{lemma:MASTER}, it is useful to first introduce some additional notation:

Let  $  \widetilde Y_1,   \widetilde Y_2,   \widetilde Y_3 \in \{0,1\}   $,  $W_2 \in {\cal W}_2$, and $W_3 \in {\cal W}_3$ 
be random variables.  Let $\widetilde Y = (\widetilde Y_1,\widetilde Y_2,\widetilde Y_3)$,
and let $p(\widetilde y, w_2, w_3 ) \in [0,\infty)$ describe the joint distribution of $(\widetilde Y,W_2, W_3)$, that is,
 for all measurable subsets ${\cal Y}_* \subset \{0,1\}^3$, ${\cal W}^*_2 \subset {\cal W}_2$ and ${\cal W}^*_3 \subset {\cal W}_3$ we have
\begin{align*}
    {\rm Pr}\left( \widetilde Y  \in {\cal Y}_*  \; \; \& \; \;   W_2 \in {\cal W}^*_2  \; \; \& \; \;   W_3 \in {\cal W}^*_3 \right) 
    &=  \sum_{\widetilde y \in  {\cal Y}_* } \int_{w_2 \in {\cal W}^*_2} \int_{w_3 \in {\cal W}^*_3} \, p(\widetilde y, w_2, w_3 ) \, \mu(dw_2) \, \nu(dw_3) ,
\end{align*}
for appropriate probability measures $\mu$ and $\nu$ on $ {\cal W}_2$ and ${\cal W}_3$.
We assume that  we can decompose the joint distribution of $\widetilde Y$, $W_2$, $W_3$ as follows,
\begin{align}
  p(\widetilde y, w_2, w_3 ) &=  p_3(\widetilde y_3 \,| \,w_3) \; \,  g(w_3 \,|\, \widetilde y_2,w_2) \; \, p_2(\widetilde y_2 \,| \,w_2) \;\,  f(w_2 \,|\, \widetilde y_1)   \;\,  p_1(\widetilde y_1)   ,
     \label{AssMarkov}
\end{align}
where $\widetilde y = (\widetilde y_1,\widetilde y_2,\widetilde y_3)$, the functions
$p_3$, $g$, $p_2$, $f$ are appropriate transition probabilities/densities,
and $p_1(\widetilde y_1) =  {\rm Pr}\left( \widetilde Y_1 = \widetilde y_1  \right)$ is the marginal distribution of $ \widetilde Y_1$.
For $p_1(\widetilde y_1)$,  $p_2(\widetilde y_2 \,| \,w_2)$, $p_3(\widetilde y_3 \,| \,w_3) $ we impose:
\begin{align}
     p_1(\widetilde y_1  ) &=   \Lambda\left[ (2 \,\widetilde y_1-1) \, \pi_{1}    \right]  ,
\nonumber \\
    p_2(\widetilde y_2 \,|\, w_2) &=  \Lambda\left[ (2 \,\widetilde y_2-1) \, \pi_2(w_2)    \right] ,
\nonumber \\
     p_3(\widetilde y_3 \,| \, w_3) &=    \Lambda\left[ (2 \,\widetilde y_3-1) \, \pi_3(w_3)    \right] ,
   \label{AssLogit}  
\end{align}
where  $ \Lambda(\xi) :=  [1+\exp(-\xi)]^{-1}$ is the cumulative distribution function of the logistic distribution,
$\pi_1 \in \mathbb{R}$ is a constant,
and  $ \pi_{2} : \, {\cal W}_2 \rightarrow \mathbb{R}$ and  $ \pi_{3} : \, {\cal W}_3 \rightarrow \mathbb{R}$ are functions.\footnote{
Since  $\pi_1$, $\pi_2(w_2)$, $\pi_3(w_3)$ are unrestricted, the only substantial assumption that is actually imposed by \eqref{AssLogit} is that the conditional
probabilities should not be zero or one. 
}
The only assumption that we impose on $f(w_2 \,|\, \widetilde y_1) $ and $g(w_3 \,|\, \widetilde y_2,w_2)$ is that
\begin{align}
      g(w_3 \,|\, 1,w_2) = g(w_3 \,|\, 1) ,
      \label{RestrictionVW}
\end{align}
that is, conditional on $\widetilde Y_2=1$, the distribution of $W_3$ is independent of $W_2$.
Apart from that, we only require that $f(w_2 \,|\, \widetilde y_1) $ and $g(w_3 \,|\, \widetilde y_2,w_2)$ are conditional probability 
distributions, which sum to one:
\begin{align}
    \int_{w_2 \in {\cal W}_2} \,  f(w_2 \,|\, \widetilde y_1) \, \mu(dw_2)&=1,
    &
    \int_{w_3 \in {\cal W}_3} \,  g(w_3 \,|\, \widetilde y_2,w_2) \, \nu(dw_3)&=1.
    \label{NormalizeVW}
\end{align}    
Note that if we would strengthen \eqref{RestrictionVW} to $ g(w_3 \,|\, \widetilde y_2,w_2) = g(w_3 \,|\, \widetilde y_2)$,
for $\widetilde y_2 \in \{0,1\}$, then \eqref{AssMarkov} would be equivalent to the Markov chain
condition imposed in Lemma~\ref{lemma:Markov}
($      \widetilde Y_1
        \overset{f}{\longrightarrow}     W_2
        \overset{p_2}{\longrightarrow}  \widetilde  Y_2   
        \overset{g}{\longrightarrow}     W_3
        \overset{p_3}{\longrightarrow}  \widetilde  Y_3$).
However, since here we only impose \eqref{RestrictionVW}, we  also allow for dependence of $W_2$ and $W_3$, conditional on $  \widetilde  Y_2 = 0 $.
This generalization to a non-Markovian structure is crucial for the ordered 
logit model in \cite{honore2021dynamic}, but is less important for the binary
logit model discussed in this paper (See Appendix~\ref{app:MoreGeneralPredetermined} below for further discussion).
Finally, we define
$m : \{0,1\}^3 \times  {\cal W}_2 \times {\cal W}_3  \rightarrow \mathbb{R}$ 
by
 \begin{align}
  m(\widetilde y, w_2, w_3)   &:= 
    \left\{  \begin{array}{ll}
           \exp\left[   \pi_1  - \pi_2(w_2)  \right]     & \text{if }   \widetilde y =(0,1,0), 
            \\
         \exp\left[   \pi_1  -  \pi_3(w_3)  \right]     & \text{if }  \widetilde y =(0,1,1),        
            \\
         -1  & \text{if }  (\widetilde y_1, \widetilde y_2)=(1,0) ,                        
            \\
     \exp\left[    \pi_3(w_3)  - \pi_2(w_2) \right]  - 1  & \text{if }  \widetilde y =(1,1,0) , 
            \\
                      0 & \text{otherwise}.
       \end{array} \right.
     \label{DefMgeneral}  
\end{align}

\begin{lemma}
    \label{lemma:MASTER}
    Let $\pi_1 \in \mathbb{R}$, $ \pi_{2} : \, {\cal W}_2 \rightarrow \mathbb{R}$ and  $ \pi_{3} : \, {\cal W}_3 \rightarrow \mathbb{R}$.
     Let the random variables $\widetilde Y \in \{0,1\}^3$, $W_2 \in {\cal W}_2$, $W_3 \in {\cal W}_3$
      be such that their distributions satisfy  \eqref{AssMarkov}, \eqref{AssLogit}, \eqref{RestrictionVW}, \eqref{NormalizeVW},
      and let $m : \{0,1\}^3 \times  {\cal W}_2 \times {\cal W}_3  \rightarrow \mathbb{R}$   be defined by \eqref{DefMgeneral}.      
       Then  we have
     \begin{align*}
          \mathbb{E}\left[ m(\widetilde Y, W_2, W_3)   \right] = 0.
     \end{align*}
\end{lemma}

\begin{proof}
    Define
          \begin{align*}
           h(\widetilde y_1,\widetilde y_2,w_2, w_3)
         &:=  
            \,  \sum_{\widetilde y_3 \in \{0,1\}} \, m(\widetilde y, w_2, w_3)  \; \, p_3(\widetilde y_3 \,| \,w_3) \; \,    p_2(\widetilde y_2 \,| \,w_2) 
            \; \,   p_1(\widetilde y_1)    ,
       \end{align*}
    where $\widetilde y = ( \widetilde y_1, \widetilde y_2, \widetilde y_3)$.
     By using the expressions for the functions $p_1$, $p_2$, $p_3$,  and $m$ in \eqref{AssLogit} and \eqref{DefMgeneral}
     we find that       
   \begin{align*}
           h(\widetilde y_1,\widetilde y_2,w_2, w_3)
          &=    \left\{  \begin{array}{ll}
           \Lambda(\pi_1)  \, \Lambda[-\pi_3(w_3)]  & \text{if }  (\widetilde y_1, \widetilde y_2)=(0,1), 
            \\
         - p_2(\widetilde y_2 \,| \,w_2)  \, p_1(\widetilde y_1)   & \text{if }  (\widetilde y_1, \widetilde y_2)=(1,0) ,                        
            \\
     \Lambda(\pi_1)  \,  \Lambda[\pi_3(w_3)]  \,  - p_2(\widetilde y_2 \,| \,w_2)  \, p_1(\widetilde y_1)    & \text{if }  (\widetilde y_1, \widetilde y_2)=(1,1) , 
            \\
                      0 & \text{otherwise},
       \end{array} \right.    
      \\
        &= 
        \underbrace{
         \mathbbm{1}\left\{ \widetilde y_2 = 1 \right\}
        \,   \Lambda(\pi_1)  \, \Lambda[(2\widetilde y_1-1) \pi_3(w_3)]  
        }_{\displaystyle =:  h_1(\widetilde y_1,\widetilde y_2,w_3)}
       \, \underbrace{  - \,   \mathbbm{1}\left\{ \widetilde y_1 = 1 \right\} \, p_2(\widetilde y_2 \,| \,w_2)  \,  p_1(\widetilde y_1) 
         }_{\displaystyle =:  h_2(\widetilde y_1,\widetilde y_2,w_2)} ,
     \end{align*}
    where we have decomposed $h(\widetilde y_1,\widetilde y_2,w_2, w_3)$ into the sum of $h_1(\widetilde y_1,\widetilde y_2,w_3)$,
     which does not depend on $w_2$, and of $ h_2(\widetilde y_1,\widetilde y_2,w_2)$, which does not depend on $w_3$.
     Note that the term $\Lambda[(2\widetilde y_1-1) \pi_3(w_3)] $ in $h_1(\widetilde y_1,\widetilde y_2,w_3)$
     is identical to $p_3(\widetilde y_3 \,| \,w_3)$, but with $\widetilde y_3$ replaced by $\widetilde y_1$.
     Also using  \eqref{RestrictionVW} and \eqref{NormalizeVW} we find
     \begin{align*}
     & \sum_{\widetilde y_1 \in \{0,1\}} 
       \sum_{\widetilde y_2 \in \{0,1\}} 
         \int_{w_2 \in {\cal W}_2} \, \int_{w_3 \in {\cal W}_3}  \,
          h_1(\widetilde y_1,\widetilde y_2,w_3) \; \,  g(w_3 \,|\, \widetilde y_2,w_2) \; \,  f(w_2 \,|\, \widetilde y_1)   
          \, \mu(dw_2)  \, \nu(dw_3)
\\
       &=  \Lambda(\pi_1)
          \sum_{\widetilde y_1 \in \{0,1\}} 
         \int_{w_2 \in {\cal W}_2} \, \int_{w_3 \in {\cal W}_3}  \,
        \, \Lambda[(2\widetilde y_1-1) \pi_3(w_3)]  \; \,  g(w_3 \,|\,1) \; \,  f(w_2 \,|\, \widetilde y_1)   \;\,   
          \, \mu(dw_2)  \, \nu(dw_3)
\\
       &=  \Lambda(\pi_1)
     \sum_{\widetilde y_1 \in \{0,1\}}        \int_{w_3 \in {\cal W}_3}
        \, \Lambda[(2\widetilde y_1-1) \pi_3(w_3)]  \; \,  g(w_3 \,|\,1) \,  
       \underbrace{   \int_{w_2 \in {\cal W}_2}     \, f(w_2 \,|\, \widetilde y_1)      \, \mu(dw_2)}_{=1}  \, \nu(dw_3)
\\
       &=  \Lambda(\pi_1)
         \int_{w_3 \in {\cal W}_3} 
        \underbrace{  \sum_{\widetilde y_1 \in \{0,1\}}     \, \Lambda[(2\widetilde y_1-1) \pi_3(w_3)] }_{=1}  \; \,  g(w_3 \,|\,1)  \,   \nu(dw_3)
\\
       &=  \Lambda(\pi_1)
       \underbrace{  \int_{w_3 \in {\cal W}_3}    \,  g(w_3 \,|\,1)  \,   \nu(dw_3) }_{=1}
\\
       &=  \Lambda(\pi_1) .
     \end{align*}
     Similarly, we calculate
 \begin{align*}
     & \sum_{\widetilde y_1 \in \{0,1\}} 
       \sum_{\widetilde y_2 \in \{0,1\}} 
         \int_{w_2 \in {\cal W}_2} \, \int_{w_3 \in {\cal W}_3}  \,
          h_2(\widetilde y_1,\widetilde y_2,w_2) \; \,  g(w_3 \,|\, \widetilde y_2,w_2) \; \,  f(w_2 \,|\, \widetilde y_1)  
          \, \mu(dw_2)  \, \nu(dw_3)
\\
   &=     \sum_{\widetilde y_1 \in \{0,1\}} 
      \sum_{\widetilde y_2 \in \{0,1\}} 
         \int_{w_2 \in {\cal W}_2}   \,
          h_2(\widetilde y_1,\widetilde y_2,w_2) \underbrace{
         \int_{w_3 \in {\cal W}_3} \,  g(w_3 \,|\, \widetilde y_2,w_2) \, \nu(dw_3)      
          }_{=1}  f(w_2 \,|\, \widetilde y_1)   
          \, \mu(dw_2)  
\\
   &=   -  \, p_1(1)
     \int_{w_2 \in {\cal W}_2}   \,     \underbrace{ \sum_{\widetilde y_2 \in \{0,1\}}   p_2(\widetilde y_2 \,| \,w_2)  }_{=1}
        \;\,     f(w_2 \,|\,1)  
          \, \mu(dw_2)  
\\
   &=   -  \, p_1(1)
    \underbrace{ \int_{w_2 \in {\cal W}_2}   \,         f(w_2 \,|\,1)      \, \mu(dw_2)  }_{=1}
          \\
          &= -  \, \Lambda(\pi_1) .
     \end{align*}
    Combining the results in the last two displays gives
     \begin{align*}
     & \sum_{\widetilde y_1 \in \{0,1\}} 
       \sum_{\widetilde y_2 \in \{0,1\}} 
         \int_{w_2 \in {\cal W}_2} \, \int_{w_3 \in {\cal W}_3}  \,
          h(\widetilde y_1,\widetilde y_2,w_2, w_3) \; \,  g(w_3 \,|\, \widetilde y_2,w_2) \; \,  f(w_2 \,|\, \widetilde y_1)   
          \, \mu(dw_2)  \, \nu(dw_3) = 0,
     \end{align*}
     and by the definition of $ h(\widetilde y_1,\widetilde y_2,w_2, w_3)$ this is equivalent to 
     $  \mathbb{E}\left[ m(\widetilde Y, W_2, W_3)   \right] =0$, which is what we wanted to show.   
\end{proof}

\bigskip
We are now ready to prove the remaining main text results.
\bigskip

\begin{proof}[\bf Proof of Lemma~\ref{lemma:Markov}]
    Let the assumptions of Lemma~\ref{lemma:Markov} hold.
   We are going to verify that this implies that all
   the assumptions of Lemma~\ref{lemma:MASTER} hold,
    conditional on $(W_1,X,A)$. 
    We condition on $(W_1,X,A)=(w_1,x,a)$ in all the following
    stochastic statements, and when verifying 
    the assumptions of Lemma~\ref{lemma:MASTER}, 
    we write    
    $p_t(\widetilde y_t \, |\, w_t,x,\alpha)$, $t \in \{1,2,3\}$,
    instead of 
    $p_1(\widetilde y_1)$,
    $p_2(\widetilde y_2 \, | \, w_2)$,
    $p_3(\widetilde y_3 \, | \, w_3)$.
    The additional arguments $w_1$, $x$, $a$ do not matter here, since they are held constant (i.e.\ conditioned on). 
    By the same argument, we write 
     $\pi_t(w_t,x,\alpha)$, $t \in \{1,2,3\}$, here
    instead of 
    $\pi_1$,
    $\pi_2( w_2)$,
    $\pi_3( w_3)$.
    The Markov chain assumption in Lemma~\ref{lemma:Markov} 
     implies that conditions \eqref{AssMarkov} 
     and \eqref{RestrictionVW}
     hold.\footnote{
In fact, we have $ g(w_3 \,|\, \widetilde y_2,w_2) = g(w_3 \,|\, \widetilde y_2)$ not only for  $\widetilde y_2=1$
but also for $\widetilde y_2=0$ here, which is stronger than
required for applying Lemma~\ref{lemma:MASTER}.
      }
   For $t \in \{1,2,3\}$ we  define
    \begin{align}
     \pi_t(w_t,x,\alpha) :=
\Lambda^{-1}\Big[ p_t \left(1 \, \big| \,w_t,x,\alpha \right)  \Big],
       \label{DefPiTproof}
    \end{align}
    which guarantees that    
   \eqref{AssLogit} holds.
   Finally, \eqref{NormalizeVW} is satisfied for
   any conditional probability.\footnote{
   In other words, explicitly imposing \eqref{NormalizeVW} is redundant in Lemma~\ref{lemma:MASTER} since it is always
   satisfied in the setup described. We list this
   condition explicitly in the statement of Lemma~\ref{lemma:MASTER} since
   it is used in the proof. For example,
   non-negativity of 
    $f(w_2 \,|\, \widetilde y_1) $ and $g(w_3 \,|\, \widetilde y_2,w_2)$ is also implied by the setup described, but is not
    actually 
    required for the proof.
   }
   We can thus apply Lemma~\ref{lemma:MASTER} 
   to find that 
   \begin{align*}
  \mathbb{E}\left[ m^{(1)}(w_1,\widetilde Y_1,W_2,\widetilde Y_2,W_3,\widetilde Y_3,x,a) \, \Big| \, W_1=w_1, \, X=x, \, A=a  \right] = 0,
  \end{align*}
where 
$m^{(1)}(w_1,\widetilde y_1,w_2,\widetilde y_2,w_3,\widetilde Y_3,x,a)$
is equal to $m(\widetilde y, w_2, w_3)$
defined in \eqref{DefMgeneral}, with $\widetilde y=(\widetilde y_1,\widetilde y_2,\widetilde y_3)$,
and with the additional arguments
$w_1$, $x$, $a$ added through
$\pi_t(w_t,x,\alpha)$, as explained above, that is,
 \begin{align*}
 & m^{(1)}(w_1,\widetilde y_1,w_2,\widetilde y_2,w_3,\widetilde y_3,x,\alpha)   
 \\
 & \quad = \left\{  \begin{array}{ll}
           \exp\left[   \pi_1(w_1,x,\alpha)  - \pi_2(w_2,x,\alpha)  \right]     & \text{if }   \widetilde y =(0,1,0), 
            \\
         \exp\left[   \pi_1(w_1,x,\alpha)  -  \pi_3(w_3,x,\alpha)  \right]     & \text{if }  \widetilde y =(0,1,1),        
            \\
         -1  & \text{if }  (\widetilde y_1, \widetilde y_2)=(1,0) ,                        
            \\
     \exp\left[    \pi_3(w_3,x,\alpha)  - \pi_2(w_2,x,\alpha) \right]  - 1  & \text{if }  \widetilde y =(1,1,0) , 
            \\
                      0 & \text{otherwise},
       \end{array} \right.
 \\
 & \quad =   -   \mathbbm{1}\left\{ \widetilde y_1 = 1 \right\}
  + \left\{  \begin{array}{ll}
           \exp\left[   \pi_1(w_1,x,\alpha)  - \pi_2(w_2,x,\alpha)  \right]     & \text{if }   \widetilde y =(0,1,0), 
            \\
         \exp\left[   \pi_1(w_1,x,\alpha)  -  \pi_3(w_3,x,\alpha)  \right]     & \text{if }  \widetilde y =(0,1,1),        
            \\
     \exp\left[    \pi_3(w_3,x,\alpha)  - \pi_2(w_2,x,\alpha) \right]  
  & \text{if }  \widetilde y =(1,1,0) , 
            \\
    1  
  & \text{if }  \widetilde y =(1,1,1) , 
            \\
                      0 & \text{otherwise},
       \end{array} \right.
 \\
& \quad =       -   \mathbbm{1}\left\{ \widetilde y_1 = 1 \right\} 
   \\ &     \qquad     
   +
        \mathbbm{1}\left\{ \widetilde y_2 = 1 \right\} \,        
      \exp  \left\{  
      \sum_{t=1}^3 \widetilde y_{t} \bigg[
      \pi_{\delta(t)}\left(w_{\delta(t)},x,\alpha \right)
     -  \pi_t(w_t,x,\alpha)  
             \bigg]
         \right\}
 \\
& \quad =       -   \mathbbm{1}\left\{ \widetilde y_1 = 1 \right\} 
   \\ &     \qquad     
   +
        \mathbbm{1}\left\{ \widetilde y_2 = 1 \right\} \,        
      \exp  \left\{ \frac 1 2 
      \sum_{t=1}^3 (2\widetilde y_t-1) \bigg[
         \pi_{\delta(t)}\left(w_{\delta(t)},x,\alpha \right)
     -
           \pi_t(w_t,x,\alpha)  
             \bigg]
         \right\}
 \\
& \quad =   -   \mathbbm{1}\left\{ \widetilde y_1 = 1 \right\} 
   \\ &     \qquad     
   +
        \mathbbm{1}\left\{ \widetilde y_2 = 1 \right\} \,        
      \exp  \left( \frac 1 2 
      \sum_{t=1}^3 \bigg\{
     \Lambda^{-1}\Big[ 
     p_{\delta(t)} \left(\widetilde y_t \, \big| \,w_{\delta(t)},x,\alpha \right)   \Big]
     -
             \Lambda^{-1}\Big[ p_t \left(\widetilde y_{t} \, \big| \,w_t,x,\alpha \right)  \Big]
             \bigg\}
         \right) ,
\end{align*}
where in the last step we used that
\eqref{DefPiTproof} 
together with
$\sum_{y \in \{0,1\}} p_{t} \left( y   \big|  w_t,x,\alpha \right)  =1 $
implies that 
$ p_{t} \left( y   \big|  w_t,x,\alpha \right) 
= \Lambda [ (2y-1)  
         \, \pi_{}\left(w_{t},x,\alpha \right) ]$,
         for all $y \in \{0,1\}$ and $t\in \{1,2,3\}$.
We have therefore shown the result of 
Lemma~\ref{lemma:Markov} for $q=1$.
The result for $q=0$ directly follows from this by applying the 
symmetry transformation $\widetilde y_t \, \leftrightarrow \, 1-\widetilde y_t$, and
$\pi_t(w_t,x,\alpha) \, \leftrightarrow \, - \pi_t(w_t,x,\alpha)$.
\end{proof}

\bigskip

\begin{proof}[\bf Proof of Theorem~\ref{th:AR1moments}]
Let the assumptions of Theorem~\ref{th:AR1moments} be satisfied.
By choosing 
$(\widetilde Y_1, \widetilde  Y_2, \widetilde  Y_3)
=( Y_t,  Y_s,  Y_r)$
and 
$(W_1, W_2, W_3)
=( Y_{t-1},  Y_{s-1},  Y_{r-1})$, it is then easy to verify that the
conditions of Lemma~\ref{lemma:Markov} are satisfied,
when conditioning on $(X,A,Y_{0},\, Y_{1},  \ldots \allowbreak,\allowbreak Y_{t-1})$.
The only substantial 
assumption to verify here is the Markov chain condition,
which immediately follows from the AR(1) structure of $Y_t$.
The panel AR(1) model restriction in \eqref{model} (after dropping the index $i$)
then implies that 
the moment function
$ m^{(q)}(w_1,\widetilde y_1,w_2,\widetilde y_2,w_3,\widetilde y_3,x,\alpha) $
in the lemma now reads
\begin{align*}
   &  m^{(q)(t,s,r)}   (y,y_0,x,\beta_0 ,\gamma_0 )
  \\   &  \quad  =    -   \mathbbm{1}\left\{ y_t = q \right\} 
 +
        \mathbbm{1}\left\{ y_s = q \right\} \,        
      \exp  \left\{ \frac 1 2 
    \left[  (2y_t-1) z_{rt} 
          + (2y_s-1) z_{ts} 
          + (2y_r-1) z_{sr} 
          \right] \right\}
  \\   &  \quad  =    -   \mathbbm{1}\left\{ y_t = q \right\} 
 +
        \mathbbm{1}\left\{ y_s = q \right\} \,        
      \exp   
    \left(   y_t \, z_{rt} 
          +  q \, z_{ts} 
          + y_r \, z_{sr} 
          \right) , 
\end{align*}
where $z_{ts} = z_{ts}(y,y_{0},x,\beta_0 ,\gamma_0) 
= (x_t-x_s)' \beta_0 + (y_{t-1} - y_{s-1}) \gamma_0$
was already defined in the main text.
It is easy to verify that the moment functions in the last display
coincide with those
defined in \eqref{MomentsGeneralT} of the main text.
Lemma~\ref{lemma:Markov} 
therefore guarantees that the conditional moment condition
in Theorem~\ref{th:AR1moments} holds.
\end{proof}

\subsection{Remarks on more general predetermined regressors} 
\label{app:MoreGeneralPredetermined}

Everywhere in the main text of the paper we have assumed
that $X_{it}$ is strictly exogenous (because the model
for the outcomes was specified conditional on $X$).
However, our results for the panel AR(1) model can still be applied to
certain types of predetermined regressors.
In particular, arbitrary feedback from
last period outcomes $Y_{i,t-1}$ into $X_{it}$ 
can be allowed for, as long as $X_{it}$ otherwise only
depends on strictly exogenous variables. That is,
for an arbitrary function $h(\cdot,\cdot)$
and an unobserved strictly exogenous variable $\widetilde X_{it}$ 
we consider
\begin{align}
   X_{it} = h(Y_{i,t-1},\widetilde X_{it}). 
   \label{GeneralizedX}
\end{align}
Note that \eqref{modelGENERALIZED} 
is a special case of this where 
$\widetilde X_{it} = X_{it}$ is observed 
and the regressors that actually enter into the model
are given by the functions $(1-Y_{i,t-1}) \, X_{it}$ and
$Y_{i,t-1} \, X_{it}$. The novel point here
is that $\widetilde X_{it}$ can be unobserved.
In this setting, the model in \eqref{model}
needs to be changed to
\begin{equation*}
\mathrm{Pr}\left( Y_{it}=1\,\big|\,Y_{i}^{t-1},\widetilde X_{i},A_{i}\right) =\frac{%
\exp \big(X_{it}^{\prime }\,\beta +Y_{i,t-1}\,\gamma +A_{i}\big)}{1+\exp %
\big(X_{it}^{\prime }\,\beta +Y_{i,t-1}\,\gamma +A_{i}\big)},
\end{equation*}
which by the law of iterated expectations implies 
\begin{equation*}
\mathrm{Pr}\left( Y_{it}=1\,\big|\,Y_{i}^{t-1},X^{t}_{i},A_{i}\right) =\frac{%
\exp \big(X_{it}^{\prime }\,\beta +Y_{i,t-1}\,\gamma +A_{i}\big)}{1+\exp %
\big(X_{it}^{\prime }\,\beta +Y_{i,t-1}\,\gamma +A_{i}\big)},
\end{equation*}
where $X^{t}_{i}=(X_{it},X_{i,t-1},X_{i,t-2},\ldots)$.
The last two displays formalize what we mean by $\widetilde X_{it}$ being
strictly exogenous and $X_{it}$ being predetermined.

The moment functions in \eqref{MomentsGeneralT} can be shown to be valid
in this setting
by
applying Lemma~\ref{lemma:Markov} with $W_t = (Y_{t-1},X_t)$, and with the conditioning
variables $X$ replaced by $\widetilde X=(\widetilde X_1,\ldots, \widetilde X_T)$. 
For covariates of the form \eqref{GeneralizedX},
the conclusion of Theorem~\ref{th:AR1moments}
then gets modified to
\begin{align*}
\mathbb{E}\left[  m^{(q)(t,s,r)}(Y,Y_0,X,\beta
_{0},\gamma _{0})\,\big|\,Y_{0},\, Y_{1},\ldots ,Y_{t-1}, \,
X_1,\, X_{2},\ldots ,X_t, \,
\widetilde X,\,A  \right] & =0,
\end{align*}
which by the law of iterated expectations implies
\begin{align}
\mathbb{E}\left[  m^{(q)(t,s,r)}(Y,Y_0,X,\beta
_{0},\gamma _{0})\,\big|\,Y_{0},\, Y_{1},\ldots ,Y_{t-1}, \,
X_1,\, X_{2},\ldots ,X_t \right] & =0.
  \label{MomentPredetermined}
\end{align}
Here, as in Theorem~\ref{th:AR1moments}, we consider $t<s<r$,
but the difference is that  we only condition on covariates up
to time period $t$ in this moment condition, 
while in the main text we always conditioned on the whole $X$.
The identification arguments and the GMM estimator
in the main text would have to be adjusted
accordingly, but we do not explore this generalization
further here.

\bigskip

Another possible extension of our results to predetermined covariates is as follows: Consider the model 
in \eqref{modelGENERALIZED}, but set $\beta_1=0$ and make $X_{it}$
predetermined, that is,
\begin{equation}
\mathrm{Pr}\left( Y_{it}=1\,\big|\,Y_{i}^{t-1},X^t_{i},A_{i}\right) =\frac{%
\exp \big[
(1-Y_{i,t-1}) X_{it}^{\prime }\,\beta_0
  +Y_{i,t-1}\,\gamma +A_{i}\big)}{1+\exp %
\big((1-Y_{i,t-1}) X_{it}^{\prime }\,\beta_0
  +Y_{i,t-1}\,\gamma +A_{i}\big]}.  \label{modelGENERALIZEDpredetermined}
\end{equation}%
It turns out that the moment conditions
in \eqref{MomentPredetermined} remain valid in this model 
for $q=1$  
even for general
pre-determined regressors $X_{it}$.
The reason for this is that
Lemma~\ref{lemma:MASTER} in the appendix
is more general than  
Lemma~\ref{lemma:Markov} in the main text.
Specifically, in Lemma~\ref{lemma:MASTER}, the condition
\eqref{RestrictionVW} only demands
$ g(w_3 \,|\, 1,w_2) = g(w_3 \,|\, 1)$,
which rules out direct feedback from $W_2$ into $W_3$ whenever $\widetilde Y_2 = 1$,
but this still allows for arbitrary feedback
from $W_2$  into $W_3$  whenever $\widetilde Y_2  = 0$.
By following the proof of  Theorem~\ref{th:AR1moments}
(where $(\widetilde Y_1, \widetilde  Y_2, \widetilde  Y_3)
=( Y_t,  Y_s,  Y_r)$),  but setting
$(W_1, W_2, W_3)
=( (Y_{t-1},(1-Y_{t-1}) X_t),  (Y_{s-1},(1-Y_{s-1}) X_s),  (Y_{r-1},(1-Y_{r-1}) X_r))$,
and conditioning on $W_1$,
we find that for
  general pre-determined $X_{it}$
  the moment condition
in \eqref{MomentPredetermined} still holds for $q=1$,
as long as  $X_{t}$   only enters into the model
for $Y_{t}$ through $(1-Y_{t-1}) X_{t}$, as in \eqref{modelGENERALIZEDpredetermined}.

Of course, model \eqref{modelGENERALIZEDpredetermined} and pre-determined regressors of the form
\eqref{GeneralizedX} are both quite restrictive.
This is why we only briefly discuss those possible extensions
to predetermined regressors here in the appendix. Nevertheless,
the possibility of using functional differencing ideas
to make progress on non-linear panel model with 
more general predetermined regressors is quite exciting, see also
\cite{Sequential2022}.

\subsection{Fixed effect logit AR($p$) models with $p>1$}
\label{sec:ARp}

In this appendix section, we consider logit AR($p$) models, that is, we
generalize the model in \eqref{model} to
\begin{equation}
\mathrm{Pr}\left( Y_{it}=1\,\big|\,Y_{i}^{t-1},X_{i},A_{i},\beta ,\gamma
\right) =\frac{\exp \big(X_{it}^{\prime }\,\beta +\sum_{\ell
=1}^{p}\,Y_{i,t-\ell }\,\gamma _{\ell }+A_{i}\big)}{1+\exp \big(%
X_{it}^{\prime }\,\beta +\sum_{\ell =1}^{p}\,Y_{i,t-\ell }\,\gamma _{\ell
}+A_{i}\big)}, %
\end{equation}%
where $\gamma =(\gamma _{1},\ldots ,\gamma _{p})^{\prime }$. We assume that
the autoregressive order $p\in \{2,3,4,\ldots \}$ is known, and that
outcomes $Y_{it}$ are observed for time periods $t=t_{0},\ldots ,T$, with $%
t_{0}=1-p$. Thus, the total number of time periods for which outcomes are
observed is $T_{\mathrm{obs}}=T+p$, consisting of $T$ periods for which the model applies
and $p$ periods to observe the initial conditions. We maintain the
definition $Y_{i}=(Y_{i1},\ldots ,Y_{iT})$, but the initial conditions are
now described by the vector $Y_{i}^{(0)}=(Y_{i,t_{0}},\ldots ,Y_{i0})$.
Analogous to~\eqref{DefProb}, we define
\begin{equation}
p_{y_{i}^{(0)}}(y_{i},x_{i},\beta ,\gamma ,\alpha _{i})=\prod_{t=1}^{T}\frac{%
\exp \big(x_{it}^{\prime }\,\beta +\sum_{\ell =1}^{p}\,y_{i,t-\ell }\,\gamma
_{\ell }+\alpha _{i}\big)}{1+\exp \big(x_{it}^{\prime }\,\beta +\sum_{\ell
=1}^{p}\,y_{i,t-\ell }\,\gamma _{\ell }+\alpha _{i}\big)},
    \label{DefProbGeneralOrder}
\end{equation}%
and we continue to denote the true model parameters by $\beta _{0}$ and $%
\gamma _{0}$. We again drop
the cross-sectional indices $i$ unless they are required.

Our main focus is again on finding moment conditions that are applicable to all
values of the parameters and all realizations of the regressors.
For a given autoregressive order $p$, one requires $T \geq 2+p$ (i.e.\ $T_{\rm obs} \geq 2+2p$) time periods to find
such general moment conditions, and the number of linearly independent
moment conditions available for each initial condition $y^{(0)}$
is then equal to
 \begin{align}
     \ell = 2^{T} - (T+1-p) \, 2^p .
     \label{MomentCountGeneral}
 \end{align}
Analogous to Section~\ref{SEC: On the number of moment conditions} one can use the
polynomial structure of \eqref{DefProbGeneralOrder} in $\exp(\alpha_i)$ to show that
at least that many moment conditions have to exist.  That \eqref{MomentCountGeneral} 
actually holds with equality can then be verified numerically.
 
In addition to those general moment conditions, 
which exist for all possible values of $\beta$, $\gamma$ and $x_{it}$,
there are additional ones
that only become available for special values of the parameters and of the regressors. For example,
for $T=4$ the AR(2) model  has $\ell=4$ general moments for each initial condition, but if $\gamma_2=0$ then the model becomes an AR(1) model with  $\ell=8$ available moments for each initial condition.\footnote{
Setting $\gamma_k = 0$ gives additional moments not only for $k=p$. For example, for the AR(2) model with $T=4$
one finds nine valid moment conditions for each initial condition when $\gamma_1=0$ and $\gamma_2 \neq 0$.
}
Furthermore, there are additional moment conditions available for special realizations of the regressors, and  those can provide identifying information for the parameters $\beta$ and $\gamma$ even when $T < 2+p$,\footnote{
Interestingly, this is not the case for AR(1) models, where for $\gamma \neq 0$ we have always found the same number
of available moment conditions, completely independent of the regressor value $x \in \mathbb{R}^{K \times T}$.
}
see Appendix~B.3.1 of \cite{Honore2022moment}.

\subsubsection{AR(2) models with $T=4$}
\label{sec:moments_p2T4}

Consider model \eqref{modelARp} with $p=2$ and $T=4$.
Again write $z_{t}(y,y_{0},x,\beta,\gamma )=x_{t}^{\prime }\,\beta +y_{t-1}\,\gamma_1 +y_{t-2}\,\gamma_2 $
 for the single index that describes how the parameters $\beta$ and $\gamma$ enter into the model
 at time period $t$,
 and  let $z_{ts}(y,y_{0},x,\beta ,\gamma)=z_{t}(y,y_{0},x,\beta ,\gamma )-z_{s}(y,y_{0},x,\beta ,\gamma )$
be its time-differences.
To save space, we drop the arguments of the differences in the following formulas and write $z_{ts}$ instead of $z_{ts}(y,y_{0},x,\beta ,\gamma)$.
One valid moment function for any
initial condition $y^{(0)} \in \{0,1\}^2$ and general covariate values $x\in \mathbb{R}^{K\times T}$ is then given by
\begin{align*}
m_{y^{(0)}}^{(a,2,4)}(y,x,\beta ,\gamma )& =\left\{
\begin{array}{@{\,}l@{\;\;}l}
\exp (z_{23} )-\exp (z_{43}  ) &
\text{if }y=(0,0,1,0), \\
\exp (z_{24}  )-1 & \text{if }y=(0,0,1,1), \\
-1 & \text{if }(y_{1},y_{2})=(0,1), \\
\exp (z_{41} +\gamma _{1}) & \text{if }%
(y_{1},y_{2},y_{3})=(1,0,0), \\
\exp (z_{41}  )\big[1+\exp (z_{23} )  -\exp (z_{43} )\big] & \text{if }y=(1,0,1,0), \\
\exp (z_{21} ) & \text{if }y=(1,0,1,1), \\
0 & \text{otherwise}.%
\end{array}%
\right.
\end{align*}
There are three additional moment functions $m_{y^{(0)}}^{(b,2,4)}(y,x,\beta ,\gamma )$, $m_{y^{(0)}}^{(c,2,4)}(y,x,\beta ,\gamma )$
and $m_{y^{(0)}}^{(d,2,4)}(y,x,\beta ,\gamma )$, and they are provided in 
Appendix~B.3.2 of \cite{Honore2022moment}.

\begin{lemma}
\label{lemma:moments_p2T4} If the outcomes $Y=(Y_{1},Y_{2},Y_{3},Y_{4})$ are
generated from model \eqref{modelARp} with $p=2$, $T=4$ and true parameters $%
\beta _{0}$ and $\gamma _{0}$, then we have for all $y^{(0)} \in \{0,1\}^2$, $x\in \mathbb{R}%
^{K\times 4}$, $\alpha \in \mathbb{R}$, and $\xi \in \{a,b,c,d\}$ that
\begin{equation*}
\mathbb{E}\left[ m_{y^{(0)}}^{(\xi ,2,4)}(Y,X,\beta _{0},\gamma _{0})\,\Big|%
\,(Y_{-1},Y_{0})=y^{(0)},\,X=x,\,A=\alpha \right] =0.
\end{equation*}
\end{lemma}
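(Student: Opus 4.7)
The plan is to prove Lemma~\ref{lemma:moments_p2T4} by direct algebraic verification, in exactly the same spirit as the direct proof of Lemma~\ref{lemma:moments_p1T3} sketched in the excerpt and carried out in Appendix~\ref{app:lemma:moments_p1T3}. That is, for each choice of $y^{(0)}\in\{0,1\}^2$, each $\xi\in\{a,b,c,d\}$, and arbitrary $(x,\beta_0,\gamma_0,\alpha)$, I would substitute the explicit AR(2) probability formula \eqref{DefProbGeneralOrder} and the closed-form moment function given in the statement (and in Appendix~\ref{app:moments_p2T4}) into
\begin{equation*}
S_{y^{(0)}}^{(\xi)}(x,\beta_0,\gamma_0,\alpha)\;:=\;\sum_{y\in\{0,1\}^4}p_{y^{(0)}}(y,x,\beta_0,\gamma_0,\alpha)\,m_{y^{(0)}}^{(\xi,2,4)}(y,x,\beta_0,\gamma_0),
\end{equation*}
and show that $S_{y^{(0)}}^{(\xi)}\equiv 0$. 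Once this identity is established, the conditional expectation in the statement of the lemma is automatically zero because the model probabilities $p_{y^{(0)}}(y,x,\beta_0,\gamma_0,\alpha)$ are exactly the conditional probabilities $\mathrm{Pr}(Y=y\mid (Y_{-1},Y_0)=y^{(0)},X=x,A=\alpha)$.

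The bookkeeping is manageable because each moment function has a highly sparse support. For instance, $m_{y^{(0)}}^{(a,2,4)}$ is nonzero on only six outcome patterns (two fully specified outcomes with $y_1=0$; all $y$ with $(y_1,y_2)=(0,1)$; all $y$ with $(y_1,y_2,y_3)=(1,0,0)$; and the two patterns $(1,0,1,0)$, $(1,0,1,1)$), so the sum defining $S_{y^{(0)}}^{(a)}$ has at most eight nonzero summands once the indicator $\mathbbm{1}(y_1=0,y_2=1)$ and $\mathbbm{1}(y_1=1,y_2=0,y_3=0)$ have been expanded over $y_3,y_4$ and $y_4$ respectively. I would first use the fact that $p_{y^{(0)}}(y,\cdots)$ factors over $t$, write $D_t:=1+\exp(x_t'\beta_0+y_{t-1}\gamma_{0,1}+y_{t-2}\gamma_{0,2}+\alpha)$, and group the surviving $y$-patterns according to which $(y_1,y_2,y_3)$-history they share, so that many of the $D_t$'s are common denominators within a group. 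Clearing the common factor $D_1D_2D_3D_4$ reduces $S_{y^{(0)}}^{(\xi)}=0$ to a polynomial identity in the variables $e^\alpha$, $e^{x_t'\beta_0}$, $e^{\gamma_{0,1}}$, $e^{\gamma_{0,2}}$, which can be checked coefficient by coefficient in the powers of $e^\alpha$.

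To shorten the work, I would exploit the symmetry used in the AR(1) case: under the change of variables $Z_t=1-Y_t$ the AR(2) model \eqref{modelARp} is mapped to another AR(2) model with regressors $-X_t$, fixed effect $A-\gamma_{0,1}-\gamma_{0,2}$, and with the initial conditions flipped. I expect this to give a correspondence of the form $m^{(\xi,2,4)}_{y^{(0)}}(y,x,\beta,\gamma)=m^{(\xi',2,4)}_{1-y^{(0)}}(1-y,-x,\beta,\gamma)$ for appropriate pairings of $\xi$ (analogous to the $(a,b)$ swap in Section~\ref{sec:moments_p1T3}), which would reduce the four-by-four case distinction $(y^{(0)},\xi)$ to roughly half the number of independent algebraic identities. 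Once this symmetry is nailed down, only the cases with, say, $y^{(0)}=(0,0)$ need to be verified from scratch, using the explicit forms in Appendix~\ref{app:moments_p2T4proof}.

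The main obstacle is that, unlike in the AR(1), $T=3$ case, the moment coefficients genuinely depend on $\gamma_{0,1}$ outside of the single-index $z_{ts}$ (the $\exp(z_{41}+\gamma_{0,1})$ term in the $(1,0,0,y_4)$ cases), and the probability denominators $D_t$ carry both $\gamma_{0,1}$ and $\gamma_{0,2}$ in a history-dependent way. Consequently the cancellations line up across different $y$-histories rather than within a single history, and one has to track them carefully. In practice I would carry out the polynomial simplification in a computer algebra system, as suggested in Section~\ref{sec:Derivation of the New Moment Functions}, and then present the proof in the appendix as a closed-form verification group-by-group, mirroring the style of Appendix~\ref{app:lemma:moments_p1T3}.
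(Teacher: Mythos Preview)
Your proposal is correct and follows essentially the same route as the paper: verify the identity
\[
\sum_{y\in\{0,1\}^4}p_{y^{(0)}}(y,x,\beta_0,\gamma_0,\alpha)\,m_{y^{(0)}}^{(\xi,2,4)}(y,x,\beta_0,\gamma_0)=0
\]
by direct (computer-algebra-assisted) calculation for one base initial condition, then transport the result to the other initial conditions by a symmetry of the model.

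The one point where you and the paper differ is the choice of symmetry. You propose the involution $Y_t\mapsto 1-Y_t$, $X_t\mapsto -X_t$, $A\mapsto A-\gamma_1-\gamma_2$, as in the AR(1) proof. That symmetry only pairs $y^{(0)}$ with $1-y^{(0)}$ (and simultaneously permutes the $\xi$-labels), so it halves the sixteen $(y^{(0)},\xi)$ identities to eight; in particular you would still need to verify both $y^{(0)}=(0,0)$ and $y^{(0)}=(0,1)$ from scratch, not just $(0,0)$ as you write. The paper instead uses a transformation that sends $y^{(0)}\to\widetilde y^{(0)}$ while shifting $x_t'\beta$ (for $t\le p$) so that the single index $z_t=x_t'\beta+y_{t-1}\gamma_1+y_{t-2}\gamma_2$ is left invariant for every $t$. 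Because the AR(2) moment functions in Section~\ref{sec:moments_p2T4} are written entirely in terms of the differences $z_{ts}$ (and an explicit $\gamma_1$ that is untouched), both the probabilities and the moment functions are manifestly invariant under this map, so a single verification at $y^{(0)}=(0,0)$ immediately yields all four initial conditions. This is the AR($p$) analogue of the \emph{second} symmetry used in Appendix~\ref{app:lemma:moments_p1T3}, and it buys you a cleaner reduction than the $y\mapsto 1-y$ involution.
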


The proof of the lemma is discussed in Appendix~B.3.3 of \cite{Honore2022moment}.
Analogous to our discussion for the AR(1) model, one can apply GMM to those moment conditions.
Under suitable regularity conditions, this allows to estimate the parameters $\beta$ and $\gamma$ of the panel AR(2) logit
model at root-n-rate.

\subsubsection{AR(3) models with $T=5$}
\label{sec:moments_p3T5}

To illustrate the applicability of this general approach to constructing moment conditions further, we next consider a panel logit AR(3) model with $T=5$.
Since the model contains three lags, one needs to observe three
time periods as
initial conditions, which gives a total number of observations
required  of $T_{\rm obs} = 8$.
Let $z_{t}(y,y_{0},x,\beta,\gamma
)=x_{t}^{\prime }\,\beta +y_{t-1}\,\gamma_1 +y_{t-2}\,\gamma_2
+y_{t-3}\,\gamma_3$, and let $z_{ts}(y,y_{0},x,\beta
,\gamma)=z_{t}(y,y_{0},x,\beta ,\gamma )-z_{s}(y,y_{0},x,\beta ,\gamma )$.
We again drop the arguments of $z_{ts}$  in the following. There are then eight valid moment
functions for any initial condition $y^{(0)} \in \{0,1\}^3$ and general
covariate values $x\in \mathbb{R}^{K\times 5}$. The first of those is given by
\begin{equation*}
m_{y^{(0)}}^{(a,3,5)}(y,x,\beta ,\gamma )
\hspace{-2pt}
=
\hspace{-2pt}
\left\{
\begin{array}{@{}l@{\;}l}
\exp \left( z_{23}\right) -\exp \left( z_{53}\right)  & \text{if }%
y=(0,0,1,0,0), \\
\left[ \exp \left( z_{43}\right) -\exp \left( z_{45}\right) +1\right]  &  \\
\qquad \qquad \qquad \qquad  \times \left[ \exp \left( z_{25}\right) - 1 \right]
& \text{if }y=(0,0,1,0,1), \\
\exp \left( \gamma _{1}+z_{25}\right) -1 & \text{if }%
(y_{1},\ldots,y_{4})=(0,0,1,1), \\
-1 & \text{if }(y_{1},y_{2})=(0,1), \\
\exp \left( \gamma _{2}+z_{51}\right)  & \text{if }%
(y_{1},\ldots,y_{4})=(1,0,0,0), \\
\exp \left( -\gamma _{1}+\gamma _{2}+z_{51}\right)  & \text{if }%
(y_{1},\ldots,y_{4})=(1,0,0,1), \\
\exp \left( z_{51}\right) \left( \exp \left( z_{23}\right) -\exp \left(
z_{53}\right) +1\right)  & \text{if }y=(1,0,1,0,0), \\
\exp \left( z_{21}\right) +\exp \left( z_{41}\right) +\exp \left(
z_{21}+z_{43}\right)  &  \\
\quad \; \;-\exp \left( z_{21}+z_{45}\right) -\exp \left( z_{41}+z_{53}\right)
& \text{if }y=(1,0,1,0,1), \\
\exp \left( z_{21}\right)  & \text{if }(y_{1},\ldots,y_{4})=(1,0,1,1),
\\
0 & \text{otherwise},%
\end{array}%
\right.
\end{equation*}
where the additional superscripts indicate $p=3$ and $T=5$.
The remaining moment functions $m_{y^{(0)}}^{(b,3,5)}$, $m_{y^{(0)}}^{(c,3,5)}$,
$m_{y^{(0)}}^{(d,3,5)}$, $m_{y^{(0)}}^{(e,3,5)}$, $m_{y^{(0)}}^{(f,3,5)}$, $m_{y^{(0)}}^{(g,3,5)}$, and $m_{y^{(0)}}^{(h,3,5)}$
 are displayed in Appendix~B.3.4 of \cite{Honore2022moment}.  

\begin{lemma}
\label{lemma:moments_p3T5} If the outcomes $Y=(Y_{1},Y_{2},Y_{3},Y_{4},Y_5)$
are generated from model \eqref{modelARp} with $p=3$, $T=5$ and true
parameters $\beta _{0}$ and $\gamma _{0}$, then we have for all $y^{(0)} \in
\{0,1\}^3$, $x\in \mathbb{R}^{K\times 5}$, $\alpha \in \mathbb{R}$, and $\xi
\in \{a,b,c,d,e,f,g,h\}$ that
\begin{equation*}
\mathbb{E}\left[ m_{y^{(0)}}^{(\xi ,3,5)}(Y,X,\beta _{0},\gamma _{0})\,\Big|%
\,(Y_{-2},Y_{-1},Y_{0})=y^{(0)},\,X=x,\,A=\alpha \right] =0.
\end{equation*}
\end{lemma}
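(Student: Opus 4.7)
The plan is to reduce the lemma to the algebraic identity
\begin{equation*}
\sum_{y\in \{0,1\}^{5}}\,p_{y^{(0)}}(y,x,\beta _{0},\gamma _{0},\alpha )\;m_{y^{(0)}}^{(\xi ,3,5)}(y,x,\beta _{0},\gamma _{0})=0,
\end{equation*}
for every $\xi \in \{a,b,c,d,e,f,g,h\}$, every $y^{(0)}\in \{0,1\}^{3}$, every $x\in \mathbb{R}^{K\times 5}$, and every $\alpha \in \mathbb{R}$, in complete analogy to equation~\eqref{ResultLemma:moments_p1T3} in the proof of Lemma~\ref{lemma:moments_p1T3} and to the verification used for Lemma~\ref{lemma:moments_p2T4}. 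Once this identity is established, the statement of the lemma follows immediately, because the conditional expectation in the lemma is by definition the left-hand side of the identity above, with $p_{y^{(0)}}$ given by \eqref{DefProbGeneralOrder}. In particular, the lemma then does not require any separate probabilistic argument beyond the algebraic verification.

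The first step is to rewrite the five-period likelihood in product form. Using \eqref{DefProbGeneralOrder} and the single-index notation $z_{t}=x_{t}^{\prime }\beta _{0}+y_{t-1}\gamma _{0,1}+y_{t-2}\gamma _{0,2}+y_{t-3}\gamma _{0,3}$, I would factor $p_{y^{(0)}}(y,x,\beta _{0},\gamma _{0},\alpha )$ as a product of five logistic factors and introduce the normalizing denominator $D_{t}=1+\exp (z_{t}+\alpha )$. Expressed in this form, each $p_{y^{(0)}}(y,x,\beta _{0},\gamma _{0},\alpha )$ is $\big(\prod_{t=1}^{5}D_{t}\big)^{-1}$ times a monomial in $\exp (z_{t}+\alpha )$ determined by the string $y=(y_{1},\ldots ,y_{5})$, and the common denominator cancels when we verify the identity. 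Next, I would enumerate the $2^{5}=32$ possible outcome vectors and, for each $\xi $ and $y^{(0)}$, list the at most ten values of $y$ for which $m_{y^{(0)}}^{(\xi ,3,5)}(y,x,\beta _{0},\gamma _{0})$ is nonzero; this is the only support of the relevant sum.

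The core step is then to collect the surviving terms and to show that, after pulling out the common factor $\prod _{t}D_{t}^{-1}$, the remaining polynomial in $\exp (x_{t}^{\prime }\beta _{0}),\exp (\gamma _{0,1}),\exp (\gamma _{0,2}),\exp (\gamma _{0,3}),\exp (\alpha )$ is identically zero. The structure of the $m_{y^{(0)}}^{(\xi ,3,5)}$ is designed so that every monomial in $\exp (\alpha )$ that arises from a term $y$ with $\ell $ ones is cancelled by contributions from other terms with the same number of ones, up to sign and coefficients involving $\exp (z_{ts})$; this is the same cancellation mechanism explained for $T=3$ in Section~\ref{sec:Derivation of the New Moment Functions} and used for $p=2$, $T=4$ in Appendix~\ref{app:moments_p2T4proof}. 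By construction (via the linear system analogous to \eqref{SolveMa24} with $T=5$ nuisance evaluations of $\alpha $), this cancellation is guaranteed; my job in the proof is just to verify it, and I would do so by grouping the 32 terms according to $\sum _{t}y_{t}$ and checking term-by-term the coefficient of each distinct power of $\exp (\alpha )$.

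The main obstacle is not conceptual but combinatorial: for each of the eight moment functions $\xi \in \{a,\ldots ,h\}$ and each of the eight initial conditions $y^{(0)}\in \{0,1\}^{3}$, one must verify a polynomial identity in several exponentiated variables. Doing this by hand is tedious and error-prone; as in Section~\ref{sec:Derivation of the New Moment Functions} and Lemma~\ref{lemma:moments_p2T4}, I would delegate the simplifications to a computer algebra system (Mathematica or similar), which performs each verification in a fraction of a second. The only non-mechanical observation needed is the symmetry $Y_{t}\mapsto 1-Y_{t}$, $X_{t}\mapsto -X_{t}$, $A\mapsto A-\sum _{\ell }\gamma _{\ell }$, which maps the model into itself and relates pairs of moment functions in the list $\{a,\ldots ,h\}$; exploiting this cuts the number of distinct cases to verify roughly in half. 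Consequently, the proof reduces to a finite, symbolic check of the identity displayed at the start of this plan, and the full details are recorded in Appendix~\ref{app:moments_p3T5}.
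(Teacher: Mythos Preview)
Your proposal is correct and follows essentially the same route as the paper: reduce the conditional-expectation statement to the algebraic identity $\sum_{y\in\{0,1\}^5} p_{y^{(0)}}(y,x,\beta_0,\gamma_0,\alpha)\, m_{y^{(0)}}^{(\xi,3,5)}(y,x,\beta_0,\gamma_0)=0$ and verify it by direct symbolic computation, just as in the proof of Lemma~\ref{lemma:moments_p2T4}. The paper's own proof is in fact the one-line remark that the argument is ``analogous to the proof of Lemma~\ref{lemma:moments_p2T4}.''

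One efficiency difference is worth noting. You invoke the reflection $Y_t\mapsto 1-Y_t$, $X_t\mapsto -X_t$, $A\mapsto -A-\sum_\ell\gamma_\ell$ (note the missing sign on $A$ in your write-up) to pair up moment functions and roughly halve the number of $(\xi,y^{(0)})$ cases. The paper's device, transformation $(\ast)$ in Appendix~\ref{app:moments_p2T4proof}, is sharper: one replaces $y^{(0)}$ by an arbitrary $\widetilde y^{(0)}$ and shifts $x_t'\beta$ for $t\le p$ so that every single index $z_t$ is left unchanged. Because the $m_{y^{(0)}}^{(\xi,3,5)}$ are written entirely through the differences $z_{ts}$ (plus explicit $\gamma$'s), both the moment functions and the model probabilities are invariant under $(\ast)$. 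Hence verifying the identity for a \emph{single} initial condition, say $y^{(0)}=(0,0,0)$, already yields it for all eight. This reduces the symbolic check from $8\times 8$ cases to $8$ cases, rather than from $64$ to about $32$.
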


The proof of this lemma is analogous to that of Lemma~\ref{lemma:moments_p2T4}.
In fact, the structure and derivation of $m_{y^{(0)}}^{(a,3,5)}(y,x,\beta ,\gamma )$ are very similar to those
of $m_{y^{(0)}}^{(a,2,4)}(y,x,\beta ,\gamma )$.

\paragraph{AR(2) models with $T=5$:}
The AR(2) model is a special case of the AR(3) model, that is, the moment conditions in Lemma~\ref{lemma:moments_p3T5}
are also applicable to AR(2) models with $T=5$, we just need to set $\gamma_3=0$. In addition, we can
 construct valid moment functions for the  AR(2) model with $T=5$ by using  (time-shifted versions of) the moment functions
 in Section~\ref{sec:moments_p2T4}. Using the results presented so far then gives a total of twenty valid moment functions
 for AR(2) models with  $T=5$. However, there are four linear dependencies between those, so
the total number of linearly independent moment conditions available for $p=2$ and $T=5$
is  equal to $\ell=16$, in agreement with equation~\eqref{MomentCountGeneral}. See Appendix~B.3.6 of \cite{Honore2022moment}
for details.

\subsection{Computational remarks} 
\label{sec:Computation}

In Section~\ref{Sec: Strategy for exploring and using such moment conditions}
we discussed a strategy for numerically exploring the existence of moment conditions.
For this, we need to verify whether solutions to \eqref{MomentsConditional3}
exist for some fixed numerical values for $y^{(0)}$, $x$, $\theta $,
and $\alpha _{1}$,\ldots ,$\alpha _{Q}$, for some $Q>|\mathcal{Y}|$.
For those fixed values, let ${\bf F}$ be the $Q \times |\mathcal{Y}|$
matrix with entries
$f\big(y\,\big|\,y^{(0)},\,x,\,\alpha
_{q};\,\theta \big)$, 
where rows are labeled by $q \in \{1,\ldots,Q\}$ and columns are labeled by
$y \in \mathcal{Y}$. Finding a non-trivial solution to \eqref{MomentsConditional3} is then
equivalent to finding a $|\mathcal{Y}|$-vector ${\bf m} \neq 0$ such that
$$
    {\bf F} \, {\bf m} = 0 ,
$$
that is, finding an element of the nullspace of ${\bf F}$.
Another equivalent rewriting of this condition is
$$
   {\bf F}' \, {\bf F} \, {\bf m} = 0 .
$$
A solution ${\bf m} \neq 0$ exists if and only if the symmetric semi-definite 
 $|\mathcal{Y}| \times |\mathcal{Y}|$
 matrix 
${\bf F}' \, {\bf F}$ has a zero eigenvalue. 
Furthermore, the number of linearly independent valid moment functions ${\bf m}$
is equal to the number of zero eigenvalues of ${\bf F}' \, {\bf F}$.

Calculating eigenvalues numerically is of course straightforward on
modern computers. However, one needs to be careful here to distinguish true zero eigenvalues
from merely very small eigenvalues. For the numerical counting of 
linearly independent  moment conditions in this paper we have used Mathematica to
calculate the eigenvalues of ${\bf F}' \, {\bf F}$ with a working precision of 1,000 digits,
and we only counted an eigenvalue as equal to zero when it was smaller than $10^{-100}$.
See the accompanying Mathematica notebook ``BinaryChoiceMoments.nb'' for further details.

In Section \ref{sec:FindSolutions} we explain how to find analytical expressions
for the moment functions of the panel logit AR(1) model with $T=3$.
For this, one  needs to construct the $8 \times 8$ matrices $\mathbf{B}^{(0)}$ and $ \mathbf{B}^{(1)}$
and the unit vector $\mathbf{e}_4$, and then calculate the moment vectors
$\mathbf{m}^{(0)}$ and $\mathbf{m}^{(1)}$ according to \eqref{FindMomentsAnalytical}.
This calculation and the subsequent simplification of the results for 
$\mathbf{m}^{(0)}$ and $\mathbf{m}^{(1)}$ only takes a few seconds 
on a modern symbolic computation system like
Mathematica.
However, the analogous calculation for more complicated models and  larger
number of time periods $T$ can be slower and may require some additional
human input to speed up the computation.

For example, 
for the derivation of the panel AR(2) moment functions for $T=4$ 
in Appendix~\ref{sec:moments_p2T4} above,
we again followed
the approach described in Section \ref{sec:FindSolutions}, but
to keep the symbolic computation manageable, it was important to first gain some intuition
about the structure of the moment functions by numerical experimentation (i.e.\ using concrete numerical values
to calculate numerical solutions  for the moment functions when imposing various constraints and normalizations on them).
For example, for $m_{y^{(0)}}^{(a,2,4)} $ above, one can first form the conjecture, based on the numerical calculations, that there should exist
a valid moment function of the form
\begin{align}
m_{y^{(0)}}^{(a,2,4)} & =\left\{
\begin{array}{@{\,}l@{\qquad}l}
\mu_1 &
\text{if }y=(0,0,1,0), \\[-2pt]
\mu_2 & \text{if }y=(0,0,1,1),  \\[-2pt]
-1 & \text{if }(y_{1},y_{2})=(0,1),  \\[-2pt]
\mu_3 & \text{if }%
(y_{1},y_{2},y_{3})=(1,0,0),  \\[-2pt]
\mu_4   & \text{if }y=(1,0,1,0), \\[-2pt]
\mu_5 & \text{if }y=(1,0,1,1),  \\[-2pt]
0 & \text{otherwise},%
\end{array}%
\right.
   \label{ConjectureAR2moments}
\end{align}
and one can then solve for the unknown $\mu_1,\ldots,\mu_5$ analytcially by solving the linear system
\begin{equation}
\sum_{y\in \{0,1\}^{4}}\,p_{y^{(0)}}(y,x,\beta _{0},\gamma _{0},\alpha_q
)\;m_{y^{(0)}}^{(a ,2,4)}(y,x,\beta _{0},\gamma _{0})=0 ,
\quad
q \in \{1,2,3,4,5\} ,
    \label{SolveMa24}
\end{equation}
where we can choose five mutually different values $\alpha_q \in \mathbb{R}$ arbitrarily (the solution for $m_{y^{(0)}}^{(a ,2,4)}$ will not
depend on that choice).
Once the structure of the moment function in \eqref{ConjectureAR2moments} 
is found (i.e. guessed correctly via numerical experimentation), then solving \eqref{SolveMa24} to find analytic solutions for \eqref{SolveMa24}
is again very quick.

\subsection{More results for the model with heterogeneous time trends}\label{Additional Calculations for Time Trends}}

In Section \ref{Panel logit AR(1) with heterogeneous time trends} we
have already introduced and discussed the panel logit AR(1) with heterogeneous time trends.
Analogous to Section~\ref{SEC: On the number of moment conditions},
we now  derive a lower bound on the number of valid moment conditions
in this model with heterogeneous time trends.
The probability distribution for $Y_{i}=(Y_{i1},\ldots ,Y_{iT})
$ (conditional on $Y_{i0}$, $X_{i}$, $D_{i}$, $A_{i}$) is given by
\begin{equation*}
f\big(y\,\big|\,y^{(0)},\,x,\,\alpha,\delta ;\,\beta,\gamma\big)\ =\prod_{t=1}^{T}\frac{%
\exp \left( x_{t}^{\prime }\,\beta +y_{t-1}\,\gamma +t\,\delta +\alpha \right)
^{y_{t}}}{1+\exp \left( x_{t}^{\prime }\,\beta +y_{t-1}\,\gamma +t\,\delta
+\alpha \right) },
\end{equation*}
Defining  $a=\exp (\alpha )$, $\pi _{t}(y_{t-1})=\exp [x_{it}^{\prime
}\,\beta +y_{i,t-1}\,\gamma ]$ and  $d=\exp (\delta )$ and noticing that $%
\exp [\alpha +\delta \,(t-1)]=a\,d^{\,t-1}$, we have
\begin{align*}
f\big(y\,\big|\,y^{(0)},\,x,\,\alpha,\delta ;\,\beta,\gamma \big)& =\left[ a\,\pi
_{1}(y_{0})\right] ^{y_{1}}\prod_{t=2}^{T}\left\{ [1+a\,d^{\,t-1}\,\pi
_{t}(1-y_{t-1})]\,\left[ a\,d^{\,t-1}\,\pi _{t}(y_{t-1})\right]
^{y_{t}}\right\}  \\
& =:\sum_{k=0}^{2T-1}\sum_{\ell =\ell _{\min }(k)}^{\ell _{\max
}(k)}\,a^{k}\,d^{\,\ell }\,c_{k\ell }(y)
\end{align*}%
where $c_{k\ell }(y)$ is implicitly defined by the last equation, and
\begin{align*}
\ell _{\min }(k)& :=\lfloor k/2\rfloor \big(\lfloor k/2\rfloor +\mathrm{mod}%
(k,2)\big), \\
\ell _{\max }(k)& :=T\,k-\lfloor (k+1)/2\rfloor \big(\lfloor (k+1)/2\rfloor +%
\mathrm{mod}(k+1,2)\big) 
\end{align*}%
are the minimum and maximum power $d^{\,\ell }$ that are possible in the
above polynomial for a given power $a^{k}$.

Finding a function $m(y)$ that satisfies
$\sum_{y\in \mathcal{Y}}m(y)\,f\big(y\,\big|\,y^{(0)},\,x,\,\alpha,\delta;\,\beta,\gamma \big)=0$
for all $\alpha \in \mathbb{R}$ and $\delta \in \mathbb{R}$ (or equivalently
for all $a>0$ and $d>0$) is therefore equivalent
to finding a function $m(y)$ that satisfies 
$\sum_{y\in \mathcal{Y}}m(y)\, c_{k\ell }(y) = 0$ for all the values
that the indices $k$ and $\ell$ can take in the above expressions
for $f\big(y\,\big|\,y^{(0)},\,x,\,\alpha,\delta ;\,\beta,\gamma\big)$.
The total number of different
values that $k$ and $\ell$ can take is given by
\begin{align*}
& \sum_{k=0}^{2T-1}\left[ 1+\ell _{\max }(k)-\ell _{\min }(k)\right]  \\
& =2T+T\,\sum_{k=0}^{2T-1}k
\\ & \qquad
-\sum_{k=0}^{2T-1}\underbrace{\left\{ \lfloor
k/2\rfloor \big(\lfloor (k+1)/2\rfloor \big(\lfloor (k+1)/2\rfloor +\mathrm{%
mod}(k+1,2)\big)+\lfloor k/2\rfloor +\mathrm{mod}(k,2)\big)\right\} }%
_{=k(k+1)/2} \\
& =2T+T^{2}\,(2T-1)+\frac{1}{3}\,T\,(2T+1)\,(2T-1) \\
& =\frac{T}{3}\left( 2T^{2}-3T+7\right) =:   r_T
\end{align*}%
Thus, the condition $\sum_{y\in \mathcal{Y}}m(y)\, c_{k\ell }(y) = 0$ 
is imposing $r_T$ linear restrictions on the $2^T$ variables $m(y) \in \mathbb{R}$.
If $2^{T}>r_T $, then a solution needs to exist.
For $T\geq 9$ one finds that indeed $2^{T}>r_T $, which implies
that for $T\geq 9$ it must be the case that moment conditions for this model exist.

We have not calculated any of those moment conditions for $T \geq 9$.
However,
as already mentioned in the main text,
for the special case $\beta=0$ (or equivalently $x_t=0$ for all $t$),
where the only remaining common model parameter is $\gamma \in \mathbb{R}$,
we have calculated valid moment functions for this model when $T=8$. There are two such 
valid moment functions $m(y,\gamma)$ and we have obtained analytical expressions for both 
of them. 
The resulting expressions are reported in the accompanying Mathematica program
``BinaryChoiceMoments.nb''. 
All the entries of both of these moment functions are polynomials in $\exp(\gamma)$,
but the polynomial order can be as high as 79, 
making it impossible to report them here. The fact that  expressions for these moment functions
could still be obtained analytically nevertheless illustrates the wide applicability of the methods described
in this paper.

\end{document}

\newpage 
\begin{addmargin}{-1cm}
 \setstretch{1.5}

\section{Supplementary Appendix}

\subsection{Monte Carlo simulations}

\label{sec:MC}

In this section, we present the results of a small Monte Carlo study for the
panel logit AR(1) and panel data AR(2) models. Our aim is to illustrate the
possibility of our moment conditions to estimate the parameters in cases
with moderate sample sizes and a realistic number of explanatory variables.
A secondary aim is to gauge the efficiency loss of the GMM\ procedure
relative to maximum likelihood when the data generating process does not
contain individual-specific fixed effects.

\subsubsection{Fixed effects logit AR(1)}

We consider first the fixed effect logit AR(1) model in equation \eqref{model}%
. In addition to the fixed effects, there are $K=3$ or $K=10$ explanatory
variables. These are independent and identically distributed over time. We
consider $T=3$ (i.e.\ $T_{\mathrm{obs}}=4$ observed time periods), and in
each time period ($t=0,1,2,3$) we generate the random variables as follows:
$X_{1it} \sim \mathcal{N}\left( 0,1\right) $, while for $k=2,\ldots,K $ we set $X_{kit}=\left( X_{1it}+Z_{itk}\right) /\sqrt{2}$,
with $Z_{itk}\sim \mathcal{N}\left( 0,1\right) $.
$A_{i}$ is either $0$ or $\frac{1}{2}\sum_{t=0}^{3}x_{1it}$ ---  we
refer to the latter as \textquotedblleft $A_{i}$ varies\textquotedblright .
$Y_{it}$ is generated according to the model with the lagged $Y$ in
period $t=0$ set to $0$.
The parameters are  $\gamma =1$, $\beta _{1}=\beta _{2}=1$, and $\beta _{k}=0$ for $%
k=3,\ldots,K$. The number of replications is $2,500$.
The weight matrix is diagonal with the diagonals being the inverses of
the variances of each moment evaluated at the logit maximum likelihood
estimator that ignores the fixed effects.

The mean of $\left( Y_{i0},Y_{i1},Y_{i2},Y_{i3}\right) $ is approximately $%
\left( 0.500,0.577,0.589,0.590\right) $ for the designs with $\alpha _{i}=0$
and $\left( 0.500,0.561,0.570,0.571\right) $ with non-constant fixed effects.

The approximate probability of each sequence based on 100,000 draws is
displayed in Table \ref{P_sequencesAR1} in Section \ref{Section: Monte Carlo Tables}. We note that for
the design with a fixed effect that varies across observations,
approximately half of the observations do not contribute to any of the
moments discussed above.

We compare the performance of the GMM estimator to the logit maximum
likelihood estimator that ignores the fixed effect (but includes a common
constant) and to the maximum likelihood estimator that estimates a constant
for each individual.

Tables \ref{Table: No FE K=3} through \ref{Table: With FE K=10} in Section \ref{Section: Monte Carlo Tables}
display the median bias and median absolute error for each
estimator for sample sizes 500, 2,000, and 8,000. When $K=10$, the last
eight explanatory variables enter symmetrically. We therefore report the
average median bias and the average median absolute error over the
coefficients of those eight variables.

The estimation results for the logit model without fixed effects are as
expected. When the fixed effect is 0 for all observations, this is the correctly specified
maximum likelihood estimator, and  it performs very well. However, when the data-generating process includes a
fixed effect that varies across observations, the logit model will attempt
to capture all of the persistence via the lagged dependent variable, leading
to an upwards bias in that parameter. On the other hand, when we
treat the fixed effects as parameters to be estimated, these fixed effect are
estimated on the basis of three observations each, leading to severe overfitting. This, in turn, leads to large downwards
biases in the estimate of $\gamma $. The GMM\ estimator does well in terms
of bias when the sample size is large. This is true whether or not the data
generating process includes a fixed effect that varies across individuals.
Not surprisingly, this estimator is less precise than the logit maximum
likelihood estimate when the fixed effect is constant.

The GMM\ estimator does suffer from moderate small sample bias when the
number of explanatory variables is large. This is not surprising, since
estimation is based on a very large number of moments (124 when $K=10$)%
\footnote{%
This is also the reason why we do not pursue efficient GMM\ estimation.}.

Figure \ref{Figure: Densities} shows the densities for the estimators of $%
\gamma $ for the three estimators and the four designs. As predicted by
asymptotic theory, all three estimators have a distribution which is
well-approximated by a normal centered around some (pseudo-true)\ value when
the sample is large (2,000 or 8,000). The logit estimator which estimates a
fixed effect for each $i$ appears to have a slight asymmetry when $n=500$.
There are three striking feature of Figure \ref{Figure: Densities}. The
first is the obvious inconsistency of the maximum likelihood estimator when
it is misspecified (the left column of Figure \ref{Figure: Densities}). The
second striking feature of Figure \ref{Figure: Densities} is the importance
of the incidental parameters problem (the middle column). Finally, the right
column suggests that the GMM\ estimator is approximately centered on the
true value when the sample size is relatively large. As mentioned above, it
does have some bias when the sample is small.

\subsubsection{Fixed effects logit AR(2)}

The Monte Carlo design for the AR(2) model resembles that for the AR(1)\
model. In addition to the fixed effects, there are $K=3$ or $K=10$
explanatory variables. These are independent and identically distributed
over time. We consider $T=4$ (i.e.\ $T_{\mathrm{obs}}=6$ observed time
periods), and in each time period ($t=-1,0,1,2,3,4$) we generate the random
variables as follows:

$X_{1it} \sim \mathcal{N}\left( 0,1\right) $, while for $k=2,\ldots,K $ we set $X_{kit}=\left( X_{1it}+Z_{itk}\right) /\sqrt{2}$, with $%
Z_{itk}\sim \mathcal{N}\left( 0,1\right) $.
 $A_{i}$ is either $0$ or $\frac{1}{2}\sum_{t=-1}^{4}x_{1it}$ ---
we again refer to the latter as \textquotedblleft $A_{i}$ varies\textquotedblright.
  $Y_{it}$ is generated according to the model with the lagged $Y$'s in
period $t=-1$ set to $0$. The parameters are
  $\gamma _{1}=\gamma _{2}=1$, $\beta _{1}=\beta _{2}=1$, and $\beta
_{k}=0$ for $k=3,\ldots,K$,
and the number of replications is $2,500$.
  The weight matrix is diagonal with the diagonals being the inverses of
the variances of each moment evaluated at the logit maximum likelihood
estimator that ignores the fixed effects.

For the
designs with $\alpha _{i}=0$ , the mean of $\left( Y_{i,-1}Y_{i0},Y_{i1},Y_{i2},Y_{i3},Y_{i4}\right) $ is
approximately $( 0.500,0.577,\allowbreak  0.625,0.638,0.644,0.646) $. It is $\left(
0.500,0.561,0.595,0.603,0.606,0.607\right) $ with non-constant fixed
effects. Table \ref{P_sequencesAR1} in  Section \ref{Section: Monte Carlo Tables} displays the
distribution of $( Y_{i1},Y_{i2}, \allowbreak Y_{i3},Y_{i4}) $ (averaged over
the initial conditions). For the design with a fixed effect that varies
across observations, approximately $44\%$ of the observations do not
contribute to any of the moments discussed above.

We again compare the performance of the GMM estimator to the logit maximum
likelihood estimator that ignores the fixed effect (but includes a common
constant) and to the maximum likelihood estimator that estimates a constant
for each individual. As mentioned in  Section \ref{sec:Emp}, we have found
that it is important to scale the moment conditions in order to limit the
influence of any one observation. In order to limit the influence of the
explanatory variables on the moment functions, we scale the four
moment conditions in Section \ref{sec:moments_p2T4} by the sum of the absolute values of the possible seven non-zero terms in
the moment conditions in Sections \ref{sec:moments_p2T4} and \ref{app:moments_p2T4proof}
The resulting conditional moment functions are interacted with dummy variables for each
of the four initial conditions as well as $x_{i2}-x_{i1}$, $x_{i3}-x_{i2}$, and  $x_{i4}-x_{i3}$.
This leads to a total of $4\cdot\left(4+3k\right)$ moment conditions.

Tables \ref{Table: ALT.AR2 No FE K=3} through \ref{Table: ALT.AR2 With FE K=10} in
Section \ref{Section: Monte Carlo Tables} display the median bias and median absolute error for each
estimator for sample sizes 500, 2,000, and 8,000. When $K=10$, the last
eight explanatory variables enter symmetrically, and report the
average statistics of those eight variables.

The estimation results for the logit model without fixed effects are as
expected. When the fixed effect is 0 for all observations, this estimator
performs very well. However, when the data-generating process includes a
fixed effect that varies across observations, the logit model will attempt
to capture all of the persistence via the lagged dependent variables, leading
to an upwards bias in those parameters. For the designs considered here, this leads to approximately
equal bias in $\gamma_1$ and $\gamma_2$.
On the other hand, estimating a fixed effect for each individual again leads to severe overfitting
because each fixed effect is estimated on the basis of five observations. This, in turn, leads to large downwards
biases in the estimate of the $\gamma $'s.

As was the case for the AR(1) model, the GMM\ estimator always performs significantly better
than the logit estimator that treats the fixed effects as parameters to be estimated. It also outperforms the logit model
that ignores the fixed effects when the data-generating model includes fixed effects.
Not surprisingly, this estimator is
less precise than the logit maximum likelihood estimate when the fixed
effect is constant across individuals. The GMM\ estimator does suffer from a small sample bias when the
number of explanatory variables is large. Again, this is not surprising, since
estimation is based on a very large number of moments (136 when $K=10$)%

 \subsubsection{Monte Carlo Tables}
 \label{Section: Monte Carlo Tables}

\begin{table}[h!]
\caption{Frequency of $(y_0,y_1,y_2,y_3)$ for the two distributions of $\protect\alpha%
_i$  in the AR(1) Design}
\label{P_sequencesAR1}\centering
\begin{tabular}{rrcrr}
\multicolumn{2}{c}{$A_{i}=0$} &  & \multicolumn{2}{c}{$A_{i}$ varies} \\
&  &  &  &  \\
Sequence & Probability & \multicolumn{1}{r}{\ \ \ } & Sequence & Probability
\\
0000 & 6.266\% & \multicolumn{1}{r}{} & 0000 & 13.974\% \\
0001 & 6.273\% & \multicolumn{1}{r}{} & 0001 & 5.763\% \\
0010 & 4.305\% & \multicolumn{1}{r}{} & 0010 & 4.323\% \\
0011 & 8.175\% & \multicolumn{1}{r}{} & 0011 & 5.780\% \\
0100 & 4.316\% & \multicolumn{1}{r}{} & 0100 & 4.334\% \\
0101 & 4.314\% & \multicolumn{1}{r}{} & 0101 & 2.997\% \\
0110 & 5.656\% & \multicolumn{1}{r}{} & 0110 & 4.030\% \\
0111 & 10.661\% & \multicolumn{1}{r}{} & 0111 & 8.764\% \\
1000 & 4.331\% & \multicolumn{1}{r}{} & 1000 & 4.367\% \\
1001 & 4.323\% & \multicolumn{1}{r}{} & 1001 & 3.018\% \\
1010 & 3.000\% & \multicolumn{1}{r}{} & 1010 & 2.120\% \\
1011 & 5.657\% & \multicolumn{1}{r}{} & 1011 & 4.526\% \\
1100 & 5.621\% & \multicolumn{1}{r}{} & 1100 & 4.018\% \\
1101 & 5.671\% & \multicolumn{1}{r}{} & 1101 & 4.544\% \\
1110 & 7.464\% & \multicolumn{1}{r}{} & 1110 & 5.741\% \\
1111 & 13.967\% & \multicolumn{1}{r}{} & 1111 & 21.701\%%
\end{tabular}%
\end{table}

\newpage
\begin{table}[h!]
\caption{Frequency of $(y_1,y_2,y_3,y_4)$ for the two distributions of $\protect\alpha%
_i$ in the AR(2) Design}
\label{P_sequencesAR2}\centering
\begin{tabular}{rrcrr}
\multicolumn{2}{c}{$A_{i}=0$} &  & \multicolumn{2}{c}{$A_{i}$ varies} \\
&  &  &  &  \\
Sequence & Probability & \multicolumn{1}{r}{\ \ \ } & Sequence & Probability
\\
0000&  4.330 \% & \multicolumn{1}{r}{} & 0000& 13.351 \%   \\
0001&  4.349 \% & \multicolumn{1}{r}{} & 0001&  4.519 \%   \\
0010&  2.996 \% & \multicolumn{1}{r}{} & 0010&  3.476 \%   \\
0011&  5.657 \% & \multicolumn{1}{r}{} & 0011&  3.853 \%   \\
0100&  2.929 \% & \multicolumn{1}{r}{} & 0100&  3.419 \%   \\
0101&  4.013 \% & \multicolumn{1}{r}{} & 0101&  2.731 \%   \\
0110&  3.626 \% & \multicolumn{1}{r}{} & 0110&  2.599 \%   \\
0111&  9.521 \% & \multicolumn{1}{r}{} & 0111&  6.536 \%   \\
1000&  3.980 \% & \multicolumn{1}{r}{} & 1000&  4.267 \%   \\
1001&  3.959 \% & \multicolumn{1}{r}{} & 1001&  2.621 \%   \\
1010&  3.784 \% & \multicolumn{1}{r}{} & 1010&  2.605 \%   \\
1011&  7.156 \% & \multicolumn{1}{r}{} & 1011&  5.029 \%   \\
1100&  5.086 \% & \multicolumn{1}{r}{} & 1100&  3.532 \%   \\
1101&  6.981 \% & \multicolumn{1}{r}{} & 1101&  4.929 \%   \\
1110&  8.717 \% & \multicolumn{1}{r}{} & 1110&  6.028 \%   \\
1111& 22.916 \% & \multicolumn{1}{r}{} & 1111& 30.505 \%  %
\end{tabular}%
\end{table}
\begin{landscape}

\begin{table}[tbp]
\caption{AR(1). No Fixed Effects. $K=3$. $2500$ replications.}
\label{Table: No FE K=3}\centering{ {\
\begin{tabular}{lrrrrrrrrrrrrrr}
&  &  &  &  &  &  &  &  &  &  &  &  &  &  \\
\multicolumn{15}{c}{Logit MLE} \\
&  &  &  &  &  &  &  &  &  &  &  &  &  &  \\
& \multicolumn{4}{c}{$n=500$} &  & \multicolumn{4}{c}{$n=2000$} &  &
\multicolumn{4}{c}{$n=8000$} \\
& \multicolumn{1}{c}{$\gamma$} & \multicolumn{1}{c}{$\beta_1$} & \multicolumn{1}{c}{$\beta_2$} & \multicolumn{1}{c}{$\beta_{3}$}
&& \multicolumn{1}{c}{$\gamma$} & \multicolumn{1}{c}{$\beta_1$} & \multicolumn{1}{c}{$\beta_2$} & \multicolumn{1}{c}{$\beta_{3}$}
&& \multicolumn{1}{c}{$\gamma$} & \multicolumn{1}{c}{$\beta_1$} & \multicolumn{1}{c}{$\beta_2$} & \multicolumn{1}{c}{$\beta_{3}$}
\\
True &  $   1.000 $ &  $   1.000 $ &  $   1.000 $ &  $   0.000 $  & &  $   1.000 $ &  $   1.000 $ &  $   1.000 $ &  $   0.000 $  & &  $   1.000 $ &  $   1.000 $ &  $   1.000 $ &  $   0.000 $  \\
Bias &  $  -0.002 $ &  $   0.007 $ &  $   0.001 $ &  $   0.000 $  & &  $   0.001 $ &  $  -0.001 $ &  $  -0.000 $ &  $   0.002 $  & &  $   0.001 $ &  $   0.001 $ &  $   0.000 $ &  $   0.001 $  \\
MAE  &  $   0.094 $ &  $   0.081 $ &  $   0.070 $ &  $   0.063 $  & &  $   0.048 $ &  $   0.039 $ &  $   0.035 $ &  $   0.032 $  & &  $   0.023 $ &  $   0.020 $ &  $   0.017 $ &  $   0.016 $  \\
\multicolumn{1}{l}{} &  &  &  &  &  &  &  &  &  &  &  &  &  &  \\
\multicolumn{15}{c}{Logit MLE with Estimated Fixed Effects} \\
&  &  &  &  &  &  &  &  &  &  &  &  &  &  \\
& \multicolumn{4}{c}{$n=500$} &  & \multicolumn{4}{c}{$n=2000$} &  &
\multicolumn{4}{c}{$n=8000$} \\
& \multicolumn{1}{c}{$\gamma$} & \multicolumn{1}{c}{$\beta_1$} & \multicolumn{1}{c}{$\beta_2$} & \multicolumn{1}{c}{$\beta_{3}$}
&& \multicolumn{1}{c}{$\gamma$} & \multicolumn{1}{c}{$\beta_1$} & \multicolumn{1}{c}{$\beta_2$} & \multicolumn{1}{c}{$\beta_{3}$}
&& \multicolumn{1}{c}{$\gamma$} & \multicolumn{1}{c}{$\beta_1$} & \multicolumn{1}{c}{$\beta_2$} & \multicolumn{1}{c}{$\beta_{3}$}
\\
True &  $   1.000 $ &  $   1.000 $ &  $   1.000 $ &  $   0.000 $  & &  $   1.000 $ &  $   1.000 $ &  $   1.000 $ &  $   0.000 $  & &  $   1.000 $ &  $   1.000 $ &  $   1.000 $ &  $   0.000 $  \\
Bias &  $  -2.202 $ &  $   0.764 $ &  $   0.751 $ &  $  -0.002 $  & &  $  -2.201 $ &  $   0.741 $ &  $   0.747 $ &  $  -0.002 $  & &  $  -2.193 $ &  $   0.739 $ &  $   0.742 $ &  $   0.000 $  \\
MAE  &  $   2.202 $ &  $   0.764 $ &  $   0.751 $ &  $   0.169 $  & &  $   2.201 $ &  $   0.741 $ &  $   0.747 $ &  $   0.084 $  & &  $   2.193 $ &  $   0.739 $ &  $   0.742 $ &  $   0.040 $  \\
&  &  &  &  &  &  &  &  &  &  &  &  &  &  \\
\multicolumn{15}{c}{GMM} \\
&  &  &  &  &  &  &  &  &  &  &  &  &  &  \\
& \multicolumn{4}{c}{$n=500$} &  & \multicolumn{4}{c}{$n=2000$} &  &
\multicolumn{4}{c}{$n=8000$} \\
& \multicolumn{1}{c}{$\gamma$} & \multicolumn{1}{c}{$\beta_1$} & \multicolumn{1}{c}{$\beta_2$} & \multicolumn{1}{c}{$\beta_{3}$}
&& \multicolumn{1}{c}{$\gamma$} & \multicolumn{1}{c}{$\beta_1$} & \multicolumn{1}{c}{$\beta_2$} & \multicolumn{1}{c}{$\beta_{3}$}
&& \multicolumn{1}{c}{$\gamma$} & \multicolumn{1}{c}{$\beta_1$} & \multicolumn{1}{c}{$\beta_2$} & \multicolumn{1}{c}{$\beta_{3}$}
\\
True &  $   1.000 $ &  $   1.000 $ &  $   1.000 $ &  $   0.000 $  & &  $   1.000 $ &  $   1.000 $ &  $   1.000 $ &  $   0.000 $  & &  $   1.000 $ &  $   1.000 $ &  $   1.000 $ &  $   0.000 $  \\
Bias &  $   0.055 $ &  $   0.057 $ &  $   0.046 $ &  $   0.028 $  & &  $  -0.001 $ &  $   0.001 $ &  $   0.008 $ &  $   0.014 $  & &  $   0.001 $ &  $   0.000 $ &  $   0.003 $ &  $   0.003 $  \\
MAE  &  $   0.254 $ &  $   0.284 $ &  $   0.211 $ &  $   0.199 $  & &  $   0.127 $ &  $   0.131 $ &  $   0.098 $ &  $   0.092 $  & &  $   0.065 $ &  $   0.058 $ &  $   0.044 $ &  $   0.042 $  \\
&  &  &  &  &  &  &  &  &  &  &  &  &  &  \\
\multicolumn{15}{l}{Note: \textquotedblleft Bias\textquotedblright\ refers
to median bias.}%
\end{tabular}%
}}
\end{table}

\begin{table}[]
\caption{AR(1). With Fixed Effects. $K=3$. $2500$ replications.}
\label{Table: With FE K=3}\centering{
\begin{tabular}{lrrrrrrrrrrrrrr}
&  &  &  &  &  &  &  &  &  &  &  &  &  &  \\
\multicolumn{15}{c}{Logit MLE} \\
&  &  &  &  &  &  &  &  &  &  &  &  &  &  \\
& \multicolumn{4}{c}{$n=500$} &  & \multicolumn{4}{c}{$n=2000$} &  &
\multicolumn{4}{c}{$n=8000$} \\
& \multicolumn{1}{c}{$\gamma$} & \multicolumn{1}{c}{$\beta_1$} & \multicolumn{1}{c}{$\beta_2$} & \multicolumn{1}{c}{$\beta_{3}$}
&& \multicolumn{1}{c}{$\gamma$} & \multicolumn{1}{c}{$\beta_1$} & \multicolumn{1}{c}{$\beta_2$} & \multicolumn{1}{c}{$\beta_{3}$}
&& \multicolumn{1}{c}{$\gamma$} & \multicolumn{1}{c}{$\beta_1$} & \multicolumn{1}{c}{$\beta_2$} & \multicolumn{1}{c}{$\beta_{3}$}
\\
True &  $   1.000 $ &  $   1.000 $ &  $   1.000 $ &  $   0.000 $  & &  $   1.000 $ &  $   1.000 $ &  $   1.000 $ &  $   0.000 $  & &  $   1.000 $ &  $   1.000 $ &  $   1.000 $ &  $   0.000 $  \\
Bias &  $   0.756 $ &  $   0.323 $ &  $  -0.082 $ &  $   0.002 $  & &  $   0.746 $ &  $   0.314 $ &  $  -0.083 $ &  $   0.001 $  & &  $   0.745 $ &  $   0.314 $ &  $  -0.084 $ &  $   0.000 $  \\
MAE  &  $   0.756 $ &  $   0.323 $ &  $   0.099 $ &  $   0.067 $  & &  $   0.746 $ &  $   0.314 $ &  $   0.083 $ &  $   0.033 $  & &  $   0.745 $ &  $   0.314 $ &  $   0.084 $ &  $   0.017 $  \\
\multicolumn{1}{l}{} &  &  &  &  &  &  &  &  &  &  &  &  &  &  \\
\multicolumn{15}{c}{Logit MLE with Estimated Fixed Effects} \\
&  &  &  &  &  &  &  &  &  &  &  &  &  &  \\
& \multicolumn{4}{c}{$n=500$} &  & \multicolumn{4}{c}{$n=2000$} &  &
\multicolumn{4}{c}{$n=8000$} \\
& \multicolumn{1}{c}{$\gamma$} & \multicolumn{1}{c}{$\beta_1$} & \multicolumn{1}{c}{$\beta_2$} & \multicolumn{1}{c}{$\beta_{3}$}
&& \multicolumn{1}{c}{$\gamma$} & \multicolumn{1}{c}{$\beta_1$} & \multicolumn{1}{c}{$\beta_2$} & \multicolumn{1}{c}{$\beta_{3}$}
&& \multicolumn{1}{c}{$\gamma$} & \multicolumn{1}{c}{$\beta_1$} & \multicolumn{1}{c}{$\beta_2$} & \multicolumn{1}{c}{$\beta_{3}$}
\\
True &  $   1.000 $ &  $   1.000 $ &  $   1.000 $ &  $   0.000 $  & &  $   1.000 $ &  $   1.000 $ &  $   1.000 $ &  $   0.000 $  & &  $   1.000 $ &  $   1.000 $ &  $   1.000 $ &  $   0.000 $  \\
Bias &  $  -2.402 $ &  $   0.787 $ &  $   0.757 $ &  $  -0.015 $  & &  $  -2.382 $ &  $   0.751 $ &  $   0.755 $ &  $   0.002 $  & &  $  -2.368 $ &  $   0.744 $ &  $   0.750 $ &  $   0.001 $  \\
MAE  &  $   2.402 $ &  $   0.787 $ &  $   0.757 $ &  $   0.183 $  & &  $   2.382 $ &  $   0.751 $ &  $   0.755 $ &  $   0.096 $  & &  $   2.368 $ &  $   0.744 $ &  $   0.750 $ &  $   0.048 $  \\
&  &  &  &  &  &  &  &  &  &  &  &  &  &  \\
\multicolumn{15}{c}{GMM} \\
&  &  &  &  &  &  &  &  &  &  &  &  &  &  \\
& \multicolumn{4}{c}{$n=500$} &  & \multicolumn{4}{c}{$n=2000$} &  &
\multicolumn{4}{c}{$n=8000$} \\
& \multicolumn{1}{c}{$\gamma$} & \multicolumn{1}{c}{$\beta_1$} & \multicolumn{1}{c}{$\beta_2$} & \multicolumn{1}{c}{$\beta_{3}$}
&& \multicolumn{1}{c}{$\gamma$} & \multicolumn{1}{c}{$\beta_1$} & \multicolumn{1}{c}{$\beta_2$} & \multicolumn{1}{c}{$\beta_{3}$}
&& \multicolumn{1}{c}{$\gamma$} & \multicolumn{1}{c}{$\beta_1$} & \multicolumn{1}{c}{$\beta_2$} & \multicolumn{1}{c}{$\beta_{3}$}
\\
True &  $   1.000 $ &  $   1.000 $ &  $   1.000 $ &  $   0.000 $  & &  $   1.000 $ &  $   1.000 $ &  $   1.000 $ &  $   0.000 $  & &  $   1.000 $ &  $   1.000 $ &  $   1.000 $ &  $   0.000 $  \\
Bias &  $   0.147 $ &  $   0.111 $ &  $   0.053 $ &  $   0.028 $  & &  $   0.027 $ &  $   0.015 $ &  $   0.009 $ &  $   0.012 $  & &  $   0.002 $ &  $   0.000 $ &  $   0.003 $ &  $   0.006 $  \\
MAE  &  $   0.350 $ &  $   0.327 $ &  $   0.234 $ &  $   0.220 $  & &  $   0.157 $ &  $   0.150 $ &  $   0.113 $ &  $   0.103 $  & &  $   0.077 $ &  $   0.066 $ &  $   0.053 $ &  $   0.049 $  \\
&  &  &  &  &  &  &  &  &  &  &  &  &  &  \\
\multicolumn{15}{l}{Note: \textquotedblleft Bias\textquotedblright\ refers
to median bias.}%
\end{tabular}%
}
\end{table}

\newpage

\begin{table}[tbp]
\caption{AR(1). No Fixed Effects. $K=10$. $2500$ replications.}
\label{Table: No FE K=10}\centering{
\begin{tabular}{lrrrrrrrrrrrrrr}
&  &  &  &  &  &  &  &  &  &  &  &  &  &  \\
\multicolumn{15}{c}{Logit MLE} \\
&  &  &  &  &  &  &  &  &  &  &  &  &  &  \\
& \multicolumn{4}{c}{$n=500$} &  & \multicolumn{4}{c}{$n=2000$} &  &
\multicolumn{4}{c}{$n=8000$} \\
& \multicolumn{1}{c}{$\gamma$} & \multicolumn{1}{c}{$\beta_1$} & \multicolumn{1}{c}{$\beta_2$} & \multicolumn{1}{c}{$\beta_{k \geq 3}$}
&& \multicolumn{1}{c}{$\gamma$} & \multicolumn{1}{c}{$\beta_1$} & \multicolumn{1}{c}{$\beta_2$} & \multicolumn{1}{c}{$\beta_{k \geq 3}$}
&& \multicolumn{1}{c}{$\gamma$} & \multicolumn{1}{c}{$\beta_1$} & \multicolumn{1}{c}{$\beta_2$} & \multicolumn{1}{c}{$\beta_{k \geq 3}$}
\\
True &  $   1.000 $ &  $   1.000 $ &  $   1.000 $ &  $   0.000 $  & &  $   1.000 $ &  $   1.000 $ &  $   1.000 $ &  $   0.000 $  & &  $   1.000 $ &  $   1.000 $ &  $   1.000 $ &  $   0.000 $  \\
Bias &  $   0.008 $ &  $   0.019 $ &  $   0.012 $ &  $  -0.001 $  & &  $   0.002 $ &  $   0.007 $ &  $   0.003 $ &  $  -0.001 $  & &  $   0.000 $ &  $   0.001 $ &  $   0.000 $ &  $  -0.000 $  \\
MAE  &  $   0.091 $ &  $   0.151 $ &  $   0.072 $ &  $   0.065 $  & &  $   0.047 $ &  $   0.073 $ &  $   0.034 $ &  $   0.032 $  & &  $   0.023 $ &  $   0.035 $ &  $   0.017 $ &  $   0.016 $  \\
\multicolumn{1}{l}{} &  &  &  &  &  &  &  &  &  &  &  &  &  &  \\
\multicolumn{15}{c}{Logit MLE with Estimated Fixed Effects} \\
&  &  &  &  &  &  &  &  &  &  &  &  &  &  \\
& \multicolumn{4}{c}{$n=500$} &  & \multicolumn{4}{c}{$n=2000$} &  &
\multicolumn{4}{c}{$n=8000$} \\
& \multicolumn{1}{c}{$\gamma$} & \multicolumn{1}{c}{$\beta_1$} & \multicolumn{1}{c}{$\beta_2$} & \multicolumn{1}{c}{$\beta_{k \geq 3}$}
&& \multicolumn{1}{c}{$\gamma$} & \multicolumn{1}{c}{$\beta_1$} & \multicolumn{1}{c}{$\beta_2$} & \multicolumn{1}{c}{$\beta_{k \geq 3}$}
&& \multicolumn{1}{c}{$\gamma$} & \multicolumn{1}{c}{$\beta_1$} & \multicolumn{1}{c}{$\beta_2$} & \multicolumn{1}{c}{$\beta_{k \geq 3}$}
\\
True &  $   1.000 $ &  $   1.000 $ &  $   1.000 $ &  $   0.000 $  & &  $   1.000 $ &  $   1.000 $ &  $   1.000 $ &  $   0.000 $  & &  $   1.000 $ &  $   1.000 $ &  $   1.000 $ &  $   0.000 $  \\
Bias &  $  -2.239 $ &  $   0.811 $ &  $   0.805 $ &  $  -0.002 $  & &  $  -2.209 $ &  $   0.760 $ &  $   0.752 $ &  $  -0.000 $  & &  $  -2.190 $ &  $   0.740 $ &  $   0.740 $ &  $   0.000 $  \\
MAE  &  $   2.239 $ &  $   0.813 $ &  $   0.805 $ &  $   0.172 $  & &  $   2.209 $ &  $   0.760 $ &  $   0.752 $ &  $   0.083 $  & &  $   2.190 $ &  $   0.740 $ &  $   0.740 $ &  $   0.041 $  \\
&  &  &  &  &  &  &  &  &  &  &  &  &  &  \\
\multicolumn{15}{c}{GMM} \\
&  &  &  &  &  &  &  &  &  &  &  &  &  &  \\
& \multicolumn{4}{c}{$n=500$} &  & \multicolumn{4}{c}{$n=2000$} &  &
\multicolumn{4}{c}{$n=8000$} \\
& \multicolumn{1}{c}{$\gamma$} & \multicolumn{1}{c}{$\beta_1$} & \multicolumn{1}{c}{$\beta_2$} & \multicolumn{1}{c}{$\beta_{k \geq 3}$}
&& \multicolumn{1}{c}{$\gamma$} & \multicolumn{1}{c}{$\beta_1$} & \multicolumn{1}{c}{$\beta_2$} & \multicolumn{1}{c}{$\beta_{k \geq 3}$}
&& \multicolumn{1}{c}{$\gamma$} & \multicolumn{1}{c}{$\beta_1$} & \multicolumn{1}{c}{$\beta_2$} & \multicolumn{1}{c}{$\beta_{k \geq 3}$}
\\
True &  $   1.000 $ &  $   1.000 $ &  $   1.000 $ &  $   0.000 $  & &  $   1.000 $ &  $   1.000 $ &  $   1.000 $ &  $   0.000 $  & &  $   1.000 $ &  $   1.000 $ &  $   1.000 $ &  $   0.000 $  \\
Bias &  $   0.177 $ &  $   0.044 $ &  $   0.111 $ &  $   0.024 $  & &  $   0.022 $ &  $  -0.060 $ &  $  -0.003 $ &  $   0.021 $  & &  $   0.004 $ &  $  -0.063 $ &  $  -0.001 $ &  $   0.010 $  \\
MAE  &  $   0.333 $ &  $   0.647 $ &  $   0.271 $ &  $   0.235 $  & &  $   0.142 $ &  $   0.375 $ &  $   0.110 $ &  $   0.108 $  & &  $   0.068 $ &  $   0.206 $ &  $   0.050 $ &  $   0.051 $  \\
&  &  &  &  &  &  &  &  &  &  &  &  &  &  \\
\multicolumn{15}{l}{Note: \textquotedblleft Bias\textquotedblright\ refers
to median bias. The last column for each entry averages over the last eight
explanatory variables}%
\end{tabular}%
}
\end{table}
\begin{table}[tbp]
\caption{AR(1). With Fixed Effects. $K=10$. $2500$ replications.}
\label{Table: With FE K=10}\centering{
\begin{tabular}{lrrrrrrrrrrrrrr}
&  &  &  &  &  &  &  &  &  &  &  &  &  &  \\
\multicolumn{15}{c}{Logit MLE} \\
&  &  &  &  &  &  &  &  &  &  &  &  &  &  \\
& \multicolumn{4}{c}{$n=500$} &  & \multicolumn{4}{c}{$n=2000$} &  &
\multicolumn{4}{c}{$n=8000$} \\
& \multicolumn{1}{c}{$\gamma$} & \multicolumn{1}{c}{$\beta_1$} & \multicolumn{1}{c}{$\beta_2$} & \multicolumn{1}{c}{$\beta_{k \geq 3}$}
&& \multicolumn{1}{c}{$\gamma$} & \multicolumn{1}{c}{$\beta_1$} & \multicolumn{1}{c}{$\beta_2$} & \multicolumn{1}{c}{$\beta_{k \geq 3}$}
&& \multicolumn{1}{c}{$\gamma$} & \multicolumn{1}{c}{$\beta_1$} & \multicolumn{1}{c}{$\beta_2$} & \multicolumn{1}{c}{$\beta_{k \geq 3}$}
\\
True &  $   1.000 $ &  $   1.000 $ &  $   1.000 $ &  $   0.000 $  & &  $   1.000 $ &  $   1.000 $ &  $   1.000 $ &  $   0.000 $  & &  $   1.000 $ &  $   1.000 $ &  $   1.000 $ &  $   0.000 $  \\
Bias &  $   0.755 $ &  $   0.327 $ &  $  -0.072 $ &  $  -0.001 $  & &  $   0.747 $ &  $   0.323 $ &  $  -0.082 $ &  $  -0.000 $  & &  $   0.744 $ &  $   0.314 $ &  $  -0.084 $ &  $   0.000 $  \\
MAE  &  $   0.755 $ &  $   0.328 $ &  $   0.092 $ &  $   0.069 $  & &  $   0.747 $ &  $   0.323 $ &  $   0.082 $ &  $   0.034 $  & &  $   0.744 $ &  $   0.314 $ &  $   0.084 $ &  $   0.017 $  \\
\multicolumn{1}{l}{} &  &  &  &  &  &  &  &  &  &  &  &  &  &  \\
\multicolumn{15}{c}{Logit MLE with Estimated Fixed Effects} \\
&  &  &  &  &  &  &  &  &  &  &  &  &  &  \\
& \multicolumn{4}{c}{$n=500$} &  & \multicolumn{4}{c}{$n=2000$} &  &
\multicolumn{4}{c}{$n=8000$} \\
& \multicolumn{1}{c}{$\gamma$} & \multicolumn{1}{c}{$\beta_1$} & \multicolumn{1}{c}{$\beta_2$} & \multicolumn{1}{c}{$\beta_{k \geq 3}$}
&& \multicolumn{1}{c}{$\gamma$} & \multicolumn{1}{c}{$\beta_1$} & \multicolumn{1}{c}{$\beta_2$} & \multicolumn{1}{c}{$\beta_{k \geq 3}$}
&& \multicolumn{1}{c}{$\gamma$} & \multicolumn{1}{c}{$\beta_1$} & \multicolumn{1}{c}{$\beta_2$} & \multicolumn{1}{c}{$\beta_{k \geq 3}$}
\\
True &  $   1.000 $ &  $   1.000 $ &  $   1.000 $ &  $   0.000 $  & &  $   1.000 $ &  $   1.000 $ &  $   1.000 $ &  $   0.000 $  & &  $   1.000 $ &  $   1.000 $ &  $   1.000 $ &  $   0.000 $  \\
Bias &  $  -2.449 $ &  $   0.833 $ &  $   0.839 $ &  $  -0.002 $  & &  $  -2.388 $ &  $   0.770 $ &  $   0.767 $ &  $  -0.001 $  & &  $  -2.374 $ &  $   0.741 $ &  $   0.751 $ &  $   0.001 $  \\
MAE  &  $   2.449 $ &  $   0.848 $ &  $   0.839 $ &  $   0.200 $  & &  $   2.388 $ &  $   0.770 $ &  $   0.767 $ &  $   0.096 $  & &  $   2.374 $ &  $   0.741 $ &  $   0.751 $ &  $   0.046 $  \\
\multicolumn{1}{l}{} &  &  &  &  &  &  &  &  &  &  &  &  &  &  \\
\multicolumn{15}{c}{GMM} \\
&  &  &  &  &  &  &  &  &  &  &  &  &  &  \\
& \multicolumn{4}{c}{$n=500$} &  & \multicolumn{4}{c}{$n=2000$} &  &
\multicolumn{4}{c}{$n=8000$} \\
& \multicolumn{1}{c}{$\gamma$} & \multicolumn{1}{c}{$\beta_1$} & \multicolumn{1}{c}{$\beta_2$} & \multicolumn{1}{c}{$\beta_{k \geq 3}$}
&& \multicolumn{1}{c}{$\gamma$} & \multicolumn{1}{c}{$\beta_1$} & \multicolumn{1}{c}{$\beta_2$} & \multicolumn{1}{c}{$\beta_{k \geq 3}$}
&& \multicolumn{1}{c}{$\gamma$} & \multicolumn{1}{c}{$\beta_1$} & \multicolumn{1}{c}{$\beta_2$} & \multicolumn{1}{c}{$\beta_{k \geq 3}$}
\\
True &  $   1.000 $ &  $   1.000 $ &  $   1.000 $ &  $   0.000 $  & &  $   1.000 $ &  $   1.000 $ &  $   1.000 $ &  $   0.000 $  & &  $   1.000 $ &  $   1.000 $ &  $   1.000 $ &  $   0.000 $  \\
Bias &  $   0.372 $ &  $   0.246 $ &  $   0.199 $ &  $   0.018 $  & &  $   0.064 $ &  $  -0.041 $ &  $   0.003 $ &  $   0.019 $  & &  $   0.010 $ &  $  -0.046 $ &  $  -0.004 $ &  $   0.010 $  \\
MAE  &  $   0.490 $ &  $   0.780 $ &  $   0.347 $ &  $   0.289 $  & &  $   0.172 $ &  $   0.417 $ &  $   0.128 $ &  $   0.120 $  & &  $   0.084 $ &  $   0.226 $ &  $   0.060 $ &  $   0.057 $  \\
&  &  &  &  &  &  &  &  &  &  &  &  &  &  \\
\multicolumn{15}{l}{Note: \textquotedblleft Bias\textquotedblright\ refers
to median bias. The last column for each entry averages over the last eight
explanatory variables}%
\end{tabular}%
}
\end{table}

\begin{table}[tbp]
\caption{AR(2). No Fixed Effects. $K=3$. $2500$ replications.}
\label{Table: ALT.AR2 No FE K=3}\centering{ {\
\begin{tabular}{lrrrrrrrrrrrrrrrrr}
\\
\multicolumn{18}{c}{Logit MLE} \\
 \\
& \multicolumn{5}{c}{$n=500$} &  & \multicolumn{5}{c}{$n=2000$} & & \multicolumn{5}{c}{$n=8000$} \\
 & \multicolumn{1}{c}{$\gamma_1$} & \multicolumn{1}{c}{$\gamma_2$} & \multicolumn{1}{c}{$\beta_1$} & \multicolumn{1}{c}{$\beta_2$} & \multicolumn{1}{c}{$\beta_3$}  &  & \multicolumn{1}{c}{$\gamma_1$} & \multicolumn{1}{c}{$\gamma_2$} & \multicolumn{1}{c}{$\beta_1$} & \multicolumn{1}{c}{$\beta_2$} & \multicolumn{1}{c}{$\beta_3$}  &  & \multicolumn{1}{c}{$\gamma_1$} & \multicolumn{1}{c}{$\gamma_2$} & \multicolumn{1}{c}{$\beta_1$} & \multicolumn{1}{c}{$\beta_2$} & \multicolumn{1}{c}{$\beta_3$} \\
True&  $  1.00 $ &  $  0.50 $ &  $  1.00 $ &  $  1.00 $ &  $  0.00 $  & &  $  1.00 $ &  $  0.50 $ &  $  1.00 $ &  $  1.00 $ &  $  0.00 $  & &  $  1.00 $ &  $  0.50 $ &  $  1.00 $ &  $  1.00 $ &  $  0.00 $  \\
Bias&  $  0.00 $ &  $ -0.00 $ &  $  0.00 $ &  $  0.01 $ &  $  0.00 $  & &  $  0.00 $ &  $ -0.00 $ &  $  0.00 $ &  $  0.00 $ &  $ -0.00 $  & &  $  0.00 $ &  $ -0.00 $ &  $ -0.00 $ &  $  0.00 $ &  $  0.00 $  \\
MAE &  $  0.08 $ &  $  0.08 $ &  $  0.07 $ &  $  0.06 $ &  $  0.06 $  & &  $  0.04 $ &  $  0.04 $ &  $  0.04 $ &  $  0.03 $ &  $  0.03 $  & &  $  0.02 $ &  $  0.02 $ &  $  0.02 $ &  $  0.02 $ &  $  0.01 $  \\
 \\
\multicolumn{18}{c}{Logit MLE with Estimated Fixed Effects} \\
 \\
& \multicolumn{5}{c}{$n=500$} &  & \multicolumn{5}{c}{$n=2000$} & & \multicolumn{5}{c}{$n=8000$} \\
 & \multicolumn{1}{c}{$\gamma_1$} & \multicolumn{1}{c}{$\gamma_2$} & \multicolumn{1}{c}{$\beta_1$} & \multicolumn{1}{c}{$\beta_2$} & \multicolumn{1}{c}{$\beta_3$}  &  & \multicolumn{1}{c}{$\gamma_1$} & \multicolumn{1}{c}{$\gamma_2$} & \multicolumn{1}{c}{$\beta_1$} & \multicolumn{1}{c}{$\beta_2$} & \multicolumn{1}{c}{$\beta_3$}  &  & \multicolumn{1}{c}{$\gamma_1$} & \multicolumn{1}{c}{$\gamma_2$} & \multicolumn{1}{c}{$\beta_1$} & \multicolumn{1}{c}{$\beta_2$} & \multicolumn{1}{c}{$\beta_3$} \\
True&  $  1.00 $ &  $  0.50 $ &  $  1.00 $ &  $  1.00 $ &  $  0.00 $  & &  $  1.00 $ &  $  0.50 $ &  $  1.00 $ &  $  1.00 $ &  $  0.00 $  & &  $  1.00 $ &  $  0.50 $ &  $  1.00 $ &  $  1.00 $ &  $  0.00 $  \\
Bias&  $ -2.67 $ &  $ -1.73 $ &  $  0.06 $ &  $  0.07 $ &  $ -0.00 $  & &  $ -2.66 $ &  $ -1.72 $ &  $  0.06 $ &  $  0.06 $ &  $ -0.00 $  & &  $ -2.66 $ &  $ -1.72 $ &  $  0.06 $ &  $  0.06 $ &  $  0.00 $  \\
MAE &  $  2.67 $ &  $  1.73 $ &  $  0.16 $ &  $  0.13 $ &  $  0.12 $  & &  $  2.66 $ &  $  1.72 $ &  $  0.09 $ &  $  0.07 $ &  $  0.06 $  & &  $  2.66 $ &  $  1.72 $ &  $  0.06 $ &  $  0.06 $ &  $  0.03 $  \\
 \\
\multicolumn{18}{c}{GMM} \\
 \\
& \multicolumn{5}{c}{$n=500$} &  & \multicolumn{5}{c}{$n=2000$} & & \multicolumn{5}{c}{$n=8000$} \\
 & \multicolumn{1}{c}{$\gamma_1$} & \multicolumn{1}{c}{$\gamma_2$} & \multicolumn{1}{c}{$\beta_1$} & \multicolumn{1}{c}{$\beta_2$} & \multicolumn{1}{c}{$\beta_3$}  &  & \multicolumn{1}{c}{$\gamma_1$} & \multicolumn{1}{c}{$\gamma_2$} & \multicolumn{1}{c}{$\beta_1$} & \multicolumn{1}{c}{$\beta_2$} & \multicolumn{1}{c}{$\beta_3$}  &  & \multicolumn{1}{c}{$\gamma_1$} & \multicolumn{1}{c}{$\gamma_2$} & \multicolumn{1}{c}{$\beta_1$} & \multicolumn{1}{c}{$\beta_2$} & \multicolumn{1}{c}{$\beta_3$} \\
True&  $  1.00 $ &  $  0.50 $ &  $  1.00 $ &  $  1.00 $ &  $  0.00 $  & &  $  1.00 $ &  $  0.50 $ &  $  1.00 $ &  $  1.00 $ &  $  0.00 $  & &  $  1.00 $ &  $  0.50 $ &  $  1.00 $ &  $  1.00 $ &  $  0.00 $  \\
Bias&  $  0.10 $ &  $  0.06 $ &  $  0.08 $ &  $  0.07 $ &  $ -0.01 $  & &  $  0.03 $ &  $  0.02 $ &  $  0.01 $ &  $  0.00 $ &  $ -0.00 $  & &  $  0.01 $ &  $  0.01 $ &  $ -0.00 $ &  $  0.00 $ &  $  0.00 $  \\
MAE &  $  0.30 $ &  $  0.27 $ &  $  0.24 $ &  $  0.22 $ &  $  0.19 $  & &  $  0.17 $ &  $  0.14 $ &  $  0.13 $ &  $  0.11 $ &  $  0.09 $  & &  $  0.09 $ &  $  0.07 $ &  $  0.06 $ &  $  0.05 $ &  $  0.05 $  \\
&  &  &  &  &  &  &  &  &  &  &  &  &  &  \\
\multicolumn{15}{l}{Note: \textquotedblleft Bias\textquotedblright\ refers
to median bias.}%
\end{tabular}%
}}
\end{table}

\begin{table}[tbp]
\caption{AR(2). With Fixed Effects. $K=3$. $2500$ replications.}
\label{Table: ALT.AR2 With FE K=3}\centering{ {\
\begin{tabular}{lrrrrrrrrrrrrrrrrr}
\\
\multicolumn{18}{c}{Logit MLE} \\
 \\
& \multicolumn{5}{c}{$n=500$} &  & \multicolumn{5}{c}{$n=2000$} & & \multicolumn{5}{c}{$n=8000$} \\
 & \multicolumn{1}{c}{$\gamma_1$} & \multicolumn{1}{c}{$\gamma_2$} & \multicolumn{1}{c}{$\beta_1$} & \multicolumn{1}{c}{$\beta_2$} & \multicolumn{1}{c}{$\beta_3$}  &  & \multicolumn{1}{c}{$\gamma_1$} & \multicolumn{1}{c}{$\gamma_2$} & \multicolumn{1}{c}{$\beta_1$} & \multicolumn{1}{c}{$\beta_2$} & \multicolumn{1}{c}{$\beta_3$}  &  & \multicolumn{1}{c}{$\gamma_1$} & \multicolumn{1}{c}{$\gamma_2$} & \multicolumn{1}{c}{$\beta_1$} & \multicolumn{1}{c}{$\beta_2$} & \multicolumn{1}{c}{$\beta_3$} \\
True&  $  1.00 $ &  $  0.50 $ &  $  1.00 $ &  $  1.00 $ &  $  0.00 $  & &  $  1.00 $ &  $  0.50 $ &  $  1.00 $ &  $  1.00 $ &  $  0.00 $  & &  $  1.00 $ &  $  0.50 $ &  $  1.00 $ &  $  1.00 $ &  $  0.00 $  \\
Bias&  $  0.72 $ &  $  0.70 $ &  $  0.24 $ &  $ -0.10 $ &  $ -0.00 $  & &  $  0.71 $ &  $  0.70 $ &  $  0.23 $ &  $ -0.10 $ &  $  0.00 $  & &  $  0.71 $ &  $  0.70 $ &  $  0.23 $ &  $ -0.10 $ &  $  0.00 $  \\
MAE &  $  0.72 $ &  $  0.70 $ &  $  0.24 $ &  $  0.10 $ &  $  0.06 $  & &  $  0.71 $ &  $  0.70 $ &  $  0.23 $ &  $  0.10 $ &  $  0.03 $  & &  $  0.71 $ &  $  0.70 $ &  $  0.23 $ &  $  0.10 $ &  $  0.02 $  \\
 \\
\multicolumn{18}{c}{Logit MLE with Estimated Fixed Effects} \\
 \\
& \multicolumn{5}{c}{$n=500$} &  & \multicolumn{5}{c}{$n=2000$} & & \multicolumn{5}{c}{$n=8000$} \\
 & \multicolumn{1}{c}{$\gamma_1$} & \multicolumn{1}{c}{$\gamma_2$} & \multicolumn{1}{c}{$\beta_1$} & \multicolumn{1}{c}{$\beta_2$} & \multicolumn{1}{c}{$\beta_3$}  &  & \multicolumn{1}{c}{$\gamma_1$} & \multicolumn{1}{c}{$\gamma_2$} & \multicolumn{1}{c}{$\beta_1$} & \multicolumn{1}{c}{$\beta_2$} & \multicolumn{1}{c}{$\beta_3$}  &  & \multicolumn{1}{c}{$\gamma_1$} & \multicolumn{1}{c}{$\gamma_2$} & \multicolumn{1}{c}{$\beta_1$} & \multicolumn{1}{c}{$\beta_2$} & \multicolumn{1}{c}{$\beta_3$} \\
True&  $  1.00 $ &  $  0.50 $ &  $  1.00 $ &  $  1.00 $ &  $  0.00 $  & &  $  1.00 $ &  $  0.50 $ &  $  1.00 $ &  $  1.00 $ &  $  0.00 $  & &  $  1.00 $ &  $  0.50 $ &  $  1.00 $ &  $  1.00 $ &  $  0.00 $  \\
Bias&  $ -2.82 $ &  $ -1.84 $ &  $  0.05 $ &  $  0.06 $ &  $  0.00 $  & &  $ -2.79 $ &  $ -1.83 $ &  $  0.05 $ &  $  0.05 $ &  $  0.00 $  & &  $ -2.79 $ &  $ -1.82 $ &  $  0.04 $ &  $  0.05 $ &  $  0.00 $  \\
MAE &  $  2.82 $ &  $  1.84 $ &  $  0.18 $ &  $  0.15 $ &  $  0.14 $  & &  $  2.79 $ &  $  1.83 $ &  $  0.09 $ &  $  0.08 $ &  $  0.07 $  & &  $  2.79 $ &  $  1.82 $ &  $  0.05 $ &  $  0.05 $ &  $  0.03 $  \\
 \\
\multicolumn{18}{c}{GMM} \\
 \\
& \multicolumn{5}{c}{$n=500$} &  & \multicolumn{5}{c}{$n=2000$} & & \multicolumn{5}{c}{$n=8000$} \\
 & \multicolumn{1}{c}{$\gamma_1$} & \multicolumn{1}{c}{$\gamma_2$} & \multicolumn{1}{c}{$\beta_1$} & \multicolumn{1}{c}{$\beta_2$} & \multicolumn{1}{c}{$\beta_3$}  &  & \multicolumn{1}{c}{$\gamma_1$} & \multicolumn{1}{c}{$\gamma_2$} & \multicolumn{1}{c}{$\beta_1$} & \multicolumn{1}{c}{$\beta_2$} & \multicolumn{1}{c}{$\beta_3$}  &  & \multicolumn{1}{c}{$\gamma_1$} & \multicolumn{1}{c}{$\gamma_2$} & \multicolumn{1}{c}{$\beta_1$} & \multicolumn{1}{c}{$\beta_2$} & \multicolumn{1}{c}{$\beta_3$} \\
True&  $  1.00 $ &  $  0.50 $ &  $  1.00 $ &  $  1.00 $ &  $  0.00 $  & &  $  1.00 $ &  $  0.50 $ &  $  1.00 $ &  $  1.00 $ &  $  0.00 $  & &  $  1.00 $ &  $  0.50 $ &  $  1.00 $ &  $  1.00 $ &  $  0.00 $  \\
Bias&  $  0.52 $ &  $  0.40 $ &  $  0.21 $ &  $  0.07 $ &  $ -0.04 $  & &  $  0.11 $ &  $  0.09 $ &  $  0.03 $ &  $ -0.01 $ &  $ -0.02 $  & &  $  0.02 $ &  $  0.02 $ &  $  0.00 $ &  $  0.00 $ &  $ -0.00 $  \\
MAE &  $  0.59 $ &  $  0.49 $ &  $  0.36 $ &  $  0.27 $ &  $  0.24 $  & &  $  0.27 $ &  $  0.21 $ &  $  0.16 $ &  $  0.13 $ &  $  0.12 $  & &  $  0.12 $ &  $  0.09 $ &  $  0.08 $ &  $  0.06 $ &  $  0.05 $  \\
&  &  &  &  &  &  &  &  &  &  &  &  &  &  \\
\multicolumn{15}{l}{Note: \textquotedblleft Bias\textquotedblright\ refers
to median bias.}%
\end{tabular}%
}}
\end{table}

\begin{table}[tbp]
\caption{AR(2). No Fixed Effects. $K=10$. $2500$ replications.}
\label{Table: ALT.AR2 No FE K=10}\centering{ {\
\begin{tabular}{lrrrrrrrrrrrrrrrrr}
\\
\multicolumn{18}{c}{Logit MLE} \\
 \\
& \multicolumn{5}{c}{$n=500$} &  & \multicolumn{5}{c}{$n=2000$} & & \multicolumn{5}{c}{$n=8000$} \\
 & \multicolumn{1}{c}{$\gamma_1$} & \multicolumn{1}{c}{$\gamma_2$} & \multicolumn{1}{c}{$\beta_1$} & \multicolumn{1}{c}{$\beta_2$} & \multicolumn{1}{c}{$\beta_3$}  &  & \multicolumn{1}{c}{$\gamma_1$} & \multicolumn{1}{c}{$\gamma_2$} & \multicolumn{1}{c}{$\beta_1$} & \multicolumn{1}{c}{$\beta_2$} & \multicolumn{1}{c}{$\beta_3$}  &  & \multicolumn{1}{c}{$\gamma_1$} & \multicolumn{1}{c}{$\gamma_2$} & \multicolumn{1}{c}{$\beta_1$} & \multicolumn{1}{c}{$\beta_2$} & \multicolumn{1}{c}{$\beta_3$} \\
True&  $  1.00 $ &  $  0.50 $ &  $  1.00 $ &  $  1.00 $ &  $  0.00 $  & &  $  1.00 $ &  $  0.50 $ &  $  1.00 $ &  $  1.00 $ &  $  0.00 $  & &  $  1.00 $ &  $  0.50 $ &  $  1.00 $ &  $  1.00 $ &  $  0.00 $  \\
Bias&  $  0.00 $ &  $  0.00 $ &  $  0.01 $ &  $  0.01 $ &  $  0.00 $  & &  $  0.00 $ &  $  0.00 $ &  $  0.00 $ &  $  0.00 $ &  $ -0.00 $  & &  $  0.00 $ &  $  0.00 $ &  $  0.00 $ &  $  0.00 $ &  $ -0.00 $  \\
MAE &  $  0.08 $ &  $  0.08 $ &  $  0.13 $ &  $  0.06 $ &  $  0.06 $  & &  $  0.04 $ &  $  0.04 $ &  $  0.06 $ &  $  0.03 $ &  $  0.03 $  & &  $  0.02 $ &  $  0.02 $ &  $  0.03 $ &  $  0.01 $ &  $  0.01 $  \\
 \\
\multicolumn{18}{c}{Logit MLE with Estimated Fixed Effects} \\
 \\
& \multicolumn{5}{c}{$n=500$} &  & \multicolumn{5}{c}{$n=2000$} & & \multicolumn{5}{c}{$n=8000$} \\
 & \multicolumn{1}{c}{$\gamma_1$} & \multicolumn{1}{c}{$\gamma_2$} & \multicolumn{1}{c}{$\beta_1$} & \multicolumn{1}{c}{$\beta_2$} & \multicolumn{1}{c}{$\beta_3$}  &  & \multicolumn{1}{c}{$\gamma_1$} & \multicolumn{1}{c}{$\gamma_2$} & \multicolumn{1}{c}{$\beta_1$} & \multicolumn{1}{c}{$\beta_2$} & \multicolumn{1}{c}{$\beta_3$}  &  & \multicolumn{1}{c}{$\gamma_1$} & \multicolumn{1}{c}{$\gamma_2$} & \multicolumn{1}{c}{$\beta_1$} & \multicolumn{1}{c}{$\beta_2$} & \multicolumn{1}{c}{$\beta_3$} \\
True&  $  1.00 $ &  $  0.50 $ &  $  1.00 $ &  $  1.00 $ &  $  0.00 $  & &  $  1.00 $ &  $  0.50 $ &  $  1.00 $ &  $  1.00 $ &  $  0.00 $  & &  $  1.00 $ &  $  0.50 $ &  $  1.00 $ &  $  1.00 $ &  $  0.00 $  \\
Bias&  $ -2.71 $ &  $ -1.74 $ &  $  0.09 $ &  $  0.08 $ &  $ -0.00 $  & &  $ -2.67 $ &  $ -1.72 $ &  $  0.07 $ &  $  0.07 $ &  $  0.00 $  & &  $ -2.66 $ &  $ -1.72 $ &  $  0.06 $ &  $  0.06 $ &  $ -0.00 $  \\
MAE &  $  2.71 $ &  $  1.74 $ &  $  0.28 $ &  $  0.13 $ &  $  0.12 $  & &  $  2.67 $ &  $  1.72 $ &  $  0.15 $ &  $  0.08 $ &  $  0.06 $  & &  $  2.66 $ &  $  1.72 $ &  $  0.08 $ &  $  0.06 $ &  $  0.03 $  \\
 \\
\multicolumn{18}{c}{GMM} \\
 \\
& \multicolumn{5}{c}{$n=500$} &  & \multicolumn{5}{c}{$n=2000$} & & \multicolumn{5}{c}{$n=8000$} \\
 & \multicolumn{1}{c}{$\gamma_1$} & \multicolumn{1}{c}{$\gamma_2$} & \multicolumn{1}{c}{$\beta_1$} & \multicolumn{1}{c}{$\beta_2$} & \multicolumn{1}{c}{$\beta_3$}  &  & \multicolumn{1}{c}{$\gamma_1$} & \multicolumn{1}{c}{$\gamma_2$} & \multicolumn{1}{c}{$\beta_1$} & \multicolumn{1}{c}{$\beta_2$} & \multicolumn{1}{c}{$\beta_3$}  &  & \multicolumn{1}{c}{$\gamma_1$} & \multicolumn{1}{c}{$\gamma_2$} & \multicolumn{1}{c}{$\beta_1$} & \multicolumn{1}{c}{$\beta_2$} & \multicolumn{1}{c}{$\beta_3$} \\
True&  $  1.00 $ &  $  0.50 $ &  $  1.00 $ &  $  1.00 $ &  $  0.00 $  & &  $  1.00 $ &  $  0.50 $ &  $  1.00 $ &  $  1.00 $ &  $  0.00 $  & &  $  1.00 $ &  $  0.50 $ &  $  1.00 $ &  $  1.00 $ &  $  0.00 $  \\
Bias&  $  0.17 $ &  $  0.17 $ &  $  0.15 $ &  $  0.21 $ &  $  0.01 $  & &  $  0.05 $ &  $  0.05 $ &  $  0.02 $ &  $  0.04 $ &  $  0.00 $  & &  $  0.02 $ &  $  0.02 $ &  $ -0.01 $ &  $  0.01 $ &  $  0.00 $  \\
MAE &  $  0.35 $ &  $  0.32 $ &  $  0.44 $ &  $  0.28 $ &  $  0.22 $  & &  $  0.18 $ &  $  0.15 $ &  $  0.26 $ &  $  0.12 $ &  $  0.11 $  & &  $  0.10 $ &  $  0.08 $ &  $  0.15 $ &  $  0.05 $ &  $  0.05 $  \\
&  &  &  &  &  &  &  &  &  &  &  &  &  &  \\
\multicolumn{15}{l}{Note: \textquotedblleft Bias\textquotedblright\ refers
to median bias.}%
\end{tabular}%
}}
\end{table}

\begin{table}[tbp]
\caption{AR(2). With Fixed Effects. $K=10$. $2500$ replications.}
\label{Table: ALT.AR2 With FE K=10}\centering{ {\
\begin{tabular}{lrrrrrrrrrrrrrrrrr}
\\
\multicolumn{18}{c}{Logit MLE} \\
 \\
& \multicolumn{5}{c}{$n=500$} &  & \multicolumn{5}{c}{$n=2000$} & & \multicolumn{5}{c}{$n=8000$} \\
 & \multicolumn{1}{c}{$\gamma_1$} & \multicolumn{1}{c}{$\gamma_2$} & \multicolumn{1}{c}{$\beta_1$} & \multicolumn{1}{c}{$\beta_2$} & \multicolumn{1}{c}{$\beta_3$}  &  & \multicolumn{1}{c}{$\gamma_1$} & \multicolumn{1}{c}{$\gamma_2$} & \multicolumn{1}{c}{$\beta_1$} & \multicolumn{1}{c}{$\beta_2$} & \multicolumn{1}{c}{$\beta_3$}  &  & \multicolumn{1}{c}{$\gamma_1$} & \multicolumn{1}{c}{$\gamma_2$} & \multicolumn{1}{c}{$\beta_1$} & \multicolumn{1}{c}{$\beta_2$} & \multicolumn{1}{c}{$\beta_3$} \\
True&  $  1.00 $ &  $  0.50 $ &  $  1.00 $ &  $  1.00 $ &  $  0.00 $  & &  $  1.00 $ &  $  0.50 $ &  $  1.00 $ &  $  1.00 $ &  $  0.00 $  & &  $  1.00 $ &  $  0.50 $ &  $  1.00 $ &  $  1.00 $ &  $  0.00 $  \\
Bias&  $  0.72 $ &  $  0.71 $ &  $  0.24 $ &  $ -0.10 $ &  $  0.00 $  & &  $  0.71 $ &  $  0.71 $ &  $  0.23 $ &  $ -0.10 $ &  $  0.00 $  & &  $  0.71 $ &  $  0.70 $ &  $  0.23 $ &  $ -0.10 $ &  $  0.00 $  \\
MAE &  $  0.72 $ &  $  0.71 $ &  $  0.24 $ &  $  0.10 $ &  $  0.06 $  & &  $  0.71 $ &  $  0.71 $ &  $  0.23 $ &  $  0.10 $ &  $  0.03 $  & &  $  0.71 $ &  $  0.70 $ &  $  0.23 $ &  $  0.10 $ &  $  0.02 $  \\
 \\
\multicolumn{18}{c}{Logit MLE with Estimated Fixed Effects} \\
 \\
& \multicolumn{5}{c}{$n=500$} &  & \multicolumn{5}{c}{$n=2000$} & & \multicolumn{5}{c}{$n=8000$} \\
 & \multicolumn{1}{c}{$\gamma_1$} & \multicolumn{1}{c}{$\gamma_2$} & \multicolumn{1}{c}{$\beta_1$} & \multicolumn{1}{c}{$\beta_2$} & \multicolumn{1}{c}{$\beta_3$}  &  & \multicolumn{1}{c}{$\gamma_1$} & \multicolumn{1}{c}{$\gamma_2$} & \multicolumn{1}{c}{$\beta_1$} & \multicolumn{1}{c}{$\beta_2$} & \multicolumn{1}{c}{$\beta_3$}  &  & \multicolumn{1}{c}{$\gamma_1$} & \multicolumn{1}{c}{$\gamma_2$} & \multicolumn{1}{c}{$\beta_1$} & \multicolumn{1}{c}{$\beta_2$} & \multicolumn{1}{c}{$\beta_3$} \\
True&  $  1.00 $ &  $  0.50 $ &  $  1.00 $ &  $  1.00 $ &  $  0.00 $  & &  $  1.00 $ &  $  0.50 $ &  $  1.00 $ &  $  1.00 $ &  $  0.00 $  & &  $  1.00 $ &  $  0.50 $ &  $  1.00 $ &  $  1.00 $ &  $  0.00 $  \\
Bias&  $ -2.87 $ &  $ -1.86 $ &  $  0.10 $ &  $  0.07 $ &  $ -0.00 $  & &  $ -2.81 $ &  $ -1.83 $ &  $  0.06 $ &  $  0.05 $ &  $  0.00 $  & &  $ -2.79 $ &  $ -1.82 $ &  $  0.05 $ &  $  0.05 $ &  $ -0.00 $  \\
MAE &  $  2.87 $ &  $  1.86 $ &  $  0.33 $ &  $  0.16 $ &  $  0.14 $  & &  $  2.81 $ &  $  1.83 $ &  $  0.16 $ &  $  0.08 $ &  $  0.07 $  & &  $  2.79 $ &  $  1.82 $ &  $  0.09 $ &  $  0.05 $ &  $  0.03 $  \\
 \\
\multicolumn{18}{c}{GMM} \\
 \\
& \multicolumn{5}{c}{$n=500$} &  & \multicolumn{5}{c}{$n=2000$} & & \multicolumn{5}{c}{$n=8000$} \\
 & \multicolumn{1}{c}{$\gamma_1$} & \multicolumn{1}{c}{$\gamma_2$} & \multicolumn{1}{c}{$\beta_1$} & \multicolumn{1}{c}{$\beta_2$} & \multicolumn{1}{c}{$\beta_3$}  &  & \multicolumn{1}{c}{$\gamma_1$} & \multicolumn{1}{c}{$\gamma_2$} & \multicolumn{1}{c}{$\beta_1$} & \multicolumn{1}{c}{$\beta_2$} & \multicolumn{1}{c}{$\beta_3$}  &  & \multicolumn{1}{c}{$\gamma_1$} & \multicolumn{1}{c}{$\gamma_2$} & \multicolumn{1}{c}{$\beta_1$} & \multicolumn{1}{c}{$\beta_2$} & \multicolumn{1}{c}{$\beta_3$} \\
True&  $  1.00 $ &  $  0.50 $ &  $  1.00 $ &  $  1.00 $ &  $  0.00 $  & &  $  1.00 $ &  $  0.50 $ &  $  1.00 $ &  $  1.00 $ &  $  0.00 $  & &  $  1.00 $ &  $  0.50 $ &  $  1.00 $ &  $  1.00 $ &  $  0.00 $  \\
Bias&  $  0.77 $ &  $  0.67 $ &  $  0.48 $ &  $  0.30 $ &  $ -0.01 $  & &  $  0.26 $ &  $  0.23 $ &  $  0.19 $ &  $  0.04 $ &  $ -0.02 $  & &  $  0.04 $ &  $  0.04 $ &  $  0.10 $ &  $  0.00 $ &  $ -0.01 $  \\
MAE &  $  0.79 $ &  $  0.69 $ &  $  0.64 $ &  $  0.38 $ &  $  0.29 $  & &  $  0.32 $ &  $  0.27 $ &  $  0.36 $ &  $  0.15 $ &  $  0.13 $  & &  $  0.15 $ &  $  0.11 $ &  $  0.24 $ &  $  0.07 $ &  $  0.07 $  \\
&  &  &  &  &  &  &  &  &  &  &  &  &  &  \\
\multicolumn{15}{l}{Note: \textquotedblleft Bias\textquotedblright\ refers
to median bias.}%
\end{tabular}%
}}
\end{table}

\end{landscape}

\begin{landscape}

\begin{figure}[h]
\caption{Densities of Estimators of $\gamma$ for $n=500$, $n=2000$, and $n=8000$. The true value is 1}
\label{Figure: Densities}\centering
\includegraphics[scale=0.80,angle=0,origin=c]{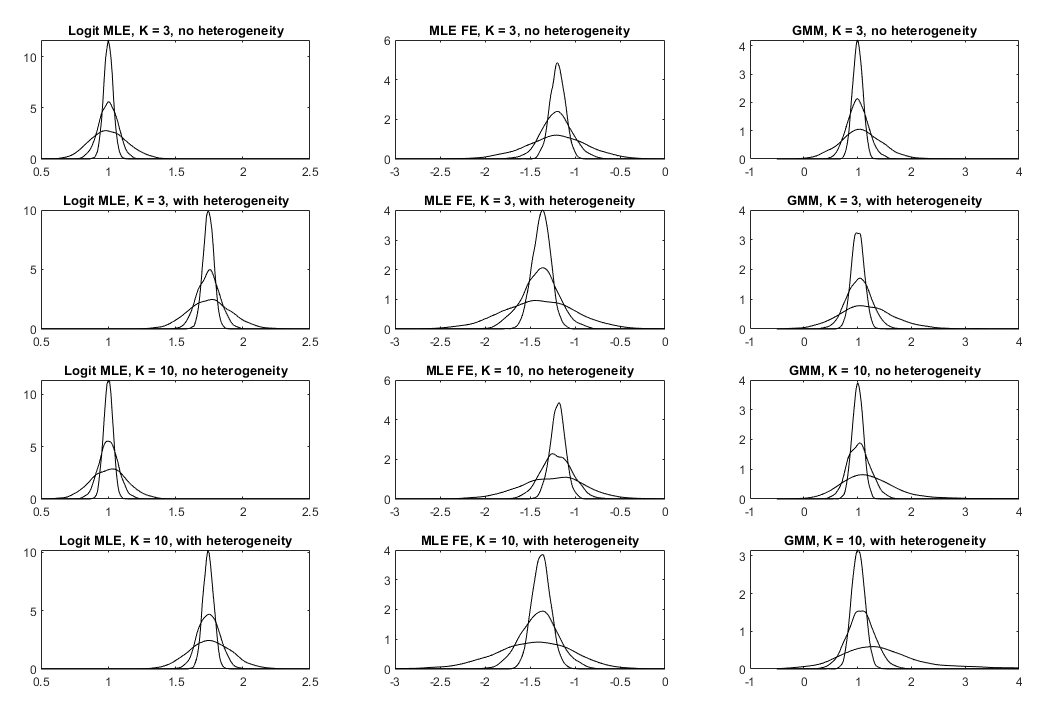}
\end{figure}

\end{landscape}

\subsection{Additional proofs}

\subsubsection{Proof of Lemma~\ref{lemma:moments_p1T3}}
\label{app:lemma:moments_p1T3}

Lemma~\ref{lemma:moments_p1T3} is a special case   of Theorem~\ref{th:AR1moments}, which was already proven in
Appendix~\ref{app:proofs}. Here, we provide an
alternative proof of Lemma~\ref{lemma:moments_p1T3} 
by direct calculation. 
By plugging in the definition of the model probabilities $%
p(y,y_{0},x,\beta _{0},\gamma _{0},\alpha )$ and the moment functions $%
m^{(q)}(y,y_{0},x,\beta _{0},\gamma _{0})$ we want to verify that
\begin{equation*}
\sum_{y\in \{0,1\}^{3}}\,p(y,y_{0},x,\beta _{0},\gamma _{0},\alpha
)\;m^{(q)}(y,y_{0},x,\beta _{0},\gamma _{0})=0,
\end{equation*}
for $q \in \{0,1\}$.
For $y_{0}=0$ and $q=1$ we obtain
\begin{align*}
& \sum_{y\in \{0,1\}^{3}}\,p(y,0,x,\beta _{0},\gamma _{0},\alpha
)\;m^{(1)}(y,0,x,\beta _{0},\gamma _{0})=\frac{\exp \left( \zeta _{1}-\zeta
_{2}\right) }{\left[ 1+\exp (\zeta _{1})\right] \left[ 1+\exp (-\zeta _{2})%
\right] \left[ 1+\exp (\zeta _{3}+\gamma _{0})\right] } \\
& \quad +\frac{\exp \left( \zeta _{1}-\zeta _{3}-\gamma _{0}\right) }{\left[
1+\exp (\zeta _{1})\right] \left[ 1+\exp (-\zeta _{2})\right] \left[ 1+\exp
(-\zeta _{3}-\gamma _{0})\right] }-\frac{1}{\left[ 1+\exp (-\zeta _{1})\right] %
\left[ 1+\exp (\zeta _{2}+\gamma _{0})\right] } \\
& \quad +\frac{\exp \left[ (\zeta _{3}+\gamma _{0})-(\zeta _{2}+\gamma _{0})%
\right] -1}{\left[ 1+\exp (-\zeta _{1})\right] \left[ 1+\exp (-\zeta _{2}-\gamma
_{0})\right] \left[ 1+\exp (\zeta _{3}+\gamma _{0})\right] },
\end{align*}%
where $\zeta _{t}:=x_{t}^{\prime }\,\beta _{0}+\alpha $, and we used that $\zeta
_{t}-\zeta _{s}=x_{ts}^{\prime }\,\beta _{0}$ and $(\zeta _{3}+\gamma _{0})-(\zeta
_{2}+\gamma _{0})=x_{32}^{\prime }\,\beta _{0}$. Simplifying the expression
in the last display we obtain
\begin{align*}
& \sum_{y\in \{0,1\}^{3}}\,p(y,0,x,\beta _{0},\gamma _{0},\alpha
)\;m^{(1)}(y,0,x,\beta _{0},\gamma _{0})=\frac{1}{\left[ 1+\exp (-\zeta _{1})%
\right] \left[ 1+\exp (\zeta _{2})\right] \left[ 1+\exp (\zeta _{3}+\gamma _{0})%
\right] } \\
& \quad +\frac{1}{\left[ 1+\exp (-\zeta _{1})\right] \left[ 1+\exp (-\zeta _{2})%
\right] \left[ 1+\exp (\zeta _{3}+\gamma _{0})\right] }-\frac{1}{\left[ 1+\exp
(-\zeta _{1})\right] \left[ 1+\exp (\zeta _{2}+\gamma _{0})\right] } \\
& \quad +\frac{1}{\left[ 1+\exp (-\zeta _{1})\right] \left[ 1+\exp (\zeta
_{2}+\gamma _{0})\right] \left[ 1+\exp (-\zeta _{3}-\gamma _{0})\right] } \\
& \quad -\frac{1}{\left[ 1+\exp (-\zeta _{1})\right] \left[ 1+\exp (-\zeta
_{2}-\gamma _{0})\right] \left[ 1+\exp (\zeta _{3}+\gamma _{0})\right] },
\end{align*}%
where we used multiple times that $\exp (c)/[1+\exp (c)]=1/[1+\exp (-c)]$,
for $c\in \mathbb{R}$. The first two summands on the right hand side of the
last display add up to
\begin{equation*}
\frac{1}{\left[ 1+\exp (-\zeta _{1})\right] \left[ 1+\exp (\zeta _{3}+\gamma
_{0})\right] },
\end{equation*}%
because $1/\left[ 1+\exp (\zeta _{2})\right] +1/\left[ 1+\exp (-\zeta _{2})%
\right] =1$. Subtracting the very last term in that right hand side
expression gives
\begin{align*}
& \frac{1}{\left[ 1+\exp (-\zeta _{1})\right] \left[ 1+\exp (\zeta _{3}+\gamma
_{0})\right] }-\frac{1}{\left[ 1+\exp (-\zeta _{1})\right] \left[ 1+\exp (-\zeta
_{2}-\gamma _{0})\right] \left[ 1+\exp (\zeta _{3}+\gamma _{0})\right] } \\
& =\frac{1}{\left[ 1+\exp (-\zeta _{1})\right] \left[ 1+\exp (\zeta _{2}+\gamma
_{0})\right] \left[ 1+\exp (\zeta _{3}+\gamma _{0})\right] },
\end{align*}%
because $1-1/\left[ 1+\exp (-\zeta _{2}-\gamma _{0})\right] =1/\left[ 1+\exp
(\zeta _{2}+\gamma _{0})\right] $. We thus obtain
\begin{align*}
& \sum_{y\in \{0,1\}^{3}}\,p(y,0,x,\beta _{0},\gamma _{0},\alpha
)\;m^{(1)}(y,0,x,\beta _{0},\gamma _{0})=-\frac{1}{\left[ 1+\exp (-\zeta
_{1})\right] \left[ 1+\exp (\zeta _{2}+\gamma _{0})\right] } \\
& \quad +\frac{1}{\left[ 1+\exp (-\zeta _{1})\right] \left[ 1+\exp (\zeta
_{2}+\gamma _{0})\right] \left[ 1+\exp (-\zeta _{3}-\gamma _{0})\right] } \\
& \quad +\frac{1}{\left[ 1+\exp (-\zeta _{1})\right] \left[ 1+\exp (\zeta
_{2}+\gamma _{0})\right] \left[ 1+\exp (\zeta _{3}+\gamma _{0})\right] } \\
& =0,
\end{align*}%
where we used that $-1+1/\left[ 1+\exp (-\zeta _{3}-\gamma _{0})\right] + 1/\left[ 1+\exp
(\zeta _{3}+\gamma _{0})\right] =0$. We have thus explicitly shown the
statement of Lemma~\ref{lemma:moments_p1T3} for $y_{0}=0$ and $q=1$.

The results for other values of $y_{0},q \in \{0,1\}$ can be derived
analogously. However, once the result for $y_{0}=0$ and $q=1$ is derived, then
there is actually no need for any further calculation. Instead, it suffices
to note that the model probabilities $p(y,y_0,x,\beta,\gamma,\alpha)$
are unchanged under the symmetry transformation

\begin{itemize}
\item $y_t \, \leftrightarrow \, 1-y_t$, \; \; $x_t \, \leftrightarrow \ -
x_t$, \; \; $\beta \, \leftrightarrow \, \beta $, \; \; $\gamma \,
\leftrightarrow \, \gamma $, \; \; $\alpha \, \leftrightarrow \, - \alpha - \gamma$,
\end{itemize}

\noindent and the same transformation applied to the moment function 
$m^{(q)}(y,y_{0},x,\beta,\gamma)$ gives the moment function $m^{(1-q)}(y,1-y_{0},x,\beta,\gamma)$. 
Furthermore, the model probabilities $p(y,y_{0},x,\beta ,\gamma ,\alpha )$
are also unchanged under the transformation

\begin{itemize}
\item $y_{0}\,\rightarrow \,1-y_{0}$ , \ \ $x_{1}^{\prime }\beta
\,\rightarrow \,x_{1}^{\prime }\beta +(2y_{0}-1)\gamma $ , \ \ $y_{t-1}$ and
$x_{t}$ unchanged for $t\geq 2$,
\end{itemize}

\noindent with parameters $\beta $, $\gamma $, $\alpha $ otherwise
unchanged. Note that for this symmetry transformation we need to consider $%
p(y,y_{0},x,\beta ,\gamma ,\alpha )$ as a function of the product $%
x_{t}^{\prime }\beta $, instead of $x_{t}$ and $\beta $ individually.
Applying this transformation to $m^{(q)}(y,y_{0},x,\beta,\gamma)$ gives the moment function $m^{(q)}(y,1-y_{0},x,\beta,\gamma)$. By
applying these symmetry transformation to our known result for $y_{0}=0$ and $q=1$ we therefore obtain the result for all $y_{0},q \in \{0,1\}$.
\hfill $\blacksquare$

\subsection{Additional Material and Omitted Proofs for Section~\ref{sec:ARp}}
\label{sec:additionalARp}

\subsubsection{AR($p$) models with $p \geq 2$, $T=3$, and $x_{2}=x_{3}$}
\label{sec:moments_p234T3}

Consider model (\ref{modelARp}) with $p\geq 2$ and $T=3$ (i.e., $T_{\mathrm{obs}}=3+p$ total periods).
In this case,
there are no moment conditions available that are valid for all possible
realizations of the regressors $x\in \mathbb{R}^{K\times 3}$.
However, for
regressor realizations $x=(x_{1},x_{2},x_{3})$ with $x_{2}=x_{3}$, one finds
valid moment conditions for the $p$-vectors of initial conditions $%
y^{(0)}=(y_{t_{0}},\ldots ,y_{0})$ that are constant over their last $p-1$
elements. It is interesting that the condition $x_{2}=x_{3}$ appears, since
this is exactly the kind of condition that was used in \cite{honore2000panel}%
.

For $r\in \{1,2,\ldots \}$, let $0_{r}=(0,0,\ldots ,0)$ and $%
1_{r}=(1,1,\ldots ,1)$ be $r$-vectors with all entries equal to zero or one,
respectively. Then, for $p\geq 2$, $T=3$, and $x_{2}=x_{3}$, we have one
valid moment function $m_{y^{(0)}}(y,x,\beta ,\gamma )$ for each of the initial conditions $y^{(0)}=0_{p}$, $%
y^{(0)}=(0,1_{p-1})$, $y^{(0)}=(1,0_{p-1})$, and $y^{(0)}=\allowbreak 1_{p}$%
 . They read
\begin{align*}
m_{(0_{p})}(y,x,\beta ,\gamma )& =\left\{
\begin{array}{ll}
\exp \left( x_{12}^{\prime }\beta \right)  & \text{if }y=(0,1,0), \\
\exp \left( x_{12}^{\prime }\beta -\gamma _{1}\right)  & \text{if }y=(0,1,1),
\\
-1 & \text{if }(y_{1},y_{2})=(1,0), \\
0 & \text{otherwise},%
\end{array}%
\right.  \\[5pt]
m_{(0,1_{p-1})}(y,x,\beta ,\gamma )& =\left\{
\begin{array}{ll}
-1 & \text{if }(y_{1},y_{2})=(0,1), \\
\exp \left( x_{21}^{\prime }\beta -\gamma _{1}+\gamma _{p}\right)  & \text{%
if }y=(1,0,0), \\
\exp \left( x_{21}^{\prime }\beta +\gamma _{p}\right)  & \text{if }y=(1,0,1),
\\
0 & \text{otherwise},%
\end{array}%
\right.  \\[5pt]
m_{(1,0_{p-1})}(y,x,\beta ,\gamma )& =\left\{
\begin{array}{ll}
\exp \left( x_{12}^{\prime }\beta +\gamma _{p}\right)  & \text{if }y=(0,1,0),
\\
\exp \left( x_{12}^{\prime }\beta -\gamma _{1}+\gamma _{p}\right)  & \text{%
if }y=(0,1,1), \\
-1 & \text{if }(y_{1},y_{2})=(1,0), \\
0 & \text{otherwise},%
\end{array}%
\right.  \\[5pt]
m_{(1_{p})}(y,x,\beta ,\gamma )& =\left\{
\begin{array}{ll}
-1 & \text{if }(y_{1},y_{2})=(0,1), \\
\exp \left( x_{21}^{\prime }\beta -\gamma _{1}\right)  & \text{if }y=(1,0,0),
\\
\exp \left( x_{21}^{\prime }\beta \right)  & \text{if }y=(1,0,1), \\
0 & \text{otherwise}.%
\end{array}%
\right.
\end{align*}%
Here, the subscripts on $m_{y^{(0)}}$ denote the corresponding initial
condition. Thus, for $p=2$ we have one moment function available for each
possible initial condition $y^{(0)}\in \{0,1\}^{2}$, and these moment
functions together deliver information on all of the model parameters $\beta
$ and $(\gamma _{1},\gamma _{2})$. By contrast, for $p>2$ we only have
moment functions for four out of $2^{p}$ possible initial conditions $%
y^{(0)}\in \{0,1\}^{p}$.

\begin{lemma}
\label{lemma:moments_T3} If the outcomes $Y=(Y_{1},Y_{2},Y_{3})$ are
generated from model \eqref{modelARp} with $p\geq 2$, $T=3$, and true
parameters $\beta _{0}$ and $\gamma _{0}$, then we have for all $y^{(0)}\in %
\big\{0_{p},(0,1_{p-1}),(1,0_{p-1}),\allowbreak 1_{p}\big\}$, $%
(x_{1},x_{2})\in \mathbb{R}^{K\times 2}$, and $\alpha \in \mathbb{R}$ that
\begin{equation*}
\mathbb{E}\left[ m_{y^{(0)}}(Y,X,\beta _{0},\gamma _{0})\,\big|%
\,Y^{(0)}=y^{(0)},\,X=(x_{1},x_{2},x_{2}),\,A=\alpha \right] =0.
\end{equation*}
\end{lemma}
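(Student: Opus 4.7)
The plan is to verify the stated conditional moment identity
$$\sum_{y\in \{0,1\}^3} p_{y^{(0)}}(y,x,\beta_0,\gamma_0,\alpha) \, m_{y^{(0)}}(y,x,\beta_0,\gamma_0) = 0$$
by direct algebraic computation, in the spirit of the proof of Lemma~\ref{lemma:moments_p1T3} in Appendix~\ref{app:lemma:moments_p1T3}. Writing the single indices as $z_t = x_t'\beta + \sum_{\ell=1}^p y_{t-\ell}\gamma_\ell + \alpha$, I first observe that under the assumption $x_2 = x_3$ and for each of the four allowed initial conditions (whose last $p-1$ entries are constant), these reduce to
\begin{equation*}
z_1 = x_1'\beta + c_1 + \alpha, \quad z_2 = x_2'\beta + y_1 \gamma_1 + c_2 + \alpha, \quad z_3 = x_2'\beta + y_2 \gamma_1 + y_1 \gamma_2 + c_3 + \alpha,
\end{equation*}
where the constants $(c_1, c_2, c_3)$ depend only on $y^{(0)}$ and on $(\gamma_2,\ldots,\gamma_p)$. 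Explicit enumeration yields $(0,0,0)$ for $y^{(0)} = 0_p$; $(\gamma_p, 0, 0)$ for $(1, 0_{p-1})$; $(\sum_{\ell=1}^{p-1}\gamma_\ell, \sum_{\ell=2}^p\gamma_\ell, \sum_{\ell=3}^p\gamma_\ell)$ for $(0, 1_{p-1})$; and $(\sum_{\ell=1}^p\gamma_\ell, \sum_{\ell=2}^p\gamma_\ell, \sum_{\ell=3}^p\gamma_\ell)$ for $1_p$.

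To halve the work, I will use the symmetry of model~\eqref{modelARp} under $Y \mapsto 1-Y$, $X \mapsto -X$, $A \mapsto -A - \sum_{\ell=1}^p \gamma_\ell$ (the AR($p$) analogue of the symmetry noted for AR($1$) in Section~\ref{sec:moments_p1T3}), which preserves the model distribution and sends $y^{(0)} = 0_p \leftrightarrow 1_p$ and $(1, 0_{p-1}) \leftrightarrow (0, 1_{p-1})$. A direct inspection of the formulas shows the pairing relations $m_{0_p}(1-y, -x, \beta, \gamma) = m_{1_p}(y, x, \beta, \gamma)$ and $m_{(1, 0_{p-1})}(1-y, -x, \beta, \gamma) = m_{(0, 1_{p-1})}(y, x, \beta, \gamma)$, so it suffices to verify the lemma for $y^{(0)} \in \{0_p, (1, 0_{p-1})\}$.

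For $y^{(0)} = 0_p$, I split the sum over outcomes according to $(y_1, y_2) \in \{(0,1), (1,0)\}$ (the moment vanishes elsewhere). The $(1, 0, y_3)$ contributions each carry weight $-1$, so their sum over $y_3$ collapses to $-\mathrm{Pr}(Y_1 = 1, Y_2 = 0 \mid Y^{(0)}, X, A)$, an expression with no dependence on $\gamma_2, \ldots, \gamma_p$ because $z_3$ has been integrated out. The $(0, 1, y_3)$ contributions, with weights $e^{x_{12}'\beta}$ and $e^{x_{12}'\beta - \gamma_1}$, simplify using the identity $e^{x_{12}'\beta - \gamma_1} \cdot e^{z_3|_{(0,1)}} = e^{x_{12}'\beta - \gamma_1} \cdot e^{x_2'\beta + \gamma_1 + \alpha} = e^{x_1'\beta + \alpha}$ (valid since $x_2 = x_3$ and $c_3 = 0$) to yield precisely $+\mathrm{Pr}(Y_1 = 1, Y_2 = 0 \mid Y^{(0)}, X, A)$, so the two pieces cancel. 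This computation is essentially a repackaging of the sum of the two HK identities in~\eqref{HKmoments}. For $y^{(0)} = (1, 0_{p-1})$ only $c_1$ is shifted (from $0$ to $\gamma_p$), which multiplies $\mathrm{Pr}(Y_1 = 1, Y_2 = 0 \mid Y^{(0)}, X, A)$ by $e^{\gamma_p}$ while leaving $z_2, z_3$ unchanged, and the extra $e^{\gamma_p}$ attached to the $(0, 1, y_3)$ entries of $m_{(1, 0_{p-1})}$ is exactly the factor required to restore the cancellation.

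The main obstacle is bookkeeping rather than conceptual: tracking the four constants $(c_1, c_2, c_3)$ across the cases and confirming that the exponential prefactors $e^{\gamma_p}$, $e^{-\gamma_1 + \gamma_p}$ in the lemma's definition of the moment functions align precisely with the probability-ratio shifts. Once this is set up, the algebra is a short variation on the $T = 3$ AR($1$) calculation recorded in Appendix~\ref{app:lemma:moments_p1T3}.
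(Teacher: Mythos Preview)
Your proposal is correct and takes a somewhat different route from the paper's own proof. The paper (Appendix~\ref{app:Proof:lemma:moments_T3}) first verifies the identity by direct calculation for $p \in \{2,3\}$ and then, for $p \geq 4$, reduces to the $p=3$ case by observing that the AR($p$) model probabilities at the four admissible initial conditions coincide with AR($3$) probabilities at $(y_*, 0, 0)$ or $(y_*, 1, 1)$ after replacing $\gamma_3$ by $\gamma_p$ and, for the $1_{p-1}$ blocks, shifting $\alpha$ by $\sum_{r=3}^{p-1}\gamma_r$. Your approach is more unified: you treat all $p \geq 2$ at once by writing the single indices via the offsets $(c_1,c_2,c_3)$, invoke the $Y \mapsto 1-Y$ symmetry to reduce to $y^{(0)} \in \{0_p, (1,0_{p-1})\}$, and then run a single short calculation exploiting that $c_2=c_3=0$ for these two cases forces $z_3|_{(y_1,y_2)=(0,1)} = z_2|_{y_1=1}$. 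The paper's route is more modular (a checkable base case plus a reduction); yours avoids the reduction step entirely. One phrasing issue to fix: shifting $c_1$ from $0$ to $\gamma_p$ does \emph{not} multiply $\mathrm{Pr}(Y_1=1,Y_2=0\mid\cdot)$ by $e^{\gamma_p}$, since the logistic link is nonlinear in $z_1$; the correct (and sufficient) observation is that the adjusted weights $e^{x_{12}'\beta+\gamma_p}$ and $e^{x_{12}'\beta-\gamma_1+\gamma_p}$ bear the same algebraic relations to the shifted $z_1$ as the original weights did to the unshifted one, so the $(0,1)$-block again collapses to $\mathrm{Pr}(Y_1=1,Y_2=0\mid y^{(0)})$ and the cancellation persists.
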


The proof of the lemma is given in Appendix~\ref{app:Proof:lemma:moments_T3}.

\paragraph{Identification:}
We now want to provide   identification results for the parameters $\beta$ and $\gamma$
using the  moment conditions  in Lemma~\ref{lemma:moments_T3},
analogous to the identification results in Theorem~\ref{th:id1} for AR(1) models.
For $p=2$ all the parameters can be identified in this way.
However, for $p\geq 3$ the moment conditions only contain the
parameters $\beta $, $\gamma _{1}$ and $\gamma _{p}$,
and we therefore only obtain an identification result for those parameters.
To obtain moment conditions and identification results for
$\gamma _{2},\ldots ,\gamma _{p-1}$, we need $T\geq 4$ (see Appendix\ \ref{sec:moments_p234T4}).

For $k \in \{1,\ldots,K\}$ define the sets
\begin{align*}
\mathcal{X}_{k,+} &= \{ x \in \mathbb{R}^{K \times 3} \, : \, x_{k,1} <
x_{k,2} \} , & \mathcal{X}_{k,-} &= \{ x \in \mathbb{R}^{K \times 3} \, : \,
x_{k,1} > x_{k,2} \} ,
\end{align*}
and for $s = (s_1,\ldots,s_K) \in \{-,+\}^{K}$ define $\mathcal{X}_{s} =
\bigcap_{k \in \{1,\ldots,K\}} \mathcal{X}_{k,s_k}$.

\begin{theorem}
\label{th:id3} Let the outcomes $Y=(Y_1,Y_2,Y_3)$ be generated from model %
\eqref{modelARp} with $p\geq 2$, $T=3$ and true parameters $\beta_0$ and $%
\gamma_0$.

\begin{itemize}
\item[(i)] \underline{Identification of $\beta$ and $\gamma_1$:}
 Let $y^{(0)}\in \{0_{p},1_{p}\}$. For all $\epsilon >0$ and $s\in
\{-,+\}^{K}$, assume that
\begin{equation*}
\mathrm{Pr}\left( Y^{(0)}=y^{(0)},\;X\in \mathcal{X}_{s},\;\Vert
X_{2}-X_{3}\Vert \leq \epsilon \right) >0.
\end{equation*}
Also assume that the expectation in the following display is well-defined.
Then,
\begin{equation*}
\forall s\in \{-,+\}^{K}:\;\;\mathbb{E}\left[ m_{y^{(0)}}(Y,X,\beta ,\gamma
)\,\Big|\,Y^{(0)}=y^{(0)},\;X\in \mathcal{X}_{s},\;X_{2}=X_{3}\right] =0
\end{equation*}%
if and only if $\beta =\beta _{0}$ and $\gamma _{1}=\gamma _{0,1}$. Thus,
the parameters $\beta $ and $\gamma _{1}$ are point-identified under the
assumptions provided here.

\item[(ii)] \underline{Identification of $\gamma_p$:}
Let $y^{(0)}\in \{(0,1_{p-1}),(1,0_{p-1})\}$. For all $\epsilon
>0$ assume that
\begin{equation*}
\mathrm{Pr}\left( Y^{(0)}=y^{(0)}\;\;\;\&\;\;\;\Vert X_{2}-X_{3}\Vert \leq
\epsilon \right) >0.
\end{equation*}
Also assume that the expectation in the following display is well-defined.
Then,
\begin{equation*}
\mathbb{E}\left[ m_{y^{(0)}}(Y,X,\beta _{0},(\gamma _{0,1},\gamma
_{2},\ldots ,\gamma _{p}))\,\Big|\,Y^{(0)}=y^{(0)},\;X_{2}=X_{3}\right] =0
\end{equation*}%
if and only if $\gamma _{p}=\gamma _{0,p}$. Thus, if the parameters $\beta $
and $\gamma _{1}$ are point-identified, then $\gamma _{p}$ is also
point-identified under the assumptions provided here.
\end{itemize}
\end{theorem}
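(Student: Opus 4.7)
The strategy mirrors that of Theorem~\ref{th:id1}: Lemma~\ref{lemma:moments_T3} supplies $(\beta_0, \gamma_{0,1}, \gamma_{0,p})$ as a root of the displayed moment equations, and I invoke Lemma~\ref{lemma:INVERSION} (for part (i)) together with a one-dimensional monotonicity argument (for part (ii)) to obtain uniqueness.

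For part (i), fix $y^{(0)} = 0_p$; the case $y^{(0)} = 1_p$ follows from the $Z_t = 1 - Y_t$ sign-flip symmetry noted in Section~\ref{sec:moments_p1T3}. The moment function $m_{(0_p)}(y, x, \beta, \gamma)$ depends on the parameters only through $(\beta, \gamma_1)$ and on the regressors only through $x_{12} = x_1 - x_2$. Define
$$
g_s(\beta, \gamma_1) \;=\; -\mathbb{E}\!\left[ m_{(0_p)}(Y, X, \beta, \gamma) \,\big|\, Y^{(0)} = 0_p,\; X \in \mathcal{X}_s,\; X_2 = X_3 \right],
$$
interpreting the conditional expectation as the limit of conditional expectations on $\{\|X_2 - X_3\| \le \epsilon\}$ as $\epsilon \downarrow 0$, which is legitimate by the positive-probability hypothesis. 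Lemma~\ref{lemma:moments_T3} yields $g_s(\beta_0, \gamma_{0,1}) = 0$ because the inner expectation conditional on $X = (x_1, x_2, x_2)$ and $A = \alpha$ already vanishes identically. The monotonicity hypotheses of Lemma~\ref{lemma:INVERSION} then follow by direct differentiation of the three non-zero entries of $m_{(0_p)}$: the only $\gamma_1$-dependent entry is $\exp(x_{12}'\beta - \gamma_1)$ at $y = (0,1,1)$, and strict positivity of the model probability of $(Y_1,Y_2,Y_3) = (0,1,1)$ for every $(x, \alpha)$ makes the expectation strictly decreasing in $\gamma_1$, so after the outer sign flip $g_s$ is strictly increasing in $\gamma_1$. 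The $\beta_k$-dependent entries all involve $\exp((x_{k,1} - x_{k,2})\beta_k + \cdots)$, and the set $\mathcal{X}_{k,s_k}$ exactly pins down the sign of $x_{k,1} - x_{k,2}$, giving strict monotonicity of $g_s$ in $\beta_k$ with sign $s_k$. Lemma~\ref{lemma:INVERSION} then forces $(\beta, \gamma_1) = (\beta_0, \gamma_{0,1})$.

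For part (ii), fix $y^{(0)} = (1, 0_{p-1})$; the other case is symmetric. Once $\beta_0$ and $\gamma_{0,1}$ are substituted, inspection of the formula for $m_{(1, 0_{p-1})}$ reveals that it does not depend on $\gamma_2, \ldots, \gamma_{p-1}$ at all, and it depends on $\gamma_p$ only through a common factor $e^{\gamma_p}$ multiplying the two non-constant entries. Consequently the conditional expectation in the theorem statement has the form $h(\gamma_p) = C\, e^{\gamma_p} - D$, where
$$
C \,=\, \mathbb{E}\!\left[ e^{X_{12}'\beta_0}\,\mathbbm{1}\{Y=(0,1,0)\} + e^{X_{12}'\beta_0 - \gamma_{0,1}}\,\mathbbm{1}\{Y=(0,1,1)\} \,\big|\, \cdot \right], \quad D \,=\, \mathrm{Pr}\!\left((Y_1, Y_2) = (1,0) \,\big|\, \cdot\right),
$$
both of which are strictly positive because $A$ takes values in $\mathbb{R}$ and the model assigns strictly positive probability to each of these outcomes. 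Lemma~\ref{lemma:moments_T3} gives $h(\gamma_{0,p}) = 0$, and strict monotonicity of $h$ forces $\gamma_{0,p}$ to be the unique root.

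The main obstacle is purely measure-theoretic: the event $\{X_2 = X_3\}$ typically has zero Lebesgue measure, so the conditional expectations in the statement must be defined either by disintegration or as the $\epsilon \downarrow 0$ limit of conditional expectations on the positive-probability events $\{\|X_2 - X_3\| \le \epsilon\}$ supplied by the assumption. Once this setup and the continuity of the model probabilities in $x_3$ are in place, the rest of the argument reduces to routine sign bookkeeping in the explicit moment-function formulas and a direct appeal to Lemma~\ref{lemma:INVERSION}.
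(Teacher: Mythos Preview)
Your proof is correct and follows essentially the same route as the paper: part~(i) is an application of Lemma~\ref{lemma:INVERSION} after verifying the required monotonicity of the conditional expected moment in $\gamma_1$ and in each $\beta_k$ (with the sign dictated by $s_k$), and part~(ii) is the one-dimensional strict-monotonicity argument in $\gamma_p$. Your treatment is in fact more explicit than the paper's own proof, which merely states that the argument is analogous to that of Theorem~\ref{th:id1}; your sign-flip handling of the $y^{(0)}=0_p$ case and your remark on the $\epsilon\downarrow 0$ interpretation of conditioning on $X_2=X_3$ fill in details the paper leaves implicit.
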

\begin{proof}
The proof is analogous to the proof of Theorem~\ref{th:id3}.
Part (i) again follows by an application of Lemma~\ref{lemma:INVERSION}.
Part (ii) holds, because the expectations of $m_{(0,1_{p-1})}$
and $m_{(1,0_{p-1})}$ are strictly increasing in $\gamma_p$.
\end{proof}

\paragraph{Comments on estimation:}
Because we only have moment conditions for $x_{2}=x_{3}$, we are generally
unable to estimate the model parameters at root-$n$ rate here. However, for a
model without regressors $(K=0)$, we can estimate $\gamma _{1}$ and $\gamma
_{p}$ at root-$n$ rate using the moment conditions provided, as discussed in Section~\ref{subsecARp} for $p=2$. More generally,
for suitable discrete regressors one may also estimate $\beta $ and $\gamma $
at root-$n$ rate.

\subsubsection{Moment functions for the AR(2) model with $T=4$}
\label{app:moments_p2T4}

In Section \ref{sec:moments_p2T4} we already presented $m_{y^{(0)}}^{(a,2,4)}(y,x,\beta ,\gamma )$ as a valid moment
function for  the panel logit AR(2) model and $T=4$. There are three additional such moment functions:
\begin{align*}
m_{y^{(0)}}^{(b,2,4)}(y,x,\beta ,\gamma )& =\left\{
\begin{array}{@{\,}l@{\; \; }l}
\exp (z_{12} ) & \text{if }y=(0,1,0,0), \\
\exp (z_{14} )\big[1+\exp (z_{32})  -\exp (z_{34})\big] & \text{if }y=(0,1,0,1), \\
\exp (z_{14} + \gamma _{1}) & \text{if }%
(y_{1},y_{2},y_{3})=(0,1,1), \\
-1 & \text{if }(y_{1},y_{2})=(1,0), \\
\exp (z_{42})-1 & \text{if }%
y=(1,1,0,0), \\
\exp (z_{32} )-\exp (z_{34}) & \text{if }y=(1,1,0,1), \\
0 & \text{otherwise},%
\end{array}%
\right.  \\[5pt]
m_{y^{(0)}}^{(c,2,4)}(y,x,\beta ,\gamma )& =\left\{
\begin{array}{@{\,}l@{\; \; }l}
\left[ \exp (z_{24} )-1\right] \left[ 1-\exp (z_{34}  )\right]  & \text{if }y=(0,0,0,1), \\
\exp (z_{24} + \gamma_1 )-1 & \text{if }(y_{1},y_{2},y_{3})=(0,0,1), \\
-1 & \text{if }(y_{1},y_{2})=(0,1), \\
\exp (z_{41} ) & \text{if }y=(1,0,0,0), \\
\exp (z_{21} )\big[1 + \exp (z_{32} )   -\exp (z_{34})%
\big] & \text{if }y=(1,0,0,1), \\
\exp (z_{21}) & \text{if }%
(y_{1},y_{2},y_{3})=(1,0,1), \\
0 & \text{otherwise},%
\end{array}%
\right.  \\[5pt]
m_{y^{(0)}}^{(d,2,4)}(y,x,\beta ,\gamma )& =\left\{
\begin{array}{@{\,}l@{\; \; }l}
\exp (z_{12} ) & \text{if }(y_{1},y_{2},y_{3})=(0,1,0), \\
\exp (z_{12} )\big[1+\exp (z_{23} )  -\exp (z_{43} )\big] &
\text{if }y=(0,1,1,0), \\
\exp (z_{14} ) & \text{if }%
y=(0,1,1,1), \\
-1 & \text{if }(y_{1},y_{2})=(1,0), \\
\exp (z_{42} +\gamma _{1})-1 & \text{if }%
(y_{1},y_{2},y_{3})=(1,1,0), \\
\left[ \exp (z_{42} )-1\right] \left[ 1-\exp
(z_{43} )\right]  & \text{if }y=(1,1,1,0), \\
0 & \text{otherwise}.%
\end{array}%
\right.
\end{align*}%

\subsubsection{Proof of Lemma~\ref{lemma:moments_p2T4}}
\label{app:moments_p2T4proof}

Consider the initial conditions $y^{(0)}=(0,0)$.
Then, by plugging in the definition of
$z_{ts} = (x_{ts})^{\prime }\,\beta + (y_{t-1} - y_{s-1}) \,\gamma_1 +   (y_{t-2} - y_{s-2}) \,\gamma_2 $, where $x_{ts} = x_t - x_s$,
we obtain expressions for the moment functions that feature the parameters $\beta$ and $\gamma$ directly:
\begin{align*}
m_{(0,0)}^{(a,2,4)}(y,x,\beta ,\gamma )& =\left\{
\begin{array}{@{\,}l@{\;}l}
\exp (x_{23}^{\prime }\beta )-\exp (x_{43}^{\prime }\beta +\gamma _{1}) &
\text{if }y=(0,0,1,0), \\
\exp (x_{24}^{\prime }\beta -\gamma _{1})-1 & \text{if }y=(0,0,1,1), \\
-1 & \text{if }(y_{1},y_{2})=(0,1), \\
\exp (x_{41}^{\prime }\beta +\gamma _{1}) & \text{if }%
(y_{1},y_{2},y_{3})=(1,0,0), \\
\exp (x_{41}^{\prime }\beta +\gamma _{1})\big[1+\exp (x_{23}^{\prime }\beta
+\gamma _{1}-\gamma _{2}) &  \\
\qquad \qquad \qquad \quad \;\,-\exp (x_{43}^{\prime }\beta +\gamma
_{1}-\gamma _{2})\big] & \text{if }y=(1,0,1,0), \\
\exp (x_{21}^{\prime }\beta +\gamma _{1}) & \text{if }y=(1,0,1,1), \\
0 & \text{otherwise},%
\end{array}%
\right.  \\[5pt]
m_{(0,0)}^{(b,2,4)}(y,x,\beta ,\gamma )& =\left\{
\begin{array}{@{\,}l@{\; \; }l}
\exp (x_{12}^{\prime }\beta ) & \text{if }y=(0,1,0,0), \\
\exp (x_{14}^{\prime }\beta -\gamma _{2})\big[1+\exp (x_{32}^{\prime }\beta
+\gamma _{1}) &  \\
\qquad \qquad \qquad   -\exp (x_{34}^{\prime }\beta +\gamma _{1}-\gamma
_{2})\big] & \text{if }y=(0,1,0,1), \\
\exp (x_{14}^{\prime }\beta -\gamma _{2}) & \text{if }%
(y_{1},y_{2},y_{3})=(0,1,1), \\
-1 & \text{if }(y_{1},y_{2})=(1,0), \\
\exp (x_{42}^{\prime }\beta -\gamma _{1}+\gamma _{2})-1 & \text{if }%
y=(1,1,0,0), \\
\exp (x_{32}^{\prime }\beta +\gamma _{2})-\exp (x_{34}^{\prime }\beta
+\gamma _{1}) & \text{if }y=(1,1,0,1), \\
0 & \text{otherwise},%
\end{array}%
\right.  \\[5pt]
m_{(0,0)}^{(c,2,4)}(y,x,\beta ,\gamma )& =\left\{
\begin{array}{@{\,}l@{\; \; }l}
\left[ \exp (x_{24}^{\prime }\beta )-1\right] \left[ 1-\exp (x_{34}^{\prime
}\beta )\right]  & \text{if }y=(0,0,0,1), \\
\exp (x_{24}^{\prime }\beta )-1 & \text{if }(y_{1},y_{2},y_{3})=(0,0,1), \\
-1 & \text{if }(y_{1},y_{2})=(0,1), \\
\exp (x_{41}^{\prime }\beta ) & \text{if }y=(1,0,0,0), \\
\exp (x_{21}^{\prime }\beta + \gamma_1)\big[1
 -\exp (x_{34}^{\prime }\beta  +\gamma _{2})
 &  \\
\qquad \qquad \qquad    +\exp (x_{32}^{\prime}\beta -\gamma_1+\gamma _{2})
\big] & \text{if }y=(1,0,0,1), \\
\exp (x_{21}^{\prime }\beta +\gamma _{1}) & \text{if }%
(y_{1},y_{2},y_{3})=(1,0,1), \\
0 & \text{otherwise},%
\end{array}%
\right.  \\[5pt]
m_{(0,0)}^{(d,2,4)}(y,x,\beta ,\gamma )& =\left\{
\begin{array}{@{\,}l@{\; \; }l}
\exp (x_{12}^{\prime }\beta ) & \text{if }(y_{1},y_{2},y_{3})=(0,1,0), \\
\exp (x_{12}^{\prime }\beta )\big[1+\exp (x_{23}^{\prime }\beta -\gamma _{1})
&  \\
\qquad \qquad \qquad \quad -\exp (x_{43}^{\prime }\beta +\gamma _{2})\big] &
\text{if }y=(0,1,1,0), \\
\exp (x_{14}^{\prime }\beta -\gamma _{1}-\gamma _{2}) & \text{if }%
y=(0,1,1,1), \\
-1 & \text{if }(y_{1},y_{2})=(1,0), \\
\exp (x_{42}^{\prime }\beta +\gamma _{2})-1 & \text{if }%
(y_{1},y_{2},y_{3})=(1,1,0), \\
\left[ \exp (x_{42}^{\prime }\beta +\gamma _{2})-1\right] \left[ 1-\exp
(x_{43}^{\prime }\beta )\right]  & \text{if }y=(1,1,1,0), \\
0 & \text{otherwise}.%
\end{array}%
\right.
\end{align*}%
Analogous to the proof of Lemma~\ref{lemma:moments_p1T3}, one can now
use these explicit expressions together with the
 definition of the model probabilities in \eqref{DefProbGeneralOrder} (for autoregressive order $p=2$)
to verify by  direct calculation that
\begin{equation}
\sum_{y\in \{0,1\}^{4}}\,p_{(0,0)}(y,x,\beta _{0},\gamma _{0},\alpha
)\;m_{(0,0)}^{(\xi ,2,4)}(y,x,\beta _{0},\gamma _{0})=0,
    \label{ResultT4part1}
\end{equation}
for $\xi \in \{a,b,c,d\}$. This calculation is straightforward, but lengthy, and we have used
a computer algebra system (Mathematica) to verify this.
Having thus derived the result for the initial conditions  $y^{(0)}=(0,0)$,
we note that both the model probabilities $p_{y^{(0)}}(y,x,\beta,\gamma,\alpha)$ and the moment functions
$m_{y^{(0)}}^{(\xi ,2,4)}(y,x,\beta,\gamma)$ are unchanged under the following transformation (with $p=2$ in our case)
\begin{itemize}
    \item[(*)] $y^{(0)} \, \rightarrow \, \widetilde y^{(0)}$ , \; \;  $y_t$ unchanged for $t\geq 1$,
    \\
   $x_t' \beta  \, \rightarrow \,  x_t' \beta +  \sum_{r=t}^p (y_{t-r}-\widetilde y_{t-r}) \gamma_t $ \; (for $t \leq p$), \; \;
                 $x_t' \beta \, \rightarrow \,  x_t' \beta$ \; (for $t>p$), \; \;
                $\beta \, \rightarrow \,  \beta  $, \; \;
                $\gamma \, \rightarrow \,  \gamma $, \; \;
                    $\alpha \, \rightarrow \,  \alpha $.
\end{itemize}
Here, we have transformed the initial conditions $y^{(0)} $
into arbitrary alternative initial conditions $\widetilde y^{(0)} \in \{0,1\}^p$
and adjusted $x_t' \beta$ such that the single index $z_{t}(y,y_{0},x,\beta
,\gamma )=x_{t}^{\prime }\,\beta +y_{t-1}\,\gamma $ that enters into the model
is unchanged for all $t \in \{1,\ldots,T\}$.  Since the moment functions $m_{y^{(0)}}^{(\xi ,2,4)}(y,x,\beta,\gamma)$ in Section~\ref{sec:moments_p2T4}
are defined in terms of the single index $z_{ts} = z_t-z_s$ (and $\gamma$ is unchanged),
they are obviously unchanged under the transformation,
and it is   easy to see that  the model probabilities  $p_{y^{(0)}}(y,x,\beta,\gamma,\alpha)$ are unchanged as well.
By applying the transformation (*) to \eqref{ResultT4part1} we therefore find that
\begin{equation*}
\sum_{y\in \{0,1\}^{4}}\,p_{y^{(0)}}(y,x,\beta _{0},\gamma _{0},\alpha
)\;m_{y^{(0)}}^{(\xi ,2,4)}(y,x,\beta _{0},\gamma _{0})=0
\end{equation*}
holds for all initial conditions $y^{(0)} \in \{0,1\}^2$.
\hfill $\blacksquare$

\subsubsection{Moment functions for the AR(3) model with $T=5$}
\label{app:moments_p3T5}

In Section \ref{sec:moments_p3T5} we already presented $m_{y^{(0)}}^{(a,3,5)}(y,x,\beta ,\gamma )$ as a valid moment
function for  the panel logit AR(3) model and $T=5$. There are seven additional such moment functions:
\begin{equation*}
m_{y^{(0)}}^{(b,3,5)}(y,x,\beta ,\gamma )=\left\{
\begin{array}{@{\,}l@{\;\;}l}
\exp \left( z_{12}\right)  & \text{if }(y_{1},y_{2},y_{3},y_{4})=(0,1,0,0),
\\
\exp \left( z_{12}\right) +\exp \left( z_{14}\right) +\exp \left(
z_{12}+z_{34}\right)  &  \\
\qquad -\exp \left( z_{14}+z_{35}\right) -\exp \left( z_{12}+z_{54}\right)
& \text{if }y=(0,1,0,1,0), \\
\exp \left( z_{15}\right) \left( \exp \left( z_{32}\right) -\exp \left(
z_{35}\right) +1\right)  & \text{if }y=(0,1,0,1,1), \\
\exp \left( -\gamma _{1}+\gamma _{2}+z_{15}\right)  & \text{if }%
(y_{1},y_{2},y_{3},y_{4})=(0,1,1,0), \\
\exp \left( \gamma _{2}+z_{15}\right)  & \text{if }%
(y_{1},y_{2},y_{3},y_{4})=(0,1,1,1), \\
-1 & \text{if }(y_{1},y_{2})=(1,0), \\
\exp \left( \gamma _{1}+z_{52}\right) -1 & \text{if }%
(y_{1},y_{2},y_{3},y_{4})=(1,1,0,0), \\
\left( \exp \left( z_{34}\right) -\exp \left( z_{54}\right) +1\right)  &  \\
\qquad \left( \exp \left( z_{52}\right) - 1 \right)
& \text{if }y=(1,1,0,1,0), \\
\exp \left( z_{32}\right) -\exp \left( z_{35}\right)  & \text{if }%
y=(1,1,0,1,1), \\
0 & \text{otherwise},%
\end{array}%
\right.
\end{equation*}
\begin{equation*}
m_{y^{(0)}}^{(c,3,5)}(y,x,\beta ,\gamma )=\left\{
\begin{array}{@{\,}l@{\;\;}l}
-\exp \left( z_{54}\right) \left( 1-\exp \left( z_{25}\right) \right) \left(
1-\exp \left( z_{35}\right) \right)  & \text{if }y=(0,0,0,1,0), \\
\left( \exp \left( z_{25}\right) -1\right) \left( 1-\exp \left(
z_{35}\right) \right)  & \text{if }y=(0,0,0,1,1), \\
\exp \left( -\gamma _{1}+\gamma _{2}+z_{25}\right) -1 & \text{if }%
(y_{1},y_{2},y_{3},y_{4})=(0,0,1,0), \\
\exp \left( \gamma _{2}+z_{25}\right) -1 & \text{if }%
(y_{1},y_{2},y_{3},y_{4})=(0,0,1,1), \\
-1 & \text{if }(y_{1},y_{2})=(0,1), \\
\exp \left( \gamma _{1}+z_{51}\right)  & \text{if }%
(y_{1},y_{2},y_{3},y_{4})=(1,0,0,0), \\
-\exp \left( z_{21}+z_{34}\right) +\exp \left( z_{51}\right) +\exp \left(
z_{21}+z_{54}\right)  &  \\
\qquad +\exp \left( z_{31}+z_{54}\right) -\exp \left( z_{51}+z_{54}\right)
& \text{if }y=(1,0,0,1,0), \\
\exp \left( z_{21}\right) +\exp \left( z_{31}\right) -\exp \left(
z_{21}+z_{35}\right)  & \text{if }y=(1,0,0,1,1), \\
\exp \left( z_{21}\right)  & \text{if }(y_{1},y_{2},y_{3})=(1,0,1), \\
0 & \text{otherwise},%
\end{array}%
\right.
\end{equation*}
\begin{equation*}
m_{y^{(0)}}^{(d,3,5)}(y,x,\beta ,\gamma )=\left\{
\begin{array}{@{\,}l@{\;\;}l}
\exp \left( z_{12}\right)  & \text{if }(y_{1},y_{2},y_{3})=(0,1,0,0), \\
\exp \left( z_{12}\right) +\exp \left( z_{13}\right) -\exp \left(
z_{12}+z_{53}\right)  & \text{if }y=(0,1,1,0,0), \\
\exp \left( z_{15}\right) -\exp \left( z_{12}+z_{43}\right) +\exp \left(
z_{12}+z_{45}\right)  &  \\
\qquad \quad +\exp \left( z_{13}+z_{45}\right) -\exp \left(
z_{15}+z_{45}\right)  & \text{if }y=(0,1,1,0,1), \\
\exp \left( \gamma _{1}+z_{15}\right)  & \text{if }%
(y_{1},y_{2},y_{3},y_{4})=(0,1,1,1), \\
-1 & \text{if }(y_{1},y_{2})=(1,0), \\
\exp \left( \gamma _{2}+z_{52}\right) -1 & \text{if }%
(y_{1},y_{2},y_{3},y_{4})=(1,1,0,0), \\
\exp \left( -\gamma _{1}+\gamma _{2}+z_{52}\right) -1 & \text{if }%
(y_{1},y_{2},y_{3},y_{4})=(1,1,0,1), \\
\left( \exp \left( z_{52}\right) -1\right) \left( 1-\exp \left(
z_{53}\right) \right)  & \text{if }y=(1,1,1,0,0), \\
-\exp \left( z_{45}\right) \left( 1-\exp \left( z_{52}\right) \right) \left(
1-\exp \left( z_{53}\right) \right)  & \text{if }y=(1,1,1,0,1), \\
0 & \text{otherwise},%
\end{array}%
\right.
\end{equation*}
\begin{equation*}
m_{y^{(0)}}^{(e,3,5)}(y,x,\beta ,\gamma ) \hspace{-2pt}=  \hspace{-2pt}\left\{
\begin{array}{@{}l@{\,}l}
\exp \left( z_{23}\right) -\exp \left( \gamma _{1}+z_{53}\right)  & \text{if
}(y_{1},y_{2},y_{3},y_{4})=(0,0,1,0), \\
\left( \exp \left( z_{23}\right) -\exp \left( z_{53}\right) \right) \left(
\exp \left( z_{34}\right) -\exp \left( z_{54}\right) +1\right)  & \text{if }%
y=(0,0,1,1,0), \\
\exp \left( z_{25}\right) -1 & \text{if }y=(0,0,1,1,1), \\
-1 & \text{if }(y_{1},y_{2})=(0,1), \\
\exp \left( \gamma _{1}+\gamma _{2}+z_{51}\right)  & \text{if }%
(y_{1},y_{2},y_{3},y_{4})=(1,0,0,0), \\
\exp \left( \gamma _{2}+z_{51}\right)  & \text{if }%
(y_{1},y_{2},y_{3},y_{4})=(1,0,0,1), \\
\exp \left( \gamma _{1}+z_{51}\right) \left( -\exp \left( \gamma
_{1}+z_{53}\right) +\exp \left( z_{23}\right) +1\right)  & \text{if }%
(y_{1},y_{2},y_{3},y_{4})=(1,0,1,0), \\
\exp \left( z_{51}\right) \big(\exp \left( z_{23}\right) +\exp \left(
z_{24}\right) -\exp \left( z_{53}\right)  &  \\
  -\exp \left( z_{54}\right) -\exp \left( z_{23}+z_{54}\right) +\exp
\left( z_{53}+z_{54}\right) +1\big) & \text{if }y=(1,0,1,1,0), \\
\exp \left( z_{21}\right)  & \text{if }y=(1,0,1,1,1), \\
0 & \text{otherwise},%
\end{array}%
\right.
\end{equation*}
\begin{equation*}
m_{y^{(0)}}^{(f,3,5)}(y,x,\beta ,\gamma )  \hspace{-2pt}=  \hspace{-2pt} \left\{
\begin{array}{@{}l@{\,}l}
\exp \left( z_{12}\right)  & \text{if }y=(0,1,0,0,0), \\
\exp \left( z_{15}\right) \big(\exp \left( z_{32}\right) -\exp \left(
z_{35}\right) +\exp \left( z_{42}\right) &  \\ -\exp \left( z_{45}\right)
  -\exp \left( z_{32}+z_{45}\right) +\exp \left( z_{35}+z_{45}\right) +1%
\big) & \text{if }y=(0,1,0,0,1), \\
\exp \left( \gamma _{1}+z_{15}\right) \left( -\exp \left( \gamma
_{1}+z_{35}\right) +\exp \left( z_{32}\right) +1\right)  & \text{if }%
(y_{1},y_{2},y_{3},y_{4})=(0,1,0,1), \\
\exp \left( \gamma _{2}+z_{15}\right)  & \text{if }%
(y_{1},y_{2},y_{3},y_{4})=(0,1,1,0), \\
\exp \left( \gamma _{1}+\gamma _{2}+z_{15}\right)  & \text{if }%
(y_{1},y_{2},y_{3},y_{4})=(0,1,1,1), \\
-1 & \text{if }(y_{1},y_{2})=(1,0), \\
\exp \left( z_{52}\right) -1 & \text{if }y=(1,1,0,0,0), \\
\left( \exp \left( z_{32}\right) -\exp \left( z_{35}\right) \right) \left(
e^{z_{43}}-\exp \left( z_{45}\right) +1\right)  & \text{if }y=(1,1,0,0,1),
\\
\exp \left( z_{32}\right) -\exp \left( \gamma _{1}+z_{35}\right)  & \text{if
}(y_{1},y_{2},y_{3},y_{4})=(1,1,0,1), \\
0 & \text{otherwise},%
\end{array}%
\right.
\end{equation*}
\begin{equation*}
m_{y^{(0)}}^{(g,3,5)}(y,x,\beta ,\gamma )=\left\{
\begin{array}{@{\,}l@{\;\;}l}
\left( \exp \left( z_{25}\right) -1\right) \left( 1-\exp \left(
z_{35}\right) \right) \left( 1-\exp \left( z_{45}\right) \right)  & \text{if
}y=(0,0,0,0,1), \\
\left( \exp \left( \gamma _{1}+z_{25}\right) -1\right) \left( 1-\exp \left(
\gamma _{1}+z_{35}\right) \right)  & \text{if }%
(y_{1},y_{2},y_{3},y_{4})=(0,0,0,1), \\
\exp \left( \gamma _{2}+z_{25}\right) -1 & \text{if }%
(y_{1},y_{2},y_{3},y_{4})=(0,0,1,0), \\
\exp \left( \gamma _{1}+\gamma _{2}+z_{25}\right) -1 & \text{if }%
(y_{1},y_{2},y_{3},y_{4})=(0,0,1,1), \\
-1 & \text{if }(y_{1},y_{2})=(0,1), \\
\exp \left( z_{51}\right)  & \text{if }y=(1,0,0,0,0), \\
\exp \left( z_{21}\right) +\exp \left( z_{31}\right) -\exp \left(
z_{21}+z_{35}\right) +\exp \left( z_{41}\right)  &  \\
\qquad -\exp \left( z_{21}+z_{45}\right) -\exp \left( z_{31}+z_{45}\right)
&  \\
\qquad +\exp \left( z_{21}+z_{35}+z_{45}\right)  & \text{if }y=(1,0,0,0,1),
\\
-\exp \left( \gamma _{1}+z_{21}+z_{35}\right) +\exp \left( z_{21}\right)
+\exp \left( z_{31}\right)  & \text{if }(y_{1},y_{2},y_{3},y_{4})=(1,0,0,1),
\\
\exp \left( z_{21}\right)  & \text{if }(y_{1},y_{2},y_{3})=(1,0,1), \\
0 & \text{otherwise},%
\end{array}%
\right.
\end{equation*}
and
\begin{equation*}
m_{y^{(0)}}^{(h,3,5}(y,x,\beta ,\gamma )=\left\{
\begin{array}{@{\,}l@{\;\;}l}
\exp \left( z_{12}\right)  & \text{if }(y_{1},y_{2},y_{3})=(0,1,0), \\
-\exp \left( \gamma _{1}+z_{12}+z_{53}\right) +\exp \left( z_{12}\right)
+\exp \left( z_{13}\right)  & \text{if }(y_{1},y_{2},y_{3},y_{4})=(0,1,1,0),
\\
\exp \left( z_{12}\right) +\exp \left( z_{13}\right) +\exp \left(
z_{14}\right) -\exp \left( z_{12}+z_{53}\right)  &  \\
\qquad -\exp \left( z_{12}+z_{54}\right) -\exp \left( z_{13}+z_{54}\right)
&  \\
\qquad +\exp \left( z_{12}+z_{53}+z_{54}\right)  & \text{if }y=(0,1,1,1,0),
\\
\exp \left( z_{15}\right)  & \text{if }y=(0,1,1,1,1), \\
-1 & \text{if }(y_{1},y_{2})=(1,0), \\
\exp \left( \gamma _{1}+\gamma _{2}+z_{52}\right) -1 & \text{if }%
(y_{1},y_{2},y_{3},y_{4})=(1,1,0,0), \\
\exp \left( \gamma _{2}+z_{52}\right) -1 & \text{if }%
(y_{1},y_{2},y_{3},y_{4})=(1,1,0,1), \\
\left( \exp \left( \gamma _{1}+z_{52}\right) -1\right) \left( 1-\exp \left(
\gamma _{1}+z_{53}\right) \right)  & \text{if }%
(y_{1},y_{2},y_{3},y_{4})=(1,1,1,0), \\
\left( \exp \left( z_{52}\right) -1\right) \left( 1-\exp \left(
z_{53}\right) \right) \left( 1-\exp \left( z_{54}\right) \right)  & \text{if
}y=(1,1,1,1,0), \\
0 & \text{otherwise}.%
\end{array}%
\right.
\end{equation*}

\subsubsection{Proof of Lemma~\ref{lemma:moments_p3T5}}

This proof is analogous to the proof of Lemma~\ref{lemma:moments_p2T4}.

\subsubsection{AR(2) models with $T=5$}
\label{app:moments_p2T5}

\makeatletter
\def\widebreve{\mathpalette\wide@breve}
\def\wide@breve#1#2{\sbox\z@{$#1#2$}%
     \mathop{\vbox{\m@th\ialign{##\crcr
\kern0.08em\brevefill#1{0.8\wd\z@}\crcr\noalign{\nointerlineskip}%
                    $\hss#1#2\hss$\crcr}}}\limits}
\def\brevefill#1#2{$\m@th\sbox\tw@{$#1($}%
  \hss\resizebox{#2}{\wd\tw@}{\rotatebox[origin=c]{90}{\upshape(}}\hss$}
\makeatletter

Consider model \eqref{modelARp} with $p=2$ and $T=5$, where  $\gamma \in \mathbb{R}^2$, $y^{(0)} = (y_{-1},y_0) \in \{0,1\}^2$,
$y = (y_1,y_2,y_3,y_4,y_5) \in \{0,1\}^5$, and $x= (x_1,x_2,x_3,x_4,x_5) \in \mathbb{R}^{K \times 5}$.
Using the results from Section~\ref{sec:moments_p2T4} and \ref{sec:moments_p3T5} we can immediately construct valid moment functions
for this model as well. Firstly, by using the (time-shifted) AR(2) moments for $T=4$ we define
\begin{align*}
     m_{y^{(0)}}^{(\xi,2,5)}(y,x,\beta ,\gamma ) &:= m_{y^{(0)}}^{(\xi,2,4)}((y_1,y_2,y_3,y_4),(x_1,x_2,x_3,x_4),\beta,\gamma) ,
     \\
    	\widebreve m_{y^{(0)}}^{(\xi,2,5)}(y,x,\beta ,\gamma ) &:= \mathbbm{1}\{y_1=0\} \; m_{(y_0,y_1)}^{(\xi,2,4)}((y_2,y_3,y_4,y_5),(x_2,x_3,x_4,x_5),\beta,\gamma) ,
     \\
     	\widetilde  m_{y^{(0)}}^{(\xi,2,5)}(y,x,\beta ,\gamma ) &:= \mathbbm{1}\{y_1=1\} \; m_{(y_0,y_1)}^{(\xi,2,4)}((y_2,y_3,y_4,y_5),(x_2,x_3,x_4,x_5),\beta,\gamma) ,
\end{align*}
where  $\xi \in \{a,b,c,d\}$. These are twelve valid moment functions for the AR(2) model with $T=5$, because
the model probabilities are invariant under time-shifts.
Secondly, we  can use our AR(3) moments with $T=5$ to define
\begin{align*}
    \ddot  m_{y^{(0)}}^{(\xi,2,5)}(y,x,\beta ,\gamma ) := m_{(0,y^{(0)})}^{(\xi,3,5)}(y,x,\beta ,(\gamma_1,\gamma_2,0) ) ,
\end{align*}
where  $\xi \in \{a,b,c,d,e,f,g,h\}$. We thus obtain eight valid moment functions  for the AR(2) model with $T=5$,
because  the AR(2) model is a special case of the AR(3) model with $\gamma_3=0$.

These are twenty valid moment functions in total for $p=2$ and $T=5$. However, not all of them are linearly independent.
One finds four linear dependencies:
\begin{align*}
e^{y_0  \gamma _1  + y_{-1}  \gamma _2 }
\left( \ddot  m_{y^{(0)}}^{(c,2,5)} -\widebreve m_{y^{(0)}}^{(a,2,5)} \right)
+ e^{x'_{25} \beta + (y_0-1) \gamma _1+ \left(y_{-1}+y_0\right) \gamma _2}
\, \widebreve m_{y^{(0)}}^{(a,2,5)} &
\\
+ e^{  \gamma _1+x'_{51} \beta }
\left( \ddot  m_{y^{(0)}}^{(b,2,5)}-\widetilde  m_{y^{(0)}}^{(a,2,5)} \right)
+ e^{  \gamma _1+x'_{21} \beta+y_0 \gamma _2} \, \widetilde  m_{y^{(0)}}^{(a,2,5)}
  &=0 ,
\\
e^{x_{25}' \beta+ \left(y_0+1\right) \gamma _1+ \left(y_{-1}+y_0-1\right) \gamma _2}
\left( \ddot  m_{y^{(0)}}^{(a,2,5)}-\widebreve m_{y^{(0)}}^{(b,2,5)} \right)
+
 e^{ \left(y_0+1\right) \gamma _1 + y_{-1} \gamma _2 }
 \widebreve m_{y^{(0)}}^{(b,2,5)} &
 \\
 + e^{\gamma _1+x_{21}' \beta+y_0  \gamma _2 }
 \left( \ddot  m_{y^{(0)}}^{(d,2,5)} -\widetilde  m_{y^{(0)}}^{(b,2,5)} \right)
+e^{  \gamma _2+x_{51}' \beta }
\widetilde  m_{y^{(0)}}^{(b,2,5)}
  &=0 ,
\\
 e^{ y_0 \gamma _1+ y_{-1} \gamma _2}
\left( \ddot  m_{y^{(0)}}^{(g,2,5)}-\widebreve m_{y^{(0)}}^{(c,2,5)} \right)
 +e^{x_{25}' \beta+ y_0 \gamma _1 + \left(y_{-1}+y_0\right) \gamma _2}
 \, \widebreve m_{y^{(0)}}^{(c,2,5)}
 & \\
 +e^{x_{51}' \beta}
 \left( \ddot  m_{y^{(0)}}^{(f,2,5)}-\widetilde  m_{y^{(0)}}^{(c,2,5)} \right)
 +e^{\gamma _1+x_{21}' \beta+ y_0 \gamma _2}
 \, \widetilde  m_{y^{(0)}}^{(c,2,5)}
  &=0 ,
\\
e^{x_{25}' \beta+ (y_0-1)\gamma _1+ \left(y_{-1}+y_0-1\right)\gamma _2}
\left(\ddot  m_{y^{(0)}}^{(e,2,5)}-\widebreve m_{y^{(0)}}^{(d,2,5)} \right)
+
e^{ y_0 \gamma _1+ y_{-1}\gamma _2 }
\, \widebreve m_{y^{(0)}}^{(d,2,5)}
& \\
+ e^{ x_{21}' \beta+ y_0 \gamma _2}
\left( \ddot  m_{y^{(0)}}^{(h,2,5)}-\widetilde  m_{y^{(0)}}^{(d,2,5)} \right)
+ e^{   \gamma _2+x_{51}' \beta}
\, \widetilde  m_{y^{(0)}}^{(d,2,5)}
&= 0 ,
\end{align*}
where the arguments $(y,x,\beta ,\gamma )$ on all the moment functions were omitted.
Using those relations we can, for example, express all the $\widetilde  m_{y^{(0)}}^{(\xi,2,5)} $, $\xi \in \{a,b,c,d\}$,
in terms of the other sixteen moment functions. Thus, by dropping all the $\widetilde  m_{y^{(0)}}^{(\xi,2,5)} $
we obtain one possible set of irreducible moment conditions for the AR(2) model at $T=5$.
The total number of linearly independent moment conditions available for $p=2$ and $T=5$
is therefore equal to $\ell=16$, in agreement with equation~\eqref{MomentCountGeneral}.

\subsubsection{Proof of Lemma~\ref{lemma:moments_T3}}
\label{app:Proof:lemma:moments_T3}

Analogous to the Proof of Lemma~\ref{lemma:moments_p1T3} one can verify by
direct calculation that for AR($p$) model with $p\in \{2,3\}$ we have
\begin{equation*}
\sum_{y\in \{0,1\}^{3}}\,p_{y^{(0)}}(y,x,\beta _{0},\gamma _{0},\alpha
)\;m_{y^{(0)}}(y,x,\beta _{0},\gamma _{0})=0,
\end{equation*}%
for all $x=(x_{1},x_{2},x_{2})$ and $y^{(0)}\in \big\{0_{p},(0,1_{p-1}),(1,0_{p-1}),\allowbreak 1_{p}\big\}$.
Thus,  the statement of the lemma is true for $p\in \{2,3\}$. Next, using the
definition of $p_{y^{(0)}}(y,x,\beta ,\gamma ,\alpha )$ in \eqref{modelARp},
one can verify that the model probabilities for $p\geq 4$ can be expressed
in terms of the probabilities for the AR(3) model as follows:
\begin{align*}
p_{(y_{\ast },0_{p-1})}(y,x,\beta ,\gamma ,\alpha )& =p_{(y_{\ast
},0,0)}(y,x,\beta ,(\gamma _{1},\gamma _{2},\gamma _{p}),\alpha ), \\
p_{(y_{\ast },1_{p-1})}(y,x,\beta ,\gamma ,\alpha )& =p_{(y_{\ast
},1,1)}\left( y,x,\beta ,(\gamma _{1},\gamma _{2},\gamma _{p}),\alpha
+\sum_{r=3}^{p-1}\gamma _{r}\right) ,
\end{align*}%
where $y_{\ast }\in \{0,1\}$ denotes the value of the first observed outcome
$y_{t_{0}}$ for time period $t_{0}=1-p$. Thus, since the lemma holds for $%
p=3 $ and for all values of $\alpha $, and since the moment functions for $%
p\geq 4$ are obtained from those for $p=3$ by replacing $\gamma _{3}$ by $%
\gamma _{p}$, we conclude that the lemma also holds for $p\geq 4$.
\hfill $\blacksquare$

\subsubsection{Results for AR($p$) model with $p\geq 3$, $T=4$ and $x_{3}=x_{4}$}

\label{sec:moments_p234T4}

In Section~\ref{sec:moments_p234T3} we obtained identification results for
the parameters $\beta $, $\gamma _{1}$, and $\gamma _{p}$ for AR($p$) models
with $p\geq 3$. Here, we explain how $\gamma _{2}$ and $\gamma _{p-2}$ can
also be identified if data for $T=4$ (i.e., $T_{\mathrm{obs}}=4+p$) time
periods are available. We consider moment conditions that are valid
conditional on $X_{3}=X_{4}$. With this, three valid moment functions are
available for the $p$-vectors of initial conditions $y^{(0)}=(y_{t_{0}},%
\ldots ,y_{0})$ that are constant over their last $p-2$ elements. No moment
conditions are available for other initial conditions. For $y^{(0)}=0_{p}$,
the first of these three valid moment functions is simply obtained by
shifting the corresponding moment function for $T=3$ in Section~\ref%
{sec:moments_p234T3} by one time period. For $x=(x_{1},x_{2},x_{3},x_{3})$,
we have\footnote{%
Obviously, we will not get any new identifying information from that moment
function, beyond what was already discussed in Section~\ref%
{sec:moments_p234T3}, but we list it here for completeness.}
\begin{equation*}
m_{0_{p}}^{(a,p,4)}(y,x,\beta ,\gamma )=\mathbbm{1}(y_{1}=0)%
\;m_{(0_{p})}((y_{2},y_{3},y_{4}),(x_{2},x_{3},x_{3}),\beta ,\gamma ).
\end{equation*}%
The second valid moment function is obtained from $m_{(0,0)}^{(d,2,4)}(y,x,%
\beta ,\gamma )$ for $p=2$ from Section~\ref{sec:moments_p2T4}. We have
\begin{equation*}
m_{0_{p}}^{(b,p,4)}(y,x,\beta ,\gamma
)=m_{(0,0)}^{(d,2,4)}(y,(x_{1},x_{2},x_{3},x_{3}),\beta ,(\gamma _{1},\gamma
_{2})),
\end{equation*}%
but there are some simplifications to this moment function here since $%
x_{3}=x_{4}$. None of the other moment functions from Section~\ref%
{sec:moments_p2T4} (and none of their linear combinations) can be lifted to
become a moment function for $p\geq 3$; only $m_{(0,0)}^{(d,2,4)}$. Finally, a
third valid moment function for $p\geq 3$, $T=4$, $x_{3}=x_{4}$, and $%
y^{(0)}=0_{p}$ is given by
\begin{equation*}
m_{0_{p}}^{(c,p,4)}(y,x,\beta ,\gamma )=\left\{
\begin{array}{@{\,}l@{\; \; }l}
-\exp (\gamma _{1}) & \text{if }y=(0,0,1,0), \\
-1 & \text{if }y=(0,0,1,1), \\
\exp (x_{32}^{\prime }\beta )\left[ \exp (\gamma _{1})-\exp (\gamma _{2})%
\right]  & \text{if }y=(0,1,0,0), \\
\exp (\gamma _{1}-\gamma _{2})-1 & \text{if }y=(0,1,0,1), \\
-1 & \text{if }(y_{1},y_{2},y_{3})=(0,1,1), \\
\exp (x_{31}^{\prime }\beta +\gamma _{2}) & \text{if }%
(y_{1},y_{2},y_{3})=(1,0,0), \\
\exp (x_{21}^{\prime }\beta +\gamma _{1}) & \text{if }%
(y_{1},y_{2},y_{3})=(1,0,1), \\
0 & \text{otherwise}.%
\end{array}%
\right.
\end{equation*}

\medskip

\begin{lemma}
\label{lemma:moments_p345T4} If the outcomes $Y=(Y_{1},Y_{2},Y_{3},Y_{4})$
are generated from model \eqref{modelARp} with $p\geq 3$, $T=4$ and true
parameters $\beta _{0}$ and $\gamma _{0}$, then we have for all $%
(x_{1},x_{2},x_{3})\in \mathbb{R}^{K\times 3}$, $\alpha \in \mathbb{R}$, and
$\xi \in \{a,b,c\}$ that
\begin{equation*}
\mathbb{E}\left[ m_{0_{p}}^{(\xi ,p,4)}(Y,X,\beta _{0},\gamma _{0})\,\big|%
\,Y^{(0)}=0_{p},\,X=(x_{1},x_{2},x_{3},x_{3}),\,A=\alpha \right] =0.
\end{equation*}
\end{lemma}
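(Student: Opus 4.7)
The plan is to reduce the AR($p$) problem to an AR(2) calculation, and then dispatch each of the three moment functions by invoking earlier results or a short direct verification. Two observations drive the reduction. First, for $Y^{(0)}=0_p$ and $T=4$, the model index $x_t'\beta+\sum_{\ell=1}^{p} y_{t-\ell}\,\gamma_\ell$ only picks up $\gamma_3$ (via $y_1\gamma_3$ at $t=4$), so the conditional distribution of $(Y_1,Y_2,Y_3)$ given $Y^{(0)}=0_p$, $X$, $A$ depends only on $(\beta,\gamma_1,\gamma_2)$, and the probability ratios for $Y_4$ coincide with those of an AR(2) model with initial condition $(0,0)$ except on the event $\{Y_1=1\}$, where $y_1\gamma_3$ modifies the index. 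Second, inspection of all three moment functions under $x_3=x_4$ shows that $m_{0_p}^{(a,p,4)}\equiv 0$ on $\{y_1=1\}$, and on $\{y_1=1\}$ both $m_{0_p}^{(b,p,4)}$ and $m_{0_p}^{(c,p,4)}$ are independent of $y_4$ (for $m_{0_p}^{(b,p,4)}$ the entry $y=(1,1,1,0)$ vanishes because $x_{43}=0$, leaving only entries constant in $y_4$; for $m_{0_p}^{(c,p,4)}$ the four $y_1=1$ entries are already written as independent of $y_4$). Summing over $y_4$ on $\{Y_1=1\}$ therefore removes the $\gamma_3$ dependence via $\Pr(Y_4=0\mid\cdot)+\Pr(Y_4=1\mid\cdot)=1$, while on $\{Y_1=0\}$ the parameters $\gamma_3,\ldots,\gamma_p$ do not enter at all. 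Consequently
\[
\mathbb{E}\!\left[m_{0_p}^{(\xi,p,4)}(Y,X,\beta,\gamma)\,\big|\,Y^{(0)}=0_p,X=(x_1,x_2,x_3,x_3),A=\alpha\right]
=\mathbb{E}\!\left[m_{0_p}^{(\xi,p,4)}(Y,X,\beta,\gamma)\,\big|\,Y^{(0)}=(0,0),X,A\right]\!,
\]
where on the right we use the AR(2) model with parameters $(\beta,\gamma_1,\gamma_2)$.

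Next I would dispatch the three cases at the AR(2) level. For $\xi=a$, conditional on $Y_1=0$ the shifted block $(Y_2,Y_3,Y_4)$ has initial condition $(Y_0,Y_1)=(0,0)=0_2$ and regressors $(x_2,x_3,x_3)$, so Lemma~\ref{lemma:moments_T3} (applied with $p=2$ and the time shift $t\mapsto t-1$) gives $\mathbb{E}[m_{(0_2)}((Y_2,Y_3,Y_4),(x_2,x_3,x_3),\beta,\gamma)\mid Y_1=0,\cdot]=0$; multiplying by $\mathbbm{1}(Y_1=0)$ and taking the outer expectation yields the claim. For $\xi=b$, the moment is by construction $m_{(0,0)}^{(d,2,4)}(y,(x_1,x_2,x_3,x_3),\beta,(\gamma_1,\gamma_2))$, so the AR(2) expectation is zero directly by Lemma~\ref{lemma:moments_p2T4}.

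For $\xi=c$ no previously proved identity applies, so the remaining step is the AR(2) verification
\[
\sum_{y\in\{0,1\}^4} p_{(0,0)}\!\big(y,(x_1,x_2,x_3,x_3),\beta_0,(\gamma_{0,1},\gamma_{0,2}),\alpha\big)\; m_{0_p}^{(c,p,4)}(y,x,\beta_0,\gamma_0)=0.
\]
This I would establish by direct calculation in the spirit of Appendix~\ref{app:moments_p2T4proof}: expand each of the seven non-zero entries of $m_{0_p}^{(c,p,4)}$, plug in the AR(2) logistic probabilities, collect terms, and check cancellation (alternatively, one may run the same Mathematica-style symbolic simplification used elsewhere in the paper). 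The main obstacle is purely computational—namely carrying out this cancellation cleanly—since there is no shortcut analogous to the lifting arguments that handled $\xi=a$ and $\xi=b$; all conceptual difficulty is absorbed by the reduction in the first paragraph.
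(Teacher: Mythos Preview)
Your proposal is correct. It differs from the paper's own argument in a useful way, so a short comparison is worth making.

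The paper's proof is simply ``analogous to the proof of Lemma~\ref{lemma:moments_T3}'': verify the identity by direct (symbolic) calculation at the smallest relevant lag order, here $p=3$, and then observe that for $y^{(0)}=0_p$ and $T=4$ the model probabilities involve only $\gamma_1,\gamma_2,\gamma_3$, so for $p\geq 4$ both the probabilities and the moment functions coincide with the $p=3$ case and the result carries over. In particular, all three moment functions $\xi\in\{a,b,c\}$ are dispatched by the same brute-force verification at $p=3$.

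Your route reduces one step further, to the AR(2) level, by exploiting an additional structural fact that the paper does not use: under $x_3=x_4$ the three moment functions either vanish on $\{y_1=1\}$ (case $a$) or are independent of $y_4$ there (cases $b,c$), so summing over $y_4$ eliminates the $\gamma_3$ term in the period-$4$ index and the whole expectation collapses to an AR(2) computation. The payoff is that you can then invoke Lemma~\ref{lemma:moments_T3} (time-shifted) for $\xi=a$ and Lemma~\ref{lemma:moments_p2T4} for $\xi=b$, leaving only $\xi=c$ for a direct check. This is a cleaner, more modular argument than the paper's, and it makes explicit why the particular moment functions $m_{(0,0)}^{(d,2,4)}$ and $m_{(0_p)}$ (and not the others from Sections~\ref{sec:moments_p2T4} and \ref{sec:moments_p234T3}) lift to $p\geq 3$: they are precisely the ones whose $\{y_1=1\}$ behavior decouples from $y_4$ when $x_3=x_4$. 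The paper's approach, by contrast, is quicker to state and needs no such case analysis, at the cost of three full symbolic verifications rather than one.
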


\begin{proof}
Analogous to the proof of Lemma~\ref{lemma:moments_T3}.
\end{proof}

Note that the moment functions $m_{0_{p}}^{(b,p,4)}$ and $m_{0_{p}}^{(c,p,4)}$
contain the parameter $\gamma _{2}$. Similarly, one can obtain valid moment
functions for all of the eight initial conditions $y^{(0)}\in
\{0_{p},(1,0_{p-1}),(1,1,0_{p-2}), \allowbreak (0,1,0_{p-2}),(0,0,1_{p-2}),(1,0,1_{p-2}),(0,1_{p-1}),1_{p}\}
$, and some of these also contain the parameter $\gamma _{p-1}$. However,
for $p\geq 5$ none of these moment functions contain the parameters $\gamma
_{3},\ldots ,\gamma _{p-2}$. One requires $T\geq 5$ to identify 
those parameters using moment conditions like
the ones in this paper. We will not discuss this here.

One example of a moment function that features the parameters $\gamma_{p-2}$
is given by
\begin{align*}
m^{(c,p,4)}_{(0,1, 0_{p-2})}(y,x,\beta,\gamma) &= \left\{
\begin{array}{@{\,}l@{\; \; }l}
- \exp(\gamma_1) & \text{if } y=(0,0,1,0), \\
- 1 & \text{if } y=(0,0,1,1) , \\
\exp(x_{32}^{\prime }\beta - \gamma_p) \left[ \exp(\gamma_1 ) -
\exp(\gamma_2 ) \right] & \text{if } y=(0,1,0,0), \\
\exp(\gamma_1 - \gamma_2) - 1 & \text{if } y=(0,1,0,1), \\
- 1 & \text{if } (y_1,y_2,y_3)=(0,1,1) , \\
\exp(x_{31}^{\prime }\beta + \gamma_2- \gamma_{p-1}) & \text{if }
(y_1,y_2,y_3)=(1,0,0), \\
\exp(x_{21}^{\prime }\beta + \gamma_1 -\gamma_{p-1} + \gamma_p) & \text{if }
(y_1,y_2,y_3)=(1,0,1), \\
0 & \text{otherwise},%
\end{array}
\right.
\end{align*}
which is a valid moment function for $p \geq 3$, $T=4$, $x_3=x_4$ and $%
y^{(0)} = (0,1, 0_{p-2})$.

Note that for $p=3$ we have $\gamma_2 = \gamma_{p-1}$, which leads to a
simplification in $m^{(c,p,4)}_{(0,1, 0_{p-2})}$, since $\gamma_2 =
\gamma_{p-1}$ drops out of the entry for $(y_1,y_2,y_3)=(1,0,0)$. All other
elements of this moment function then are either independent of $\gamma_2 =
\gamma_{p-1}$ or are strictly decreasing in $\gamma_2 = \gamma_{p-1}$. Thus,
for $p=3$ the moment function $m^{(c,p,4)}_{(0,1, 0_{p-2})}$ can be used to
identify $\gamma_2 = \gamma_{p-1}$ uniquely. The following theorem
formalizes this result.

\begin{theorem}
\label{th:id4} Let the outcomes $Y=(Y_{1},Y_{2},Y_{3},Y_{4})$ be generated
from \eqref{modelARp} with $p=3$, $T=4$, and true parameters $\beta _{0}$
and $\gamma _{0}$. For all $\epsilon >0$, assume that
\begin{equation*}
\mathrm{Pr}\left( Y^{(0)}=(0,1,0),\;\Vert X_{3}-X_{4}\Vert \leq \epsilon
\right) >0.
\end{equation*}%
Assume that the expectation in the following display is well-defined.
Then, we have
\begin{equation*}
\mathbb{E}\left[ m_{(0,1,0)}^{(c,p,4)}(Y,X,\beta _{0},(\gamma _{0,1},\gamma
_{2},\gamma _{0,3}))\,\Big|\,Y^{(0)}=(0,1,0),\;X_{3}=X_{4}\right] =0
\end{equation*}%
if and only if $\gamma _{2}=\gamma _{0,2}$. Thus, if the parameters $\beta $%
, $\gamma _{1}$, and $\gamma _{3}$ are point-identified, then $\gamma _{2}$
is also point-identified under the assumptions provided here.
\end{theorem}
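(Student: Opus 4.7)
The plan is to mirror the structure of the proof of Theorem~\ref{th:id1} and Theorem~\ref{th:id3}(ii), reducing the claim to (a) validity of the moment condition at the truth and (b) strict monotonicity of the conditional expectation in the single free parameter $\gamma_2$.

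First I would establish the ``only if'' direction at $\gamma_2=\gamma_{0,2}$. By the discussion immediately preceding the theorem, $m_{(0,1,0)}^{(c,p,4)}$ is constructed analogously to $m_{0_p}^{(c,p,4)}$ in Lemma~\ref{lemma:moments_p345T4} and is a valid conditional moment function for $p\geq 3$, $T=4$, $X_3=X_4$, and $Y^{(0)}=(0,1,0)$. Applied to $p=3$ with the true parameters $(\beta_0,\gamma_{0,1},\gamma_{0,2},\gamma_{0,3})$, this directly yields the zero expectation in the display.

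For the ``if'' direction, define
\begin{equation*}
\varphi(\gamma_2) := \mathbb{E}\left[m_{(0,1,0)}^{(c,p,4)}(Y,X,\beta_0,(\gamma_{0,1},\gamma_2,\gamma_{0,3}))\,\Big|\,Y^{(0)}=(0,1,0),\;X_3=X_4\right].
\end{equation*}
I would show $\varphi$ is strictly decreasing in $\gamma_2$ by differentiating under the expectation. Setting $p=3$ so that $\gamma_{p-1}=\gamma_2$, inspection of the entries of $m_{(0,1,0)}^{(c,p,4)}$ shows that, of the seven non-zero cases, the entries at $y=(0,0,1,0)$, $y=(0,0,1,1)$, $(y_1,y_2,y_3)=(0,1,1)$ and $(y_1,y_2,y_3)=(1,0,0)$ do not depend on $\gamma_2$, while the entries at $y=(0,1,0,0)$, $y=(0,1,0,1)$ and $(y_1,y_2,y_3)=(1,0,1)$ all have strictly negative derivative with respect to $\gamma_2$ (namely $-\exp(x_{32}'\beta_0-\gamma_{0,3}+\gamma_2)$, $-\exp(\gamma_{0,1}-\gamma_2)$ and $-\exp(x_{21}'\beta_0+\gamma_{0,1}-\gamma_2+\gamma_{0,3})$). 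Therefore
\begin{equation*}
\tfrac{\partial}{\partial \gamma_2}\varphi(\gamma_2) = -\mathbb{E}\Big[\,\sum_{y\in\mathcal{S}} c_y(X,\gamma_2)\,\mathrm{Pr}(Y=y\,|\,Y^{(0)}=(0,1,0),X,A)\,\Big|\,Y^{(0)}=(0,1,0),\,X_3=X_4\Big],
\end{equation*}
where $\mathcal{S}=\{(0,1,0,0),(0,1,0,1),(1,0,1,0),(1,0,1,1)\}$ and each $c_y(X,\gamma_2)>0$ almost surely. Since each conditional probability $\mathrm{Pr}(Y=y\,|\,Y^{(0)}=(0,1,0),X,A)$ is strictly positive for every $X\in\mathbb{R}^{K\times 4}$ and every $\alpha\in\mathbb{R}$, and since the positivity assumption $\mathrm{Pr}(Y^{(0)}=(0,1,0),\|X_3-X_4\|\leq\epsilon)>0$ for all $\epsilon>0$ guarantees the conditioning event in $\varphi$ has positive probability, the derivative is strictly negative. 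Hence $\varphi$ is strictly decreasing and has at most one root, which must be $\gamma_{0,2}$ by part (a).

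The main technical obstacle is purely bookkeeping: carefully isolating, within the computer-algebra-derived expression for $m_{(0,1,0)}^{(c,p,4)}$, the sign of each entry's derivative in $\gamma_2$ when specialized to $p=3$ (so that $\gamma_{p-1}$ collapses to $\gamma_2$ and some cancellations occur), and then handling the conditioning on the measure-zero event $\{X_3=X_4\}$ by a standard limiting argument using the $\epsilon$-neighborhoods $\{\|X_3-X_4\|\leq\epsilon\}$, exactly as implicitly used in Theorem~\ref{th:id3}. Conceptually nothing new is required beyond Lemma~\ref{lemma:INVERSION}-style monotonicity reasoning, now in a single variable.
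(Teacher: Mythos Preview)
Your proposal is correct and follows essentially the same approach as the paper: the paper's proof simply states that the conditional expectation is strictly decreasing in $\gamma_2$ and equals zero at $\gamma_{0,2}$, and you supply exactly this argument with the added detail of computing the sign of each entry's $\gamma_2$-derivative (in particular noting the key cancellation in the $(y_1,y_2,y_3)=(1,0,0)$ entry when $p=3$).
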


\begin{proof}
The result follows since $\mathbb{E}\left[ m_{(0,1,0)}^{(c,p,4)}(Y,X,\beta
_{0},(\gamma _{0,1},\gamma _{2},\gamma _{0,3}))\,\Big|\,Y^{(0)}=(0,1,0),%
\;X_{3}=X_{4}\right] $ is strictly decreasing in $\gamma _{2}$, and is equal
to zero at $\gamma _{2}=\gamma _{0,2}$.
\end{proof}

\bigskip

Thus, together with the results in Theorem~\ref{th:id3} we have provided
conditions under which all the parameters $\beta $, $\gamma $ of a panel
logit AR(3) model are point-identified.

\end{addmargin}

\end{document}